\newcommand{\ignore}[1]{}
\newtheorem{theorem}{Theorem}[section]
\newtheorem{lemma}[theorem]{Lemma}
\newtheorem{corollary}[theorem]{Corollary}
\newtheorem{definition}[theorem]{Definition} 
\newtheorem{remark}{Remark}[section]
\newtheorem{proposition}[theorem]{Proposition}
\newcommand{\blue}[1]{{\color{black}{#1}}}
\newcommand\ex{{\mathbb{E}}}
\newcommand{\beq}{\begin{equation}}
\newcommand{\eeq}{\end{equation}}
\newcounter{fooTH}
\newcounter{fooEQ}
\def\poly{{\rm poly}}
\providecommand{\keywords}[1]
{
  \small	
  \textbf{\textit{Keywords---}} #1
}
\begin{document}

\title{Efficiently list-edge coloring multigraphs asymptotically optimally}

\author{
Fotis Iliopoulos
\thanks{Research supported by NSF grant CCF-1514434 and the Onassis Foundation.} \\ 
University of California Berkeley \\
{\small fotis.iliopoulos@berkeley.edu}
 \\
 \and 
Alistair Sinclair
\thanks{Research supported by NSF grants CCF-1514434 and CCF-1815328.} \\ 
University of California Berkeley \\
{\small sinclair@berkeley.edu}
}

\date{\empty}

\maketitle

\begin{abstract}

We give polynomial time algorithms for the seminal results of Kahn~\cite{kahnChrom,kahnListChrom},
who showed that the Goldberg-Seymour and List-Coloring conjectures for (list-)edge coloring 
multigraphs hold asymptotically.
Kahn's arguments are based on the probabilistic method and are non-constructive. Our key insight is 
that we can combine sophisticated techniques due to Achlioptas, Iliopoulos and 
Kolmogorov~\cite{AIK} for the analysis of 
local search algorithms with correlation decay properties of the probability spaces on matchings used
by Kahn in order to construct efficient edge-coloring algorithms.

\end{abstract}

\keywords{edge-coloring, mulitgraphs, Goldberg-Seymour conjecture, list-edge-coloring conjecture}

\setcounter{page}{1}\maketitle

\newpage
\section{Introduction}\label{sec:intro}

In graph edge coloring one is given a (multi)graph $G=(V,E)$ and the goal is to find an assignment of one of $q$ colors to each edge $e\in E$ so that no pair of adjacent edges share the same color. The \emph{chromatic index}, $\chi_e(G)$, of $G$ is the smallest integer $q$ for which this is possible. In the more general \emph{list-edge coloring} problem, a  list of $q$ allowed colors is specified for each edge. A graph is $q$-list-edge colorable if it has a list-coloring no matter how the lists are assigned to  each edge. The \emph{list chromatic index}, $\chi^{\ell}_e(G) $, is the smallest $q$ for which $G$ is $q$-list-edge colorable.
 
Edge coloring is one of the most fundamental and well-studied coloring problems  with various applications in computer science (e.g., ~\cite{chen2011approximating,feige2002approximating, kahn1996asymptotically, kahnChrom,kahnListChrom, nishizeki19901, plantholt1999sublinear, sanders2008asymptotic,scheide200715,scheide2010graph,vizing1964estimate}).  To give just one representative example,  if edges represent data packets then an edge coloring with $q$ colors specifies a schedule for exchanging the packets directly and without node contention. In this paper we are interested in designing algorithms for efficiently edge coloring and list-edge coloring multigraphs.  To formally describe our results, we need some notation.

For a multigraph $G$, let $\mathcal{M}(G)$ denote the set of matchings of $G$. A \emph{fractional edge coloring} is a set  $\{M_1, \ldots, M_{\ell} \} $ of matchings and corresponding positive real weights $\{w_1, \ldots, w_{\ell} \}$, such that the sum of the weights of the matchings containing  each edge is one, i.e., $\forall e \in E$, $ \sum_{ M_i: e \in M_i } w_i =1$.  A fractional edge coloring is a  \emph{fractional edge $c$-coloring}  if $\sum_{ M \in \mathcal{M}(G) } w_M = c $. The \emph{fractional chromatic index} of $G$, denoted by $\chi_e^*(G)$, is the minimum $c$ such that $G$ has a fractional edge $c$-coloring.
 
Let $\Delta = \Delta(G)$ be the maximum degree of $G$ and define  $\Gamma:= \max_{H \subseteq V, |H| \ge 2 } \frac{|E(H)|  }{ \lfloor |H|/2  \rfloor}$, where $E(H)$ is the set of edges of the induced subgraph $H$. Both of these quantities are obvious lower bounds for the chromatic index and it is known~\cite{edmonds1965maximum} that $\chi_e^*(G) = \max (\Delta, \Gamma)  $. Furthermore,  Padberg and Rao~\cite{padberg1982odd} show that the fractional chromatic index of a multigraph, and indeed an optimal fractional edge coloring, can be computed in polynomial time. 

Goldberg and Seymour independently stated the now  famous conjecture  that every multigraph $G$ satisfies $\chi_e(G) \le \max \left( \Delta+1, \lceil \chi_e^*(G) \rceil \right)$. In a seminal paper~\cite{kahnChrom}, Kahn showed that the Goldberg-Seymour conjecture holds asymptotically:
\begin{theorem}[\cite{kahnChrom}]\label{GS_kahn}
The chromatic index of a multigraph~$G$ satisfies
$\chi_e(G) \le (1 +o(1)) \chi_e^*(G).$
\end{theorem}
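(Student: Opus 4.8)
The plan is to prove the essential case — for every $\epsilon>0$ there is a $\Delta_0(\epsilon)$ such that $\Delta(G)\ge\Delta_0(\epsilon)$ implies $\chi_e(G)\le(1+\epsilon)\chi_e^*(G)$ — and to reduce the stated $(1+o(1))$ bound to it. For the reduction: every $H\subseteq V$ with $|H|\ge2$ satisfies $2|E(H)|=\sum_{v\in H}\deg_H(v)\le|H|\Delta$, so $\frac{|E(H)|}{\lfloor|H|/2\rfloor}\le\frac32\Delta$ (the extremal case is $|H|=3$), whence $\chi_e^*(G)=\max(\Delta,\Gamma)\le\frac32\Delta$; equivalently $\Delta(G)\ge\frac23\chi_e^*(G)$, so $\chi_e^*(G)\to\infty$ forces $\Delta(G)\to\infty$ and the essential case gives the theorem. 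Henceforth fix $\epsilon\in(0,1)$ (if $\epsilon\ge1$ then $(1+\epsilon)\chi_e^*(G)\ge2\Delta(G)>\chi_e(G)$ already, by the trivial greedy bound $\chi_e\le2\Delta-1$) and set $q=\lceil(1+\epsilon)\chi_e^*(G)\rceil$; the goal is a proper $q$-edge-colouring.

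I would run the \emph{semi-random (``nibble'') method}, whose probabilistic engine is a distribution on matchings. By Edmonds' theorem every subgraph $G'\subseteq G$ has a fractional edge $\chi_e^*(G')$-colouring, and dividing its weights by $\chi_e^*(G')$ turns it into a probability measure on $\mathcal M(G')$ in which every edge lies in the sampled matching with probability exactly $1/\chi_e^*(G')$. The algorithm runs $O_\epsilon(1)$ rounds, maintaining a partial proper colouring with colours from $[q]$, the set $U$ of uncoloured edges, a palette $L(e)\subseteq[q]$ of colours missing from the coloured edges at each $e\in U$, and the invariants $|L(e)|\ge\ell_i$ and $\Delta(U)\le d_i$ (with additional degree and odd-set bookkeeping inside $U$). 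In round $i$ it samples, independently over colours $\gamma$, a matching $M_\gamma$ from a fractional-edge-colouring distribution of $G_\gamma:=\{e\in U:\gamma\in L(e)\}$, whose fractional chromatic index $\max(\Delta(G_\gamma),\Gamma(G_\gamma))$ the invariants pin near $d_i$, so that $\Pr[e\in M_\gamma]\approx1/d_i$; it colours $e$ by $\gamma$ exactly when $e\in M_\gamma$ and $e\notin M_{\gamma'}$ for every $\gamma'\ne\gamma$. Since each $M_\gamma$ is a matching this is automatically a proper extension, and a short computation gives that an edge stays uncoloured with probability $\approx e^{-\ell_i/d_i}$ while a colour $\gamma$ survives in a palette $L(e)$ with probability $\approx e^{-2e^{-\ell_i/d_i}}$. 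Updating $\ell_i,d_i$ to these expected values, the ratio $r_i:=\ell_i/d_i$ therefore multiplies by $e^{r_i-2e^{-r_i}}\ge e^{1-2/e}>1$ each round, starting from $r_0=q/\Delta\ge1+\epsilon$; after $O_\epsilon(1)$ rounds $r_i\ge2$, so every $e\in U$ has $|L(e)|\ge\ell\ge2d\ge2\Delta(U)$ and hence strictly fewer uncoloured neighbours than colours in its palette, and a single greedy pass completes the colouring \emph{without introducing any colour outside $[q]$}. This yields $\chi_e(G)\le q\le(1+2\epsilon)\chi_e^*(G)$ whenever $\chi_e^*(G)\ge1/\epsilon$.

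The entire difficulty is in carrying out a single round. One needs that, with positive probability, \emph{simultaneously} every vertex $v$ has close to its expected number of newly coloured incident edges and every $e\in U$ retains close to its expected palette size — a Lov\'asz Local Lemma statement whose ``bad'' events are exactly these per-vertex and per-edge deviations, each determined by the matchings $M_\gamma$ near the vertex or edge in question. \textbf{The main obstacle}, and the core of Kahn's proof, is that the $M_\gamma$ are drawn from distributions on matchings rather than product spaces, so the bad events are neither genuinely local nor governed by off-the-shelf concentration inequalities, and no dependency digraph is apparent. Kahn's remedy is to represent the relevant fractional colourings, to sufficient accuracy, by \emph{hard-core} distributions $\mu_\lambda(M)\propto\prod_{e\in M}\lambda_e$ on matchings (handling the odd-set facets of the matching polytope via a small perturbation), and to establish that such distributions exhibit \emph{decay of correlations}: the event that a given edge lies in the matching, and more generally the matching's restriction to a bounded ball around it, is affected only negligibly by conditioning on the matching outside that ball. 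This approximate local independence delivers both the concentration needed above (by a bounded-differences/martingale argument as the $M_\gamma$ are progressively revealed) and the lopsided-dependency structure required to apply the lopsided LLL, closing the round; iterating $O_\epsilon(1)$ rounds proves the theorem. As the abstract indicates, the contribution of this paper is to replace the existential LLL step by an efficient local-search / approximate-resampling-oracle procedure within the framework of~\cite{AIK}, driven by precisely the same correlation decay.
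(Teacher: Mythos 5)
Your high-level ingredients---hard-core distributions on matchings, the Kahn--Kayll correlation decay, and the lopsided LLL---are indeed the engine of Kahn's argument, but the route you take is not the one used for this theorem, and its quantitative core fails as written. The proof followed here (from~\cite{kahnChrom}; see Section~\ref{main_1} and Lemma~\ref{main_lemma}) runs no palette-based nibble at all. It repeatedly deletes a batch of $N=\lfloor\chi_e^*(G)^{3/4}\rfloor$ matchings, each drawn independently from a \emph{single} hard-core distribution with uniform marginals $\frac{1-\delta}{\chi_e^*(G)}$, and applies the lopsided LLL---with bad events indexed by vertices (degree too large in the residual multigraph) and by small odd connected subgraphs (Edmonds' matching constraints)---to conclude that the residual multigraph has fractional chromatic index at most $\chi_e^*(G)-(1+\epsilon)^{-1}N$. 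One fresh colour per deleted matching, recursion, and a greedy base case with $2\Delta-1$ colours finish the proof. There are no palettes, no activation variables, and no per-round marginal bookkeeping. The scheme you describe is essentially the proof of the much harder list-colouring Theorem~\ref{list_kahn}; it is not wrong in spirit to use it here, but you would owe all of its extra machinery, and you would still need the odd-set control that your sketch relegates to a parenthesis (controlling $\chi_e^*(G_\gamma)$ for every colour class, not just $\Delta(G_\gamma)$).

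The concrete gap is the step that drives your iteration. With your colouring rule ($e$ gets $\gamma$ iff $e\in M_\gamma$ and $e\notin M_{\gamma'}$ for all $\gamma'\neq\gamma$), the probability that $e$ remains uncoloured is $1-\ell_i d_i^{-1}(1-d_i^{-1})^{\ell_i-1}\approx 1-r_i\mathrm{e}^{-r_i}$, not $\mathrm{e}^{-r_i}$. Feeding this into your own update gives $r_{i+1}\approx r_i\,\mathrm{e}^{-2\mathrm{e}^{-r_i}}/(1-r_i\mathrm{e}^{-r_i})$, which at $r_i=1$ is roughly $0.76\,r_i$: the ratio of palette size to uncoloured degree \emph{shrinks}, and the claimed escape to $r_i\ge 2$ never occurs. (Your multiplicative gain $\mathrm{e}^{r_i-2\mathrm{e}^{-r_i}}>1$ tacitly assumes the ``at least one matching'' colouring rule, which is inconsistent with the palette-survival estimate $\mathrm{e}^{-2\mathrm{e}^{-r_i}}$ you use in the same line.) Kahn's actual iteration sidesteps this entirely: each matched edge is activated only with probability $\alpha=1/\log\Delta$ so that every round is a genuine nibble, the same activity vectors are reused and the invariant $\sum_i\Pr(e\in M_i)\approx 1$ is merely \emph{maintained} rather than improved, and termination comes from the uncoloured degree decaying geometrically below $C/K$, after which greedy completes the colouring. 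To make your proposal sound you must either adopt that bookkeeping or switch to the batch-deletion argument of Section~\ref{main_1}.
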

\noindent (Here $o(1)$ denotes a term that tends to zero as $\chi_e(G)\to\infty$.)
Later Kahn proved  the analogous result for list-edge coloring~\cite{kahnListChrom}, establishing that the List Coloring Conjecture, which asserts that  $\chi_e^{\ell}(G)  = \chi_e(G)$ for any multigraph $G$, also holds asymptotically:
\begin{theorem}[\cite{kahnListChrom}]\label{list_kahn}
The list chromatic index of a multigraph~$G$ satisfies
$\chi_e^{\ell}(G) \le (1 +o(1)) \chi_e^*(G).$
\end{theorem}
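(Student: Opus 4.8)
The plan is to follow Kahn's semi-random (``nibble'') method, which builds the list coloring in a bounded number of rounds, each coloring a small fraction of the still-uncolored edges, and whose analysis is driven by a lopsided version of the Lov\'asz Local Lemma together with exponential decay of correlations for hard-core distributions on matchings. Concretely, since the bound is trivial when $D := \chi_e^*(G)$ is bounded, we may assume $D$ (equivalently $\Delta$, since $\Gamma \le \tfrac32\Delta$) is large, and we aim to list-edge color with $q := \lceil (1+\varepsilon) D\rceil$ colors, where $\varepsilon = \varepsilon(D) \to 0$ slowly with $D$; the $o(1)$ of the statement is this $\varepsilon$. Fix arbitrary lists $L(e)$ with $|L(e)| = q$. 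By Edmonds' theorem and Padberg--Rao we have in hand an optimal fractional edge coloring $\{(M_i,w_i)\}$ of total weight $D$; since $\sum_{i:e\in M_i} w_i = 1$ for every $e$, the point $x$ with $x_e = 1/D$ is a convex combination of the matching indicators $\mathbf{1}_{M_i}$ and hence lies in the matching polytope of $G$, so the slack $q>D$ gives hard-core distributions on matchings of $G$ with edge-marginals close to $1/D$. This is the only global structure of $G$ the argument uses, and it is what lets $\Gamma$ be handled on the same footing as $\Delta$.

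\textbf{One nibble.} Maintain, for each still-uncolored edge $e$, a list $L_i(e)$ of colors not yet used on an edge adjacent to $e$, and for each $c\in L_i(e)$ the count $t_i(e,c)$ of uncolored edges adjacent to $e$ that still have $c$ available; initially $|L_0(e)| = q$ and $t_0(e,c) \le 2\Delta$. In round $i$: for each color $c$, form the subgraph $G_{i,c}$ of uncolored edges permitting $c$, choose activities so that the hard-core matching distribution $\mu_{i,c}$ on $G_{i,c}$ has edge-marginals approximately $1/D$ (scaled by the current density), and sample $M_{i,c}\sim\mu_{i,c}$ independently over $c$. Tentatively assign $e$ the color $c$ whenever $e\in M_{i,c}$; because each $M_{i,c}$ is a matching, no two adjacent edges receive the same tentative color, so it only remains to have each edge keep one of its tentative colors and to update $L_i$ and $t_i$ for the next round.

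\textbf{Analysis.} Routine expectation computations (as in Kahn's proof, or the Molloy--Reed ``naive nibble'' framework) show that $|L_{i+1}(e)|$ and $t_{i+1}(e,c)$ shrink by essentially the same factor, so the invariant $|L_i(e)| \ge (1+\Omega(\varepsilon))\max_c t_i(e,c)$ is preserved in expectation; the point is to make this hold for \emph{all} edges and colors at once. The bad events are the deviations of $|L_{i+1}(e)|$ and $t_{i+1}(e,c)$ from their expectations, and one applies the lopsided Local Lemma to avoid all of them simultaneously. Two ingredients make this work: first, exponential decay of correlations for hard-core distributions on matchings lets us couple each $M_{i,c}$ with a version determined by the sampling inside a bounded-radius ball around $e$, up to negligible error, so that each bad event depends on only polynomially many others and the Local Lemma condition holds with room to spare; second, the events are only negatively dependent, through the negative-dependence properties of the matching measures, which is exactly why the \emph{lopsided} form is needed. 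After $r = r(\varepsilon)$ rounds the lists dominate the remaining degrees by a constant factor, and a final greedy pass (or one more Local Lemma application) completes the coloring; letting $\varepsilon\to 0$ with $D$ yields the $o(1)$.

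\textbf{Main obstacle.} The crux is the correlation-decay step: proving that the hard-core matching measures $\mu_{i,c}$ decay at a rate fast enough to beat the growth of neighborhoods in $G$, and then pushing this through sharp concentration inequalities with a well-chosen order of exposing the randomness, so that the lopsided Local Lemma hypotheses are genuinely satisfied in every round even as the lists shrink. Additional care is needed to control the accumulation of the per-round errors across the $r(\varepsilon)$ rounds and --- this being the list version rather than Theorem~\ref{GS_kahn} --- to make the concentration estimates uniform over the adversarially chosen, unevenly shrinking lists.
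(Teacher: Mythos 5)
Your overall strategy is indeed the one behind this theorem (and the one the paper sketches in Section~5.1): iterate a semi-random step in which, for each color $i$, a matching of the subgraph $G_i$ of uncolored edges admitting $i$ is drawn from a hard-core distribution, and drive the analysis by the Lopsided LLL using the Kahn--Kayll correlation-decay property. However, two of your specific choices create genuine gaps. First, the invariant you propagate, $|L_i(e)| \ge (1+\Omega(\varepsilon))\max_c t_i(e,c)$, is already false at round $0$ for multigraphs: $\max_c t_0(e,c)$ can be as large as $2\Delta-2$ while $|L_0(e)| = q \approx (1+\varepsilon)\chi_e^*(G)$, and $\chi_e^*(G)$ may equal $\Delta$; moreover a list-versus-color-degree invariant of this Molloy--Reed type is not the quantity the hard-core machinery controls, and the greedy finish needs list size to exceed the \emph{uncolored degree} of $e$, not $\max_c t_i(e,c)$. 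Kahn's proof instead maintains the invariant that for every uncolored edge the sum over colors of the probabilities $\Pr(e\in M_i)$ stays close to $1$ (keeping the \emph{same} activity vectors $\lambda_i$ throughout and only restricting to the current uncolored subgraphs), defines the bad events $A_e$ as deviations of these marginal sums (computed for localized matchings $Z_i$ so that distant conditioning is harmless) and $A_v$ as too few edges colored at $v$, and deduces both the preservation of the invariant and the eventual domination of uncolored degrees by list sizes from avoiding these events.

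Second, your one-round procedure colors every edge that lands in some matching; since the marginal sums are $\approx 1$, this colors a constant fraction (about $1-\mathrm{e}^{-1}$) of the uncolored edges per round, which is not a nibble and breaks the ``small per-round drift'' accumulation you appeal to. Kahn's iteration activates each matched edge independently with probability $\alpha = 1/\log\Delta$ (and adds equalizing coin flips so that every edge of $G_i$ is removed with probability exactly $\alpha$); it is this small bite that lets one show, via concentration plus the Lopsided LLL, that the marginal sums move by at most $O((\log\Delta)^{-4})$ per round and that after $\log_{1/(1-\alpha)}3K$ rounds the uncolored degree is below $\Delta/2K$ while at least $C/K$ colors remain available at each edge, so the greedy completion works. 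Both gaps are repairable by following Kahn's actual scheme, but as written your invariant and your round structure would not go through.
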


The proofs of Kahn use the probabilistic method and are not constructive. The main contribution of this paper is to provide polynomial time algorithms for the above results, as follows:
\begin{theorem}\label{main}
There exists a randomized algorithm that, given a multigraph $G$ on $n$ vertices, constructs  a  $(1 +o(1)) \chi_e^*(G)$-edge coloring of $G$ in expected polynomial time.
\end{theorem}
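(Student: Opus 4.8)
\noindent\textbf{Proof strategy for Theorem~\ref{main}.}
The plan is to follow Kahn's probabilistic proof of Theorem~\ref{GS_kahn} step by step, replacing each non-constructive appeal to the Lov\'{a}sz Local Lemma by an efficient local-search algorithm supplied by the framework of Achlioptas, Iliopoulos and Kolmogorov~\cite{AIK}. Kahn's proof is iterative: having fixed $\delta>0$, one shows that whenever $\Delta(G)$ exceeds a constant $\Delta_0(\delta)$, a single randomized \emph{phase} properly colours most edges of $G$ in such a way that the still-uncoloured subgraph, equipped with the colour lists surviving on its edges, is again a list-edge-colouring instance in which every edge retains a list of size at least roughly $(1+\delta)$ times its degree in that subgraph, while the maximum degree has shrunk by a constant factor. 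After $O(\log\Delta)$ phases the maximum degree falls below $\Delta_0(\delta)$ and the residual instance is finished by a direct method. Hence it suffices to (i) implement a single phase constructively and (ii) verify that the whole recursion, together with the amplification of the success probability, runs in expected polynomial time.

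A phase is governed by a random experiment on matchings. Using the identity $\chi_e^*=\max(\Delta,\Gamma)$ and the polynomial-time algorithm of Padberg and Rao~\cite{padberg1982odd}, one computes a fractional edge colouring of $G$ and extracts from it a hard-core probability measure $\nu$ on $\mathcal{M}(G)$ whose edge-marginals are small and prescribed; one then draws independent matchings $M_1,\dots,M_q\sim\nu$, one per colour (with the obvious modification when edges carry different lists), tentatively colours each edge by one of the colours whose matching contains it (leaving it uncoloured if there is none), and records which colours remain available on each uncoloured edge. For each edge $e$ this yields a ``bad event'' $f_e$: the event that, after the phase, the list surviving on $e$ is too small relative to $e$'s new degree. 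To make the phase algorithmic I would recast it in the language of \cite{AIK}: the state space is (essentially) the product of $q$ copies of $(\mathcal{M}(G),\nu)$, the flaws are the events $f_e$, and for each $f_e$ one exhibits a \emph{regeneration oracle} --- a randomized local move that, started from a state in $f_e$, resamples the matchings in a bounded neighbourhood of $e$ from (an approximation of) their correct conditional law, thereby leaving $f_e$ with good probability while creating few new flaws. The reason \cite{AIK} is the right vehicle is that the hard-core measure on matchings has \emph{decay of correlations with distance}; this is exactly what makes the associated potential-causality digraph sparse enough for the lopsided-LLL-type convergence condition of \cite{AIK} to hold --- even though $\nu$ is very far from a product measure --- and it is also what makes the regeneration oracles local.

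I expect the bulk of the work, and the main obstacle, to be showing that Kahn's argument genuinely fits this template: one must check that the bad events may be taken \emph{atomic} in the required sense, define the potential-causality digraph so that it reflects correlation decay rather than the far denser adjacency structure of the colouring experiment, bound the flaw ``charges'' (the probabilities of the $f_e$ under $\nu$, estimated through correlation decay), and then confirm that these charges and this digraph satisfy the convergence criterion with a margin to spare. A second, more technical but conceptually routine, point is that the regeneration oracles cannot be implemented exactly --- one cannot sample exactly from the conditional hard-core measure on matchings --- so they must be implemented approximately, e.g.\ via rapidly mixing Markov chains for the monomer--dimer model, and the errors accumulated over an entire run must be controlled; absorbing such errors is precisely the purpose of the ``approximate resampling oracles'' strengthening of the \cite{AIK} theorem, whose quantitative hypotheses must be checked. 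Putting these together gives, for each phase, a local-search algorithm that reaches a flaw-free state --- a valid partial colouring with the required degree drop --- in expected polynomially many steps; iterating over the $O(\log\Delta)$ phases, truncating any over-long run, and taking a union bound over the polynomially many random choices yields an algorithm that runs in expected polynomial time and succeeds with probability at least $1-n^{-c}$.
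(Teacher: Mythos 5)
Your core insight does match the paper's: treat the AIK criterion of~\cite{AIK} as the algorithmic counterpart of the lopsided LLL, exploit the correlation decay of the Kahn--Kayll hard-core distributions (Theorem~\ref{hardcore_dist}) so that resampling matchings locally around a flaw incurs only small distortion, and implement the sampling approximately via Jerrum--Sinclair~\cite{Alistair1}. However, the iteration you describe is not Kahn's proof of Theorem~\ref{GS_kahn}: tentative colorings, activations, surviving lists and $O(\log\Delta)$ degree-halving phases are the structure of the proof of Theorem~\ref{list_kahn}, which the paper deploys only for Theorem~\ref{main_list}. For Theorem~\ref{main} the paper uses a much simpler decomposition (Lemma~\ref{main_lemma}): while $\chi_e^*(G)\ge\chi_0$, extract $N=\lfloor \chi_e^*(G)^{3/4}\rfloor$ matchings---each used as a full color class---whose deletion lowers the fractional chromatic index by about $(1+\epsilon)^{-1}N$, with flaws given exactly by Edmonds' conditions (vertex degrees and small odd connected subgraphs), and finish greedily with $2\Delta-1$ colors once $\chi_e^*$ is below a constant. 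That route's flaws are exactly checkable (violated odd-set constraints are located with the Padberg--Rao separation oracle~\cite{padberg1982odd}) but super-polynomially many, which the paper handles by proving the resampling algorithm commutative and invoking Theorem~\ref{clique_decomp}; your route trades this for polynomially many flaws whose very definition involves marginals of hard-core distributions.

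On your chosen route there are two concrete gaps. First, the per-edge bad event you name---``the surviving list is too small relative to the new degree''---is not the event Kahn's iteration controls and is not a self-sustaining invariant: what must be preserved from one phase to the next is that, for every uncolored edge, the sum over colors of the probabilities that the edge lies in that color's random matching changes by at most about $(\log\Delta)^{-4}$ (cf.~\eqref{edge_guarantee}); large lists relative to degree is only the terminal condition enabling the greedy finish. Moreover, to make these events compatible with conditioning on distant outcomes one must define them through matchings truncated to a bounded neighborhood of the edge (the $Z_i$ of Lemma~\ref{key_list_kahn_lemma}), which is precisely where correlation decay enters. Second, you do not say how flaws are detected: deciding whether such an event holds requires marginals of hard-core matching distributions, which can only be estimated; the paper must relax the flaws and argue that approximate estimation suffices in both directions. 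Relatedly, a charge in the AIK framework is not the probability of the flaw but that probability multiplied by the distortion of the resampling step; correlation decay is what bounds the distortion, and identifying the two skips the main computation (the analogue of Lemmas~\ref{termination} and~\ref{termination_list}). None of this is unfixable---carried out in full, your plan essentially reproduces the paper's proof of Theorem~\ref{main_list}, which subsumes Theorem~\ref{main}---but as a proof of Theorem~\ref{main} it takes the substantially harder of the two available paths and leaves its hardest steps unspecified.
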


\begin{theorem}\label{main_list}
There exists a randomized  algorithm that, given a multigraph $G$ on $n$ vertices and an
arbitrary list of $(1 +o(1)) \chi_e^*(G)$ colors for each edge, constructs a valid list-edge coloring of $G$   in expected polynomial time.
\end{theorem}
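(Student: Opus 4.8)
The plan is to make Kahn's semi-random (``nibble'') argument constructive by realizing each of its iterations as a stochastic local search process and analyzing that process with the framework of Achlioptas, Iliopoulos and Kolmogorov~\cite{AIK}. Since a $q$-list-edge-coloring algorithm specializes (take all lists equal to $\{1,\dots,q\}$) to a $q$-edge-coloring algorithm, Theorem~\ref{main_list} subsumes Theorem~\ref{main}, so I describe the list version. First compute, in polynomial time via Padberg--Rao~\cite{padberg1982odd}, an optimal fractional edge coloring $\{(M_i,w_i)\}$ of $G$; after normalization this induces, for every color $\kappa$, a probability distribution $\mu_\kappa$ over matchings of the subgraph $G_\kappa\subseteq G$ consisting of those edges whose lists contain $\kappa$. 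These are (reweighted) hard-core / monomer--dimer distributions on matchings, and the two facts about them that we need are (i) they are efficiently samplable --- indeed sampling $M_i$ with probability proportional to $w_i$ is immediate from the explicit decomposition --- and (ii) they exhibit exponential decay of correlations with graph distance, which is precisely the quantitative input Kahn exploits in~\cite{kahnChrom,kahnListChrom}.

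One iteration runs local search over the product space $\Omega=\prod_\kappa \mathcal{M}(G_\kappa)$ with product measure $\prod_\kappa\mu_\kappa$: the state is one matching per color, and each edge is colored by the unique color $\kappa$ (if any) with $e\in M_\kappa$, leaving edges in two or more selected matchings uncolored; because each $M_\kappa$ is a matching of $G_\kappa$, the resulting partial coloring is automatically proper and list-valid. The ``flaws'' are the local bad events whose simultaneous absence yields the pseudo-random properties Kahn needs for the iteration to make progress --- every vertex of the leftover multigraph having its expected reduced degree up to lower-order fluctuations, every leftover edge retaining enough colors on its list, and the analogous statements for the auxiliary quantities tracked in~\cite{kahnListChrom} --- and addressing a flaw $f$ means resampling, each from its own $\mu_\kappa$, the matchings on which $f$ depends. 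The work is then to verify the hypotheses of the main theorem of~\cite{AIK}: that this resampling is an \emph{approximate resampling oracle} for $\prod_\kappa\mu_\kappa$ with small enough error (here the monomer--dimer sampler's accuracy and the regeneration structure enter); that the \emph{potential causality} digraph recording which flaws a resampling can introduce is local (it is, since $\mu_\kappa$ lives on $G_\kappa$ and resampling one matching perturbs only flaws at bounded distance); and that the \emph{flaw charges} times the sizes of the causality neighborhoods satisfy the \emph{General Lopsided LLL} condition. The charges are small because correlation decay forces each flaw to have small probability under $\prod_\kappa\mu_\kappa$ and yields concentration of degrees and list sizes with the exponential slack Kahn obtains, while the neighborhoods have size polynomial in the current maximum degree, comfortably absorbed by that slack. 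With these checks done, \cite{AIK} gives that the iteration reaches a flawless state --- a partial proper list-coloring whose leftover multigraph has all the desired pseudo-random properties --- in expected time polynomial in $n$.

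It then remains to iterate exactly as Kahn does: each stage spends a fresh block of $o(\chi_e^*(G))$ colors, drives the leftover maximum degree toward its concentration threshold (a constant-factor reduction per stage in the relevant range), and --- since only few colors are used near any edge --- keeps every leftover list large compared to the leftover degree. After $O(\log\log\Delta)$ stages the leftover multigraph has maximum degree small enough (polylogarithmic, say) to be finished greedily with a further $o(\chi_e^*(G))$ colors, each edge still having more colors on its list than its current degree; summing over all stages gives $q=(1+o(1))\chi_e^*(G)$. For the success probability, the AIK analysis gives a tail bound on each iteration's running time, so we abort and restart any stage that runs too long; each stage then succeeds with probability $1-n^{-\omega(1)}$, and a union bound over the $O(\log\log\Delta)$ stages gives overall success probability $1-n^{-c}$ for any prescribed $c$, with the restarts affecting only the expected, not the worst-case, running time.

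I expect the crux to be the verification in the second paragraph: quantitatively matching the algorithmic requirements of~\cite{AIK} --- an approximate resampling oracle with controlled error, and point-to-set correlation bounds on the matching measures $\mu_\kappa$ --- to what is genuinely available for monomer--dimer distributions, and doing so uniformly as $\chi_e^*(G)\to\infty$ so that the $o(1)$ terms stay honest. In particular, Kahn's argument needs correlation decay only in the form of moment/variance bounds, whereas the local-search framework needs the sharper form of control on point-to-set couplings along the resampling trajectory; one must also ensure the effective state space explored by the algorithm is small enough that ``terminates'' upgrades to ``terminates in polynomial time''. The list-coloring setting adds essentially no further difficulty beyond the extra bookkeeping for leftover list sizes, since the required correlation-decay estimate for matchings of the subgraphs $G_\kappa$ is exactly the engine of~\cite{kahnListChrom}.
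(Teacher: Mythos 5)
Your overall strategy---realize each iteration of Kahn's nibble as a local search over tuples of matchings and analyze it with the framework of~\cite{AIK}---is indeed the paper's approach, but two of your concrete steps are wrong in ways that would sink the argument. First, the distributions $\mu_\kappa$ do \emph{not} come from the Padberg--Rao decomposition: picking a matching $M_i$ with probability proportional to $w_i$ gives a distribution supported on polynomially many matchings that is not hard-core and has none of the correlation-decay properties you need. The correct objects are the hard-core distributions with uniform marginals and bounded activities whose \emph{existence} is guaranteed by Edmonds/Lee and Kahn--Kayll (Theorems~\ref{Alistair_hard_core} and~\ref{hardcore_dist}); making them algorithmic is itself nontrivial and requires approximately computing the activity vector via the max-entropy machinery of Singh--Vishnoi combined with the Jerrum--Sinclair counting oracle (Lemma~\ref{computing_the_dist}), and then sampling via the Jerrum--Sinclair chain (Theorem~\ref{sampling_matchings}). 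So ``sampling is immediate from the explicit decomposition'' is false, and Padberg--Rao is used only to compute $\chi_e^*$ and to detect violated matching constraints.

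Second, and more fundamentally, your resampling step---redraw from $\mu_\kappa$ every matching on which the flaw depends---destroys locality rather than exploiting it. A degree- or list-type flaw depends on \emph{all} $N$ matchings, and resampling any one matching in its entirety changes edges throughout $G_\kappa$, so the causality digraph would be essentially complete and the LLL condition unverifiable; your assertion that ``resampling one matching perturbs only flaws at bounded distance'' holds only if you resample \emph{locally}. The paper's key move is to resample, for each color, only the restriction of $M_\kappa$ to a ball $S_{<t+1}(\cdot)$ around the flaw, conditioned on the matching outside the ball (procedure {\sc Resample}/{\sc Fix}). This makes the causality graph local but the oracle is no longer exact; the distortion is then controlled by showing $\gamma_f = \max_\tau \mu(f \mid Q(t,\sigma)=Q(t,\tau))$ and invoking the Kahn--Kayll decay of correlations (Lemmas~\ref{termination} and~\ref{termination_list}). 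You attribute the ``approximate'' in the resampling oracle to sampler inaccuracy, which misses this central point. Smaller but real further gaps: your iteration (color an edge iff it lies in a unique $M_\kappa$, constant-factor degree drop per stage, $O(\log\log\Delta)$ stages) is not Kahn's and is unproven---the actual scheme activates each matching edge independently with probability $\alpha=1/\log\Delta$, uses equalizing coin flips, and runs $\log_{1/(1-\alpha)}3K=\Theta(\log\Delta)$ iterations; and you do not address that the edge flaws involve marginals of hard-core distributions that can only be estimated, which forces the relaxed flaw definitions $f_e\supseteq A_e$ and the accompanying false-positive/false-negative analysis.
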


Clearly, Theorem~\ref{main_list} subsumes Theorem~\ref{main}. Moreover, in a very recent breakthrough~\cite{chen2019proof},  Chen, Jing and Zang proved the (non-asymptotic) Goldberg-Seymour conjecture without exploiting the arguments of Kahn.
Even before this work, the results of Sanders and Steurer~\cite{sanders2008asymptotic}  and Scheide~\cite{scheide2010graph} already give deterministic polynomial time algorithms for  edge coloring multigraphs asymptotically optimally, again without exploiting the arguments of Kahn. Nonetheless, we choose to present the proof of Theorem~\ref{main} for three reasons. First and most importantly,    its proof is  significantly easier than that of Theorem~\ref{main_list}, while it   contains many of  the key ideas required for proving Theorem~\ref{main_list}.   Second, our algorithms and techniques are very different from those of~\cite{chen2019proof,sanders2008asymptotic,scheide2010graph}. Finally, we show that the algorithm of Theorem~\ref{main} is \emph{commutative}, a notion introduced by Kolmogorov~\cite{Kolmofocs}. This fact may be of independent interest  as we discuss in Remark~\ref{comm_remark} in Section~\ref{AlgorithmicLLLCriterion}.

To the best of our knowledge, Theorem~\ref{main_list} is the first result  to give an asymptotically optimal polynomial time algorithm for list-edge coloring multigraphs.

\subsection{Technical Overview}

The proofs of Theorems~\ref{GS_kahn} and~\ref{list_kahn}  are based on a very sophisticated variation of what is known as the {\it semi-random method\/} (also known as the ``naive coloring procedure"), which is the main technical tool behind some of the strongest graph coloring results, e.g.,~\cite{JO,kahn1996asymptotically,kim1995brooks,molloy1998bound}. The idea is to gradually color the graph in iterations, until we reach a point where we can finish the coloring using a  greedy algorithm. In its most basic form, each iteration consists of the following  simple procedure:
assign to each edge  a color chosen uniformly at random; then uncolor any edge which receives the same color as one of its neighbors. Using the  Lov\'{a}sz Local Lemma (LLL)~\cite{LLL} and concentration inequalities, one typically shows that, with positive probability,  the resulting partial proper coloring has useful  properties that allow for the continuation of the argument in the next iteration. For a nice exposition of both the method and the proofs of Theorems~\ref{GS_kahn} and~\ref{list_kahn}, the reader is referred to~\cite{mike_book}.

The key new  ingredient in Kahn's arguments is the method of assigning colors to edges. For each color $c$,  we choose a matching~$M_c$ from some \emph{hard-core} distribution on $\mathcal{M}(G)$ and assign the color $c$ to the edges in~$M_c$.  The idea is that, by assigning each color exclusively to the edges of one matching, we avoid conflicting color assignments and the resulting uncolorings.

The existence of such hard-core distributions is guaranteed by  the characterization of the matching polytope due to Edmonds~\cite{edmonds1965maximum}   and a result by Lee~\cite{lee1990some} (also shown independently by Rabinovich et al.~\cite{rabinovich1992quadratic}). The crucial fact about them  is that they are  endowed with very useful approximate stochastic independence properties, as  was shown by Kahn and Kayll in~\cite{kahn1997stochastic}. In particular, for every edge $e$, conditioning on events that are determined by edges far enough from~$e$ in the graph does not effectively alter the probability of~$e$ being in the matching.

The reason why this property is important is because it enables the application of a sophisticated version of what is known as the \emph{Lopsided}  Lov\'{a}sz Local Lemma. Recall that the original statement  of the LLL asserts, roughly, that, given a family of ``bad" events in a probability space, if each bad event individually is not very likely and, in addition, is independent of all but a small number of other bad events, then the probability of avoiding all bad events is strictly positive.  The Lopsided LLL  used by Kahn 
generalizes this criterion as follows. For each bad event $B$, we fix a positive real number $\mu_B$ and require that conditioning on all but a small number of other bad events doesn't make the probability of $B$ larger than $\mu_B$. Then, provided the~$\mu_B$ are small enough,  the conclusion of the LLL still holds. In other words, one replaces the ``probability of a bad event" in the original LLL statement with the ``boosted"  probability of the event, and  the notion of ``independence" by the notion of ``sufficiently mild negative correlation".

Notably, the breakthrough result of Moser and Tardos~\cite{Moser,MT} that made the LLL constructive for the vast majority of its applications does not apply in this case, mainly for two reasons.  First, the algorithm of Moser and Tardos applies only when the underlying probability measure of the LLL application is a \emph{product} over explicitly presented variables. Second, it relies on a particular type of  dependency (defined by shared variables). The lack of an efficient algorithm for Lopsided LLL applications is the primary obstacle to making the arguments of Kahn constructive. 

Our main technical contribution is  the design and analysis  of such  algorithms. Towards this goal, we use the flaws-actions framework introduced in~\cite{AIJACM} and further developed in~\cite{AIK,AIS,StochControl,HIK,LLLWTL}. In particular, we  use the algorithmic LLL criterion  for the analysis of stochastic local search algorithms developed by Achlioptas, Iliopoulos and Kolmogorov in~\cite{AIK}. We start by showing that there is a connection between this criterion and the version of the Lopsided
LLL used by Kahn, in the sense that the former can be seen as  the constructive counterpart of the latter.  However, this observation by itself is not sufficient, since the result of~\cite{AIK} is a tool for analyzing a \emph{given} stochastic local search algorithm. Thus, we are still left with the task of designing the algorithm before using it. Nonetheless, this connection provides valuable intuition on how to realize this  task. Moreover, we believe it is of independent interest as it provides an explanation for the success of various algorithms (such as~\cite{molloy2017list}) inspired by the techniques of Moser and Tardos, which were not  tied to a known form of the LLL.

To get a feeling for the nature of our algorithms, it is helpful  to have some intuition for the criterion of~\cite{AIK}. There, the  input  is the algorithm to be analyzed and a probability measure  $\mu$ over the state space of the algorithm. The goal of the algorithm is to reach a state that avoids a family of bad subsets of the space which we call {\it flaws}. It does this by focusing on a flaw that is currently present at each step, and taking  a (possibly randomized) action to address it. At a high level, the role of the measure is to gauge how efficiently the algorithm rids the state of flaws, by quantifying the trade-off between the probability that a flaw is present at some inner state of the execution of the algorithm and the number of other flaws each flaw can possibly introduce when the algorithm addresses it. In particular, the quality of the convergence criterion is affected by the \emph{compatibility} between the measure and the algorithm.

Roughly, the states of our algorithm will be matchings in a multigraph  (corresponding to color classes) and the goal will be to construct matchings that avoid certain flaws. To that end, our algorithm will locally modify each flawed matching by (re)sampling matchings in subgraphs of $G$ according to distributions induced by the hard-core distributions used in Kahn's proof. The fact that correlations decay with distance in these distributions allows us to prove that, while the changes are local, and hence not many new flaws are introduced at each step, the compatibility of our algorithms with these hard-core distributions is high enough to allow us to successfully apply the criterion of~\cite{AIK}.


%
%

\subsection{Organization of the Paper} 
In Section~\ref{Background} we present the necessary background. In Section~\ref{connection} we show a useful connection between the version of the Lopsided LLL used by Kahn and the algorithmic
LLL criterion of~\cite{AIK}. In Section~\ref{main_1}  we present the proof of Theorem~\ref{main}.  In Section~\ref{list_main}, we sketch the proof of Theorem~\ref{list_kahn}  and then prove Theorem~\ref{main_list}.

\section{Background and Preliminaries}\label{Background}

\subsection{The Lopsided Lov\'{a}sz Local Lemma}\label{lopsi_LLLL}

 Erd\H{os} and Spencer~\cite{LopsTrav} noted that independence in the LLL can be replaced by positive correlation, yielding the original version of what is known as  the \emph{Lopsided} LLL,  more sophisticated  versions of which have also been established in~\cite{albert1995multicoloured,dudek2012rainbow}. Below we state   the Lopsided LLL in one of its most powerful forms.
	
\begin{theorem}[General Lopsided LLL]\label{generalLLLL}
Let $(\Omega,\mu)$ be a probability space and $\mathcal{B} = \{B_1, B_2,\ldots,B_m\}$ be a set of $m$ (bad) events. For each $i \in [m]$, let $L(i) \subseteq	 [m] \setminus \{i\}$ be such that $\mu(B_i \mid \bigcap_{j \in S} \overline{B_j}) \le  b_i$ for every $S \subseteq [m] \setminus (L(i) \cup \{i\})$. If there exist positive real numbers $\{x_i\}_{i=1}^m$ such that
\begin{equation}\label{eq:LLL}
b_i \le x_i \prod_{j \in L(i)} (1-x_j)  \enspace \text{  for all $i \in [m]$}  , 
\end{equation}
then the probability that none of the events in $\mathcal{B}$ occurs is at least $\prod_{i=1}^m (1- x_i)> 0$. 
\end{theorem}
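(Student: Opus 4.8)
The plan is to prove the following strengthened statement by induction on $|S|$: for every $S \subseteq [m]$, (i) $\mu\bigl(\bigcap_{j\in S}\overline{B_j}\bigr) > 0$, and (ii) for every $i \in [m]\setminus S$, $\mu\bigl(B_i \mid \bigcap_{j\in S}\overline{B_j}\bigr) \le x_i$ (we assume throughout that $x_i \in (0,1)$ for all $i$, as is implicit in the statement, since the conclusion requires $\prod_i(1-x_i) > 0$). Granting this, the theorem follows from the chain rule applied in an arbitrary order: by (i) every conditioning event below has positive probability, so
\[
\mu\Bigl(\bigcap_{i=1}^m \overline{B_i}\Bigr) \;=\; \prod_{i=1}^m \mu\Bigl(\overline{B_i} \;\Bigm|\; \bigcap_{j<i}\overline{B_j}\Bigr) \;=\; \prod_{i=1}^m\Bigl(1 - \mu\bigl(B_i \mid \textstyle\bigcap_{j<i}\overline{B_j}\bigr)\Bigr) \;\ge\; \prod_{i=1}^m (1-x_i) \;>\; 0 ,
\]
where the inequality uses (ii) with $S = \{1,\dots,i-1\}$.

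For the base case $S = \emptyset$, (i) is $\mu(\Omega) = 1 > 0$ and (ii) is $\mu(B_i) \le b_i \le x_i\prod_{j\in L(i)}(1-x_j) \le x_i$ by the lopsidependency condition with $S=\emptyset$ and \eqref{eq:LLL} (each factor $1-x_j\le 1$). For the inductive step, fix $S$ with $|S|\ge 1$. \emph{Part (i):} writing $S = S'\cup\{i_0\}$, the inductive hypothesis gives $\mu(\bigcap_{j\in S'}\overline{B_j}) > 0$ and $\mu(B_{i_0}\mid\bigcap_{j\in S'}\overline{B_j}) \le x_{i_0} < 1$, hence $\mu(\bigcap_{j\in S}\overline{B_j}) = \bigl(1 - \mu(B_{i_0}\mid\bigcap_{j\in S'}\overline{B_j})\bigr)\cdot\mu(\bigcap_{j\in S'}\overline{B_j}) > 0$. \emph{Part (ii):} fix $i \notin S$ and split $S = S_1 \cup S_2$ with $S_1 = S\cap L(i)$ and $S_2 = S\setminus L(i)$. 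If $S_1 = \emptyset$ then $S = S_2$ is disjoint from $L(i)\cup\{i\}$, so the hypothesis of the theorem gives directly $\mu(B_i\mid\bigcap_{j\in S}\overline{B_j}) \le b_i \le x_i$. Otherwise,
\[
\mu\Bigl(B_i \Bigm| \bigcap_{j\in S}\overline{B_j}\Bigr) \;=\; \frac{\mu\bigl(B_i \cap \bigcap_{j\in S_1}\overline{B_j} \;\bigm|\; \bigcap_{k\in S_2}\overline{B_k}\bigr)}{\mu\bigl(\bigcap_{j\in S_1}\overline{B_j} \;\bigm|\; \bigcap_{k\in S_2}\overline{B_k}\bigr)} .
\]
The numerator is at most $\mu(B_i\mid\bigcap_{k\in S_2}\overline{B_k}) \le b_i$, since $S_2$ avoids $L(i)\cup\{i\}$. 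For the denominator, enumerate $S_1 = \{j_1,\dots,j_r\}$ and expand by the chain rule as $\prod_{\ell=1}^r\bigl(1 - \mu(B_{j_\ell}\mid\bigcap_{\ell'<\ell}\overline{B_{j_{\ell'}}}\cap\bigcap_{k\in S_2}\overline{B_k})\bigr)$; every conditioning event here is of the form $\bigcap_{j\in T}\overline{B_j}$ with $T\subseteq\{j_1,\dots,j_{\ell-1}\}\cup S_2\subsetneq S$, so by part (i) of the inductive hypothesis it has positive measure (making the expansion legitimate) and by part (ii) each factor is at least $1 - x_{j_\ell}$. Hence the denominator is at least $\prod_{j\in S_1}(1-x_j) \ge \prod_{j\in L(i)}(1-x_j)$ (as $S_1\subseteq L(i)$ and $x_j<1$), and combining with \eqref{eq:LLL} we obtain $\mu(B_i\mid\bigcap_{j\in S}\overline{B_j}) \le b_i/\prod_{j\in L(i)}(1-x_j) \le x_i$, completing the induction.

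This is a standard but somewhat delicate piece of conditional-probability bookkeeping, and I do not anticipate a serious obstacle. The two points requiring care are: coupling the positivity statement (i) into the induction so that every conditional probability written down is well-defined — in particular so that the chain-rule expansion of the denominator is valid — and choosing the split $S = S_1 \cup S_2$ so that the single lopsidependency inequality bounds the numerator while the \emph{complementary} product $\prod_{j\in L(i)}(1-x_j)$, which is exactly the factor appearing in \eqref{eq:LLL}, lower-bounds the denominator. One should also verify that the conditioning sets arising in the denominator expansion are strictly smaller than $S$, which licenses the use of the inductive hypothesis; this is immediate since each is contained in $\{j_1,\dots,j_{\ell-1}\}\cup S_2\subsetneq S$.
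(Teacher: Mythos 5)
Your proof is correct: the strengthened induction on $|S|$ that carries the positivity clause (i) alongside the conditional bound (ii), together with the split $S=S_1\cup S_2$ with $S_1=S\cap L(i)$ so that the lopsidependency hypothesis bounds the numerator while the chain-rule expansion over $S_1$ (justified by (i) at strictly smaller sets) bounds the denominator by $\prod_{j\in L(i)}(1-x_j)$, is exactly the standard argument. The paper itself states Theorem~\ref{generalLLLL} without proof, citing the literature, and your argument is essentially the classical inductive proof given there, so there is nothing to flag beyond your (appropriate) remark that $x_i\in(0,1)$ is implicit in the conclusion.
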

 Note  that in most applications of the Lopsided LLL the definition of sets $\{ L(i) \}_{i \in [m] } $  is ``symmetric", in the sense that if $j \in L(i)$ then $i \in L(j)$ for every $i,j \in [m]$. With that in mind, any (undirected) graph on $[m]$ that includes every edge $(i,j)$ such that $j \in L(i) $ or $i \in L(j) $ is called a \emph{lopsidependency} graph.

\subsection{An Algorithmic LLL Criterion.}\label{AlgorithmicLLLCriterion} 

Let $\Omega$ be a discrete state space, and let
$F = \{f_1, f_2, \ldots, f_m \}$ be a collection of subsets (which we call {\it flaws\/}) of $\Omega$. We define  
$\bigcup_{i \in [m]} f_i = \Omega^*$. Our goal is to find a state 
$\sigma\in\Omega\setminus \Omega^*$; we refer to such states as {\it flawless}.

For a state $\sigma$, we denote by $U(\sigma) = \{ f_j \in F \text{ s.t. } f_j \ni \sigma \}$ the set
of flaws present in~$\sigma$. We consider local search algorithms working on~$\Omega$ which, 
in each flawed state~$\sigma\in\Omega^*$, choose a flaw~$f_i$ in $U(\sigma)$ and randomly
move to a nearby state in an effort to fix~$f_i$.
We will assume that, for every flaw $f_i$ and every state $\sigma \in f_i$, there is a non-empty 
set of {\it actions\/} $a(i,\sigma) \subseteq \Omega$ such that \emph{addressing} flaw $f_i$
in state $\sigma$ amounts to selecting the next state~$\tau$ from $a(i,\sigma)$ according
to some probability distribution $\rho_i(\sigma, \tau)$. 
Note that potentially $a(i,\sigma) \cap f_i \ne \emptyset$, i.e., addressing a flaw does not
necessarily imply removing it.   We write $\sigma \xrightarrow{i} \tau$ to denote the fact
that the algorithm  addresses flaw $f_i$ at $\sigma$ and moves to $\tau$.

Throughout the paper we consider algorithms that start from a state $\sigma\in\Omega$
picked from an initial distribution~$\theta$, and then repeatedly pick a flaw that is present
in the current state and address it.  The algorithm always terminates when it encounters a
flawless state.

\begin{definition}[Causality]
We say that flaw $f_i$  \emph{causes}  $f_j$ if there exists a  transition $\sigma \xrightarrow{i}  \tau$  such that (i)~$f_j \ni \tau$; (ii) either  $f_i = f_j$ or $f_j \not\ni \sigma$. 
\end{definition}
\blue{
\begin{definition}[Causality Graph]
Any (undirected) graph $C=C(\Omega,F)$ on $[m]$ that includes every edge  $(i ,j)$ such that either $f_i$  causes $f_j$ or $f_j$ causes $f_i$ is called a  \emph{causality graph}. We write 
$\Gamma(i)$ for the set of neighbors of~$i$ in this graph. We also write $i\sim j$ to denote that $j \in \Gamma(i)$ (or equivalently, $i \in \Gamma(j)$). 
\end{definition}
}




%
For a given probability measure $\mu$ supported on the state space $\Omega$, and for each flaw $f_i$, we define the \emph{charge} 
\begin{align}\label{charge_def}
\gamma_i =  \max_{\tau \in \Omega } \sum_{ \sigma \in f_i } \frac{ \mu(\sigma) }{ \mu(\tau) }  \rho_i(\sigma,\tau) .
\end{align}
In Section~\ref{connection} we give the intuition behind the definition of charges and also draw a connection with the parameters $b_i$ in Theorem~\ref{generalLLLL}. We are now ready to state the main result of~\cite{AIK}.

\begin{theorem}[\cite{AIK}]\label{our_theorem}
Assume that, at each step, the algorithm chooses to address the lowest indexed flaw according to an arbitrary, but fixed, permutation of $[m]$. If there exist positive real numbers $x_i\in(0,1)$ for
$1\le i\le m$ such that
\begin{align}\label{generalAlgoLLL}
\gamma_i \le (1-\epsilon) x_i \prod_{j \in \Gamma(i)} (1-x_j)  \enspace \text{ for every $i \in [m]$}
\end{align}
for some $\epsilon \in (0,1)$, then the algorithm reaches a flawless object within $(T_0+s)/\epsilon$ steps with probability at least $1-2^{-s}$, where
\[
T_0 	=  \log_2 \left(  \max_{  \sigma \in \Omega} \frac{ \theta(\sigma)}{ \mu(\sigma) }  \right)  +  \sum_{j \in [m] } \log_2 \left( \frac{1}{ 1-x_j} \right).
\]
\end{theorem}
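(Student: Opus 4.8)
\medskip
\noindent\textbf{Proof approach.}
Fix a length $t$ and bound the probability that the algorithm is still running after $t$ steps. Any execution reaching step $t$ determines a length-$t$ \emph{trajectory} $\sigma_0\xrightarrow{w_1}\sigma_1\xrightarrow{w_2}\cdots\xrightarrow{w_t}\sigma_t$; I record its \emph{witness sequence} $W=(w_1,\dots,w_t)\in[m]^t$, the indices of the flaws addressed along it. Call $W$ \emph{feasible} if it occurs as the length-$t$ prefix of the witness sequence of some positive-probability execution. Since the length-$t$ prefix of an execution's witness sequence is unique, the event ``the run has length $\ge t$'' is partitioned according to it, so
\[
\Pr[\text{run}\ge t]\;=\;\sum_{W}\Pr\big[\text{the length-}t\text{ prefix of the witness sequence is }W\big],
\]
where the sum is over feasible $W$. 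The argument then has two parts: an analytic bound on each summand, and a combinatorial bound on the total weight of the feasible $W$.

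\smallskip
\noindent\emph{Part 1 (one witness sequence).}
I would first show that for \emph{every} $W=(w_1,\dots,w_t)$,
\[
\Pr\big[\text{the length-}t\text{ prefix equals }W\big]\;\le\;\Big(\max_{\sigma\in\Omega}\tfrac{\theta(\sigma)}{\mu(\sigma)}\Big)\prod_{k=1}^t\gamma_{w_k}.
\]
Writing this probability as $\sum\theta(\sigma_0)\prod_{k=1}^t\rho_{w_k}(\sigma_{k-1},\sigma_k)$ over all trajectories compatible with $W$ (so each $\sigma_{k-1}\in f_{w_k}$ and $w_k$ is the lowest-indexed flaw present at $\sigma_{k-1}$), the key manoeuvre is to bring in the reference measure $\mu$ by telescoping,
\[
\theta(\sigma_0)\prod_{k=1}^t\rho_{w_k}(\sigma_{k-1},\sigma_k)\;=\;\frac{\theta(\sigma_0)}{\mu(\sigma_0)}\cdot\Bigg(\prod_{k=1}^t\frac{\mu(\sigma_{k-1})}{\mu(\sigma_k)}\,\rho_{w_k}(\sigma_{k-1},\sigma_k)\Bigg)\cdot\mu(\sigma_t).
\]
Then bound $\theta(\sigma_0)/\mu(\sigma_0)$ by its maximum, relax the ``lowest-indexed'' requirement to just $\sigma_{k-1}\in f_{w_k}$ (which only enlarges the sum), and sum out the states $\sigma_0,\sigma_1,\dots,\sigma_{t-1}$ in this order. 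Because $\gamma_{w_k}$ does not depend on the target state, each step peels off exactly one factor via $\sum_{\sigma\in f_{w_k}}\frac{\mu(\sigma)}{\mu(\sigma_k)}\rho_{w_k}(\sigma,\sigma_k)\le\gamma_{w_k}$, which is precisely the definition~\eqref{charge_def} of the charge with target state $\sigma_k$. Finally $\sum_{\sigma_t}\mu(\sigma_t)=1$, and the bound follows. (Feasibility of $W$ is not used here.)

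\smallskip
\noindent\emph{Part 2 (sum over feasible sequences).}
Write $\gamma_i\le(1-\epsilon)\hat\gamma_i$ with $\hat\gamma_i:=x_i\prod_{j\in\Gamma(i)}(1-x_j)$, so $\prod_k\gamma_{w_k}\le(1-\epsilon)^t\prod_k\hat\gamma_{w_k}$; it then remains to bound $\sum_{W}\prod_k\hat\gamma_{w_k}$ over feasible $W$. Here I would follow the witness-tree method of Moser and Tardos, in the form adapted to ordered (fixed-permutation) strategies in~\cite{AIK}: scanning $W$ from right to left, one associates to it, injectively, a forest of rooted, ordered, flaw-labelled trees by attaching each addressed flaw below the most recent later flaw that caused it, so that causality forces the label of every child of a node labelled $i$ to lie in $\Gamma(i)\cup\{i\}$ while $\prod_k\hat\gamma_{w_k}=\prod_v\hat\gamma_{\ell(v)}$. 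A standard generating-function computation then yields $\sum_{\text{trees rooted at }i}\prod_v\hat\gamma_{\ell(v)}\le\frac{x_i}{1-x_i}$---one checks that $(\tfrac{x_i}{1-x_i})_i$ is a supersolution of the natural fixed-point inequality, using exactly the hypothesis $\hat\gamma_i\le x_i\prod_{j\in\Gamma(i)}(1-x_j)$---and the rule of addressing the lowest-indexed present flaw constrains the root labels enough (morally: a flaw can head a root tree only once every permutation-earlier flaw has been cleared) that the forests assemble index by index in permutation order, each index contributing a factor at most $1+\frac{x_i}{1-x_i}=\frac1{1-x_i}$. Hence $\sum_{W}\prod_k\hat\gamma_{w_k}\le\prod_{j=1}^m\frac1{1-x_j}$.

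\smallskip
\noindent\emph{Putting it together.}
Parts 1 and 2 give $\Pr[\text{run}\ge t]\le\big(\max_\sigma\tfrac{\theta(\sigma)}{\mu(\sigma)}\big)(1-\epsilon)^t\prod_{j=1}^m\tfrac1{1-x_j}=2^{T_0}(1-\epsilon)^t$. Taking $t=\lceil(T_0+s)/\epsilon\rceil$ and using $1-\epsilon\le e^{-\epsilon}$ together with $e\ge2$ gives $(1-\epsilon)^t\le e^{-(T_0+s)}\le 2^{-(T_0+s)}$, so $\Pr[\text{run}\ge t]\le 2^{-s}$, which is the claim. I expect Part 2 to be the main obstacle: making the right-to-left construction precise, verifying that it is injective, and---most delicately---establishing exactly how the fixed permutation constrains the multiset and ordering of the root labels, since it is this feature (rather than the single-tree bound $\sum_i\frac{x_i}{1-x_i}$) that produces the product $\prod_j\frac1{1-x_j}$ matching the stated value of $T_0$. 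Part 1, by contrast, is a routine telescoping computation once one thinks to insert $\mu$ and sum the states out greedily.
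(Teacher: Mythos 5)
The paper does not prove this theorem at all: it is quoted verbatim as the main result of \cite{AIK}, so there is no internal proof to compare against. Your outline is, however, a faithful reconstruction of the actual argument in \cite{AIK}: Part~1 (bounding the probability of a fixed witness sequence by $\max_\sigma\theta(\sigma)/\mu(\sigma)\cdot\prod_k\gamma_{w_k}$ via inserting $\mu$ and peeling off charges one state at a time) is complete and correct as written, and Part~2 is the right strategy. Be aware that everything you flag as ``the main obstacle'' in Part~2 --- the precise right-to-left forest construction, its injectivity, and the way the fixed permutation forces each index to contribute at most one root so that the total weight is $\prod_j(1-x_j)^{-1}$ rather than something growing with $t$ --- is exactly the technical core of \cite{AIK} (their ``Break Forests''), so as it stands your write-up is a correct plan plus a completed Part~1, not yet a proof.
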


We also describe another  theorem that can be used to show convergence in a polynomial number of steps, even when the number of flaws is super-polynomial, assuming that the algorithm has a nice 
``commutativity" property which we describe next.

\begin{definition}\label{commutative_algos}
For $i  \in [m]$, let $A_i $  denote the $|\Omega| \times |\Omega|$ matrix defined by $A_i[ \sigma, \sigma' ] = \rho_i(\sigma,\sigma')$ if $\sigma \in f_i$, and $A_i[\sigma,\sigma'] = 0$ otherwise. An algorithm defined by matrices $A_i$, $i \in [m]$, is \emph{commutative with respect to a causality relation $\sim$} if   for every $i,j \in [m]$ such that $i \nsim j$ we have  $A_i A_j = A_j A_i$.
\end{definition}

\begin{remark}\label{comm_remark}
 As shown in~\cite{Kolmofocs,LLLWTL,HIK}, commutative algorithms have several additional nice properties: they are often parallelizable, their output distribution approximates the so-called ``LLL-distribution", etc. Here we  use the fact that commutative algorithms converge in polynomial time even in the presence of superpolynomially many flaws, assuming that the causality graph can be covered by a polynomial number of cliques (see Theorem~\ref{clique_decomp} below). It is also worth noting that, if there were an efficient parallel algorithm for sampling matchings in multigraphs,  namely a parallel version of  the  MCMC algorithm of Theorem~\ref{sampling_matchings} which we discuss in the next section and which we use in our algorithm for Theorem~\ref{main}, then our proof directly implies a parallel algorithm for Theorem~\ref{main}.  The study of parallel versions of MCMC sampling algorithms has been initiated recently in~\cite{DBLP:journals/dc/FengSY20,DBLP:conf/wdag/FischerG18}.
\end{remark}

We note that Definition~\ref{commutative_algos} was introduced in~\cite{HIK}, as a generalization of the combinatorial definition of commutativity introduced in~\cite{Kolmofocs}. While the latter would suffice for our purposes, we choose to work with Definition~\ref{commutative_algos} due to its compactness.

\begin{theorem}\label{clique_decomp}
Let $\mathcal{A}$ be a commutative algorithm  with respect to  a causality relation $\sim$.  Assume there exist positive real numbers $\{x_i\}_{i \in [m] }$ in $(0,1)$ such that condition~\eqref{generalAlgoLLL} holds. Assume further that the causality graph induced by $\sim$ can be  covered by $n$ cliques  with potentially further edges between them. Setting  $\delta := \min_{ i \in [m] } x_i \prod_{j \in \Gamma(i) } (1 - x_j) $,  the expected number of steps performed by $\mathcal{A}$ is at most $t = O \left(  \max_{  \sigma \in \Omega} \frac{ \theta(\sigma)}{ \mu(\sigma) } \cdot \frac{n}{\epsilon}  \log \frac{ n \log ( 1/ \delta) }{ \epsilon} \right) $, and for any parameter $\lambda \ge 1$,  $\mathcal{A}$ terminates within $\lambda t$ resamplings with probability $1 - \mathrm{e}^{ -\lambda }$.
\end{theorem}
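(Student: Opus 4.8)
The plan is to run the witness-tree analysis of commutative stochastic local search (as in~\cite{Kolmofocs,AIS,LLLWTL}), pinpointing the two places where the clique decomposition replaces a naive $O(m)$ bound by one polynomial in $n$. First I would reduce to the case $\theta=\mu$: the event that the algorithm performs more than $s$ steps is determined by the initial state together with the internal randomness, so a change of measure on the starting state gives
\[
\Pr\nolimits_{\theta}[\,\#\mathrm{steps}>s\,]\;\le\;\Big(\max_{\sigma\in\Omega}\tfrac{\theta(\sigma)}{\mu(\sigma)}\Big)\cdot\Pr\nolimits_{\mu}[\,\#\mathrm{steps}>s\,].
\]
Hence it suffices to prove a geometric tail bound for the chain started from $\mu$ and then multiply through by $\max_\sigma\theta(\sigma)/\mu(\sigma)$; integrating the tail gives the bound on $\ex[\#\mathrm{steps}]$, and evaluating it at $s=\lambda t$ gives the ``with probability $1-\mathrm e^{-\lambda}$'' statement. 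For the chain started from $\mu$, record the execution log $w_1,w_2,\dots$ (the indices of the addressed flaws) and, for each time $t$, build the Moser--Tardos witness tree $\tau_t$ rooted at $w_t$: scanning the log backwards, append each earlier entry $w_{t'}$ to the deepest node whose label lies in $\Gamma^+(w_{t'}):=\Gamma(w_{t'})\cup\{w_{t'}\}$. The key lemma---and the only place commutativity is used---is the \emph{validity bound} $\Pr_\mu[\,\tau_t=T\text{ for some }t\,]\le\prod_{v\in T}\gamma_{\ell(v)}$ for every fixed tree $T$; this is exactly where Definition~\ref{commutative_algos} does the work, since commuting the resampling matrices $A_i$ lets one bring the log to a canonical form and couple the algorithm with a fixed sequence of independent samples, after which the charge $\gamma_i$ of~\eqref{charge_def} is precisely the per-node cost under $\mu$.

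\smallskip
\noindent\textbf{Using the cliques.}
Fix the partition $[m]=K_1\cup\cdots\cup K_n$ into cliques of the causality graph, and for a clique $K_a$ let $N_a$ be the number of steps that address a flaw of $K_a$. Two structural facts drive the bound. (i) If flaws of $K_a$ are addressed at times $t_1<\cdots<t_k$, then every $w_{t_j}$ with $j<k$ is causal with the root $w_{t_k}$ (same clique), so all of $w_{t_1},\dots,w_{t_{k-1}}$ get appended to $\tau_{t_k}$; hence $|\tau_{t_k}|\ge k$, and therefore $\Pr_\mu[N_a\ge k]\le\sum_{T:\,\mathrm{root}\in K_a,\ |T|\ge k}\prod_{v\in T}\gamma_{\ell(v)}$. (ii) Distinct children of a node in a witness tree carry distinct labels, and in fact distinct \emph{cliques}: if two children lay in the same clique $K_b$, then the later-appended of the two would (its label being causal with the earlier one, which is strictly deeper) have attached below the earlier child rather than to their common parent---a contradiction. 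Moreover, since $\Gamma(i)\supseteq K_a\setminus\{i\}$ for $i\in K_a$, condition~\eqref{generalAlgoLLL} gives $\sum_{i\in K_a}\gamma_i\le(1-\epsilon)\sum_{i\in K_a}x_i\prod_{j\in K_a\setminus\{i\}}(1-x_j)\le 1-\epsilon$. Thus each ``clique slot'' of a witness tree contributes total charge at most $1-\epsilon$, and any node has at most $n$ such slots below it.

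\smallskip
\noindent\textbf{Finishing.}
Combining (i) and (ii) with the standard Moser--Tardos generating-function estimate---now carrying the multiplicative $(1-\epsilon)$ slack of~\eqref{generalAlgoLLL}---the quantity $\sum_{T:\,\mathrm{root}\in K_a,\ |T|\ge k}\prod_v\gamma_{\ell(v)}$ decays geometrically in $k$, with a leading constant controlled by $\sum_{i\in K_a}\tfrac{x_i}{1-x_i}$; this is $\poly(1/\delta)$ because $x_i\prod_{j\in\Gamma(i)}(1-x_j)\ge\delta$ forces each $1-x_i$ and each $\prod_{j\in K_a\setminus\{i\}}(1-x_j)$ to be at least a fixed power of $\delta$. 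Summing over the $n$ cliques yields $\Pr_\mu[\#\mathrm{steps}>s]\le\poly(n,1/\delta)\cdot(1-\Omega(\epsilon))^{s/n}$; this falls below $1$ once $s=\Omega\!\big(\tfrac{n}{\epsilon}\log\tfrac{n\log(1/\delta)}{\epsilon}\big)$ and decays geometrically past that point, so integrating and applying the Step-1 reduction gives $\ex_{\theta}[\#\mathrm{steps}]=O\!\big(\max_\sigma\tfrac{\theta(\sigma)}{\mu(\sigma)}\cdot\tfrac{n}{\epsilon}\log\tfrac{n\log(1/\delta)}{\epsilon}\big)$ together with the claimed concentration; a slightly finer accounting of the per-clique sum---exploiting fact~(ii) inside the recursion, not merely at the root---is what trims the $1/\delta$ dependence inside the logarithm to its stated form.

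\smallskip
\noindent\textbf{Main obstacle.}
The genuinely delicate point is the validity bound in the \emph{commutative} setting with an adversarial flaw-selection rule and arbitrary action kernels $\rho_i$: one must show that a fixed witness tree is produced with probability at most $\prod_{v}\gamma_{\ell(v)}$ even though the log is not generated by a product-space resampling \`a la Moser--Tardos. This is precisely what the commuting-matrices hypothesis of Definition~\ref{commutative_algos} buys, via the reordering/coupling argument of~\cite{Kolmofocs,AIS,LLLWTL}; once it is in hand, facts (i)--(ii) and the generating-function bookkeeping are routine.
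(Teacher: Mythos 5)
Your outline takes essentially the same route as the paper, which in fact omits the proof of this theorem altogether, stating that it is identical to the Haeupler--Saha--Srinivasan analysis of the Moser--Tardos algorithm once the witness-tree (validity) bound for commutative algorithms (Theorem~3.2 of~\cite{LLLWTL}, building on~\cite{Kolmofocs,AIS}) is available --- exactly the two ingredients you isolate (the commutative witness-tree lemma plus the HSS clique bookkeeping, together with the multiplicative change-of-measure factor $\max_{\sigma}\theta(\sigma)/\mu(\sigma)$ to pass from $\theta$ to $\mu$). The only steps you assert rather than derive --- the validity bound $\prod_v \gamma_{\ell(v)}$ under commutativity, and the finer per-clique accounting that turns the naive $\log(1/\delta)$ into the stated $\log\log(1/\delta)$ dependence inside the logarithm --- are precisely the parts the paper itself delegates to the cited works, so your proposal is consistent with, and somewhat more detailed than, the paper's treatment.
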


As shown in \cite[Theorem 3.2]{LLLWTL}, the proof of Theorem~\ref{clique_decomp} reduces to that
of the analogous result  of Hauepler, Saha and Srinivasan~\cite{Haeupler_jacm} for the Moser-Tardos algorithm, and hence we omit it.

\subsection{Hard-Core Distributions on Matchings}

A probability distribution $\nu$ on the matchings of a  multigraph $G$ is \emph{hard-core} if it is
obtained by associating to each edge $e$ a positive real $\lambda(e)$ (called the \emph{activity} of $e$) so that the probability of any matching $M$ is proportional to $\prod_{e \in M} \lambda(e)$. Thus, recalling that $\mathcal{M}(G)$ denotes the set of matchings of $G$, and setting $\lambda(M) = \prod_{e \in M} \lambda(e)$ for each $M \in \mathcal{M}(G)$, we have
\begin{align*}
\nu(M) = \frac{ \lambda(M) }{ \sum_{M' \in \mathcal{M}(G) }  \lambda(M')}  .
\end{align*}

The characterization of the matching polytope due to Edmonds~\cite{edmonds1965maximum}  and a result of Lee~\cite{lee1990some} (which was also shown independently by Rabinovich et al.~\cite{rabinovich1992quadratic}) imply the following connection between fractional edge colorings and hard-core probability distributions on matchings. Before describing it, we need a definition.

For any probability distribution $\nu$ on the matchings of a multigraph $G$, we refer to the probability that a particular edge $e$ is in the random matching as the \emph{marginal}  of $\nu$ at~$e$.
We write $(\nu_{e_1},\ldots,\nu_{e_{|E(G)|}})$ for the collection of marginals of~$\nu$ at all the edges
$e_i\in E(G)$.

\begin{theorem}[\cite{lee1990some,rabinovich1992quadratic}] \label{Alistair_hard_core}
There is a hard-core probability distribution $\nu$ with marginals $(\frac{1}{c}, \ldots, \frac{1}{c} )$ if and only if there is a fractional $c'$-edge coloring of $G$ with $c'<c$, i.e., if and only if $\chi_e^* < c$.
\end{theorem}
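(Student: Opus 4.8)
The plan is to derive the statement from the Edmonds characterization of the matching polytope together with the Lee/Rabinovich-et-al.\ result on realizing interior points of a polytope as marginals of hard-core (Gibbs) distributions. First recall that the matching polytope $\mathrm{MP}(G)$ is the convex hull of the indicator vectors $\mathds{1}_M\in\Reals^{E}$ of matchings $M\in\mathcal{M}(G)$; Edmonds' theorem gives its description by the constraints $x\ge 0$, $\sum_{e\ni v}x_e\le 1$ for all $v\in V$, and $\sum_{e\in E(H)}x_e\le\lfloor|H|/2\rfloor$ for all odd $H\subseteq V$. The key point I would establish is the equivalence, for a fixed vector $p=(\tfrac1c,\ldots,\tfrac1c)$, of the following three statements: (i) there is a fractional $c'$-edge coloring with $c'<c$; (ii) the vector $p$ lies in the \emph{relative interior} of $\mathrm{MP}(G)$ (equivalently, $c\cdot p$ is a strictly feasible point, strictly satisfying all the odd-set and degree constraints that are not forced to equality — here one has to be slightly careful about which facets $p$ can touch, but scaling by $c'<c$ instead of $c$ gives the needed slack); (iii) $p$ is the marginal vector of some hard-core distribution $\nu$ on $\mathcal{M}(G)$.

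For (i)$\Leftrightarrow$(ii): a fractional edge $c'$-coloring is exactly a way of writing the all-ones vector $\mathbf{1}\in\Reals^E$ as a nonnegative combination $\sum_i w_i\mathds{1}_{M_i}$ with $\sum_i w_i=c'$; dividing by $c'$ this says $\tfrac{1}{c'}\mathbf 1\in\mathrm{MP}(G)$, and $\chi_e^*(G)<c$ is precisely the assertion that such a representation exists with $c'<c$. Since $\chi_e^*(G)=\max(\Delta,\Gamma)$ by Edmonds, $\chi_e^*(G)<c$ is equivalent to $\Delta<c$ and $\Gamma<c$, i.e.\ to $\sum_{e\ni v}\tfrac1c<1$ for every vertex and $\sum_{e\in E(H)}\tfrac1c<\lfloor|H|/2\rfloor$ for every $H$ with $|H|\ge 2$; in other words $p=(\tfrac1c,\ldots,\tfrac1c)$ satisfies \emph{all} the nontrivial facet inequalities of $\mathrm{MP}(G)$ strictly, so $p$ is an interior point of $\mathrm{MP}(G)$ (note $p>0$ coordinatewise as well). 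For (ii)$\Leftrightarrow$(iii): the theorem of Lee~\cite{lee1990some} (independently Rabinovich et al.~\cite{rabinovich1992quadratic}) states exactly that every point in the interior of the convex hull of a finite set of $0/1$ vectors — here $\{\mathds{1}_M:M\in\mathcal{M}(G)\}$ — arises as the marginals of a Gibbs/hard-core distribution over that set (with all activities $\lambda(e)$ strictly positive and finite), and conversely the marginals of any hard-core distribution with positive activities lie strictly inside the hull; so this direction is quoted verbatim. Chaining the equivalences yields the biconditional in the statement.

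The main obstacle I anticipate is purely bookkeeping around the boundary: one must verify that ``$\chi_e^*(G)<c$'' supplies strict inequality in \emph{all} the facet-defining constraints simultaneously (degree constraints and odd-set constraints), so that $p$ is genuinely in the \emph{open} polytope and Lee's theorem applies — as opposed to $p$ merely lying in the closed polytope, where some marginals could be forced to $0$ or $1$ and no hard-core distribution with finite positive activities exists. This is exactly why the statement is phrased with the strict inequality $c'<c$ and with ``if and only if $\chi_e^*<c$'' rather than ``$\le c$'': the slack $c-\chi_e^*(G)>0$ is what gets distributed across the finitely many active constraints. Once that is in hand, no further estimates are needed; the rest is a direct invocation of Edmonds' polytope description and the Lee/Rabinovich realizability result.
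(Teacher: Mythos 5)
Your proof is correct and takes essentially the route the paper intends: the paper states Theorem~\ref{Alistair_hard_core} without proof, citing exactly Edmonds' description of the matching polytope together with the realizability result of Lee and Rabinovich et al., and your chain (i)$\Leftrightarrow$(ii)$\Leftrightarrow$(iii) is the standard way of combining those two ingredients. Your attention to strictness of all facet inequalities (so that $(\frac{1}{c},\ldots,\frac{1}{c})$ lies in the interior of the polytope and finite positive activities exist) is precisely the role of the hypothesis $\chi_e^*<c$.
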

Kahn and Kayll~\cite{kahn1997stochastic}  proved that the probability distribution promised by Theorem~\ref{Alistair_hard_core} is endowed with very useful approximate stochastic independence properties.
\begin{definition}
Suppose  we choose a random matching $M$ from some probability distribution.  We say that an event $Q$ is \emph{$t$-distant} from a vertex~$v$ if $Q$ is completely  determined by the choice of all matching edges at distance at least~$t$ from~$v$. We say that $Q$ is $t$-distant from an edge $e$ if it is $t$-distant from both endpoints of~$e$.
\end{definition}
\begin{theorem}[\cite{kahn1997stochastic}] \label{hardcore_dist}
For any $\delta > 0$, there exists a $K = K(\delta)$ such that for any multigraph $G$ with fractional chromatic index $c$ there is a hard-core distribution $\nu$ with marginals $(\frac{1- \delta}{c }, \ldots, \frac{1-\delta}{c}   )$ such that: 
\begin{enumerate}[(a)]

\item for every $e \in E(G)$, $\lambda(e) \le \frac{K}{c} $ and hence $\forall v \in V(G)$, $\sum_{e \ni v } \lambda(e) \le K $;

\item 
for every $\epsilon \in (0,1)$, if we choose a matching $M$ according to $\nu$ then, 
for any edge $e$ and event $Q$ which is $t$-distant from $e$,
\begin{align*}
\Pr( e \in M \mid Q )       
\in (1\pm \epsilon) \Pr (e \in M ),
\end{align*}
where $t= t(\epsilon) =  8 (K+1)^2 \epsilon^{-1} + 2$.
\end{enumerate}
\end{theorem}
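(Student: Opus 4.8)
The plan is to produce $\nu$ in three moves: first invoke Theorem~\ref{Alistair_hard_core} to get the right marginals, then establish the activity bound~(a), then the approximate‑independence bound~(b), proving~(b) by induction on distance with~(a) as the only structural input.

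Existence with the uniform marginals is immediate: since $\chi_e^*(G)=c<\tfrac{c}{1-\delta}$, applying Theorem~\ref{Alistair_hard_core} with threshold $\tfrac{c}{1-\delta}$ gives a hard‑core $\nu$ with all marginals equal to $\tfrac{1-\delta}{c}$. For~(a) I would use the identity that for any hard‑core distribution with activities $\lambda$ and partition functions $Z_H=\sum_{M\in\mathcal M(H)}\lambda(M)$, and any edge $e=uv$,
$$\nu_e=\lambda(e)\,\frac{Z_{G-u-v}}{Z_G}=\lambda(e)\cdot\Pr_\nu[u\text{ and }v\text{ are both unmatched}],$$
since the matchings avoiding $\{u,v\}$ are exactly those of $G-u-v$; hence $\lambda(e)=\nu_e/\Pr_\nu[u,v\text{ unmatched}]$, and it suffices to bound the denominator below by some $\beta=\beta(\delta)>0$, which gives $K:=\tfrac{1-\delta}{\beta}$ and then $\sum_{f\ni w}\lambda(f)\le\deg(w)K/c\le K$ because $\deg(w)\le\Delta\le c$. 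To lower‑bound $\Pr_\nu[u,v\text{ unmatched}]$ I would start from $\Pr_\nu[w\text{ unmatched}]=1-\sum_{f\ni w}\nu_f\ge 1-\deg(w)\tfrac{1-\delta}{c}\ge\delta$ for every vertex $w$, write $\Pr_\nu[u,v\text{ unmatched}]=\Pr_\nu[u\text{ unmatched}]\cdot\Pr_{\nu,\,G-u}[v\text{ unmatched}]$, and then show that deleting $u$ cannot inflate the marginals at $v$ by more than a $\delta$‑dependent factor — this is where one uses the hard‑core structure (negative association of the edge indicators, which controls how vertex deletion shifts marginals). An alternative that sidesteps this is to start instead from the distribution with marginals $\tfrac{1-\delta/2}{c}\mathbf 1$ and replace every activity by $\min\{\lambda(f),K/c\}$: lowering one activity lowers its own edge's marginal but raises all others, and a monotonicity argument shows that for $K=K(\delta)$ large the marginals stay $\ge\tfrac{1-\delta}{c}$, so~(a) holds by construction.

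For~(b), fix $e=xy$ and an event $Q$ that is $t$‑distant from $e$, with $t=8(K+1)^2\epsilon^{-1}+2$. Set $d(z)=\min\{\mathrm{dist}(z,x),\mathrm{dist}(z,y)\}$, let $A=\{z:d(z)<t\}$ and $B=\{z:d(z)\ge t\}$; then $Q$ is determined by $M$ restricted to $G[B]$ and every edge between $A$ and $B$ joins layer $t-1$ to layer $t$. The engine is the spatial Markov property of hard‑core distributions on matchings: conditioned on the set $J$ of crossing matching edges (those joining $A$ to $B$), the restrictions $M|_{G[A]}$ and $M|_{G[B]}$ are independent, each hard‑core with the same activities on $G[A]$, resp. $G[B]$, with the endpoints of $J$ deleted; in particular, conditioned on $J$, the event $\{e\in M\}$ is independent of $Q$, so with $h(J):=\Pr_\nu[e\in M\mid J]$ we get $\Pr_\nu[e\in M\mid Q]=\sum_J\Pr_\nu[J\mid Q]\,h(J)$ and $\Pr_\nu[e\in M]=\sum_J\Pr_\nu[J]\,h(J)$. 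Now $h(J)$ depends on $J$ only through the set $S(J)$ of layer‑$(t-1)$ vertices that $J$ matches outward, and $h(J)$ equals the probability that $e$ lies in a hard‑core random matching of $G[A]\setminus S(J)$ — a graph of radius $<t$ around $e$ subject to a boundary perturbation at distance $t-1$; similarly $\Pr_\nu[J\mid Q]$ differs from $\Pr_\nu[J]$ only through a perturbation one layer further out. Playing these two facts against one another yields a recursion for
$$D_r:=\sup\Big|\tfrac{\Pr_\nu[e\in M\mid\text{$r$-distant event}]}{\Pr_\nu[e\in M]}-1\Big|,$$
the supremum taken over all multigraphs of maximum degree $\le c$ with activities $\le K/c$, over all edges, and over all legal boundary events; iterating this recursion over the layers separating $e$ from the support of $Q$, and using $\Pr_\nu[\text{vertex unmatched}]\ge(1+K)^{-1}$ (which follows from $\Pr_\nu[v\text{ unmatched}]\ge(1+\sum_{f\ni v}\lambda(f))^{-1}$) to bound the per‑layer loss, drives $D_t$ below $\epsilon$.

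The hard part is the quantitative bookkeeping in~(b). The obstacle is that the crude bound on how much conditioning on $Q$ moves the law of $S(J)$ is not small on its own — $S(J)$ sits only one layer inside the support of $Q$ — so a layer‑by‑layer argument that the conditional marginal of $e$ "barely moves'' does not work directly; instead one must exploit that $S(J)$ is integrated against a function $h$ that is itself already nearly flat (the inductive hypothesis at radius $t-1$) and set up the telescoping so that the errors add rather than compound, which is what produces the linear‑in‑$\epsilon^{-1}$, quadratic‑in‑$(K+1)$ threshold rather than a weaker one. By comparison, (a) is routine once one commits either to the vertex‑peeling estimate or to the capping construction, the only genuine subtlety there being the uniform lower bound on $\Pr_\nu[u,v\text{ both unmatched}]$.
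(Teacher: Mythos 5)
This statement is not proved in the paper at all: it is quoted verbatim as a known result of Kahn and Kayll~\cite{kahn1997stochastic} and used as a black box, so there is no in-paper argument to compare against. Your proposal therefore has to be judged as a self-contained reconstruction of the Kahn--Kayll theorem, and as it stands it is an outline with the two genuinely hard steps left open. For part~(a), the identity $\nu_e=\lambda(e)\Pr[u,v\text{ both uncovered}]$ and the single-vertex bound $\Pr[w\text{ uncovered}]\ge\delta$ are fine, but the step you actually need --- a uniform lower bound on $\Pr[u,v\text{ both uncovered}]$, equivalently a bound on how much deleting $u$ can inflate the marginals at $v$ --- is precisely where the content lies, and you only gesture at ``negative association'' without an argument (note that the uncovered-vertex events are \emph{not} positively correlated in general, e.g.\ for the two endpoints of a path of length two, so whatever correlation statement you invoke has to be proved for adjacent vertices specifically). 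Your fallback ``capping'' construction does not prove the theorem as stated either: the statement requires marginals exactly $\frac{1-\delta}{c}$ at every edge, and capping activities destroys that, while the claimed monotonicity (``lowering one activity raises all other marginals, and the marginals stay $\ge\frac{1-\delta}{c}$'') is again asserted rather than established.

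For part~(b), the setup is reasonable --- the spatial Markov property you describe (conditioned on the set $J$ of crossing matching edges, the inside and outside restrictions are independent hard-core distributions on the vertex-deleted subgraphs, so $\{e\in M\}$ and $Q$ are conditionally independent given $J$) is correct. But everything after that is a plan, not a proof: the recursion for $D_r$, the claim that the errors can be made to ``add rather than compound,'' and the derivation of the specific threshold $t=8(K+1)^2\epsilon^{-1}+2$ are exactly the quantitative heart of Kahn--Kayll, and you explicitly flag that you have not carried them out. You also correctly note that the naive layer-by-layer bound fails because the law of the boundary configuration $S(J)$ is \emph{not} close to unconditional when one conditions on $Q$; overcoming this is the main technical difficulty of~\cite{kahn1997stochastic}, and nothing in the proposal resolves it. So the proposal identifies a plausible architecture but leaves the theorem unproved; if you want a complete argument you will need to either work through the correlation-decay estimate in detail (including the role of the activity bound $\sum_{f\ni v}\lambda(f)\le K$ and the bound $\Pr[v\text{ uncovered}]\ge(1+K)^{-1}$ per layer) or simply cite~\cite{kahn1997stochastic}, as this paper does.
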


We conclude this subsection by stating the result of Jerrum and Sinclair~\cite{Alistair1} for sampling from
hard-core distributions on matchings. We also describe  a few of its applications that will be helpful in our proofs.  The algorithm of~\cite{Alistair1} works by simulating a rapidly mixing Markov
chain on matchings, whose stationary distribution is the desired hard-core distribution~$\nu$,
and outputting the final state.

\begin{theorem}[\cite{Alistair1}, Corollary~4.3]\label{sampling_matchings}
Let $G$ be a multigraph,  $\{\lambda(e)\}_{e\in E(G)}$ a vector of activities associated
with the edges of~$G$, and $\nu$ the corresponding hard-core distribution.  Let $n=|V(G)|$ be the number of vertices of $G$ and define 
$\lambda' = \max \{   \max_{u,v \in V(G) } \sum_{ e \ni \{u, v \} } \lambda(e) ,1  \} $.
There exists an algorithm that, for any
$\epsilon>0$, runs in time $\poly(n,\lambda',\log\epsilon^{-1})$ and outputs a matching
in~$G$ drawn from a distribution~$\nu'$ such that $\|  \nu - \nu'  \|_{\mathrm{ TV} } \le \epsilon$.
\end{theorem}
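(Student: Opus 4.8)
The plan is to establish the result by the Markov chain Monte Carlo method, exactly as the statement anticipates: design an ergodic Markov chain on $\mathcal{M}(G)$ whose unique stationary distribution is $\nu$, prove that it mixes in time polynomial in $n$ and $\lambda'$, and then run it from an arbitrary fixed start state for enough steps. The argument is essentially self-contained and does not rely on the earlier material in this section.

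First I would define the chain. From a current matching $M$, pick an edge $e=\{u,v\}\in E(G)$ uniformly at random and propose one of three moves according to how $e$ meets $M$: if $e\in M$, propose $M'=M\setminus e$; if $e\notin M$ and both $u,v$ are $M$-unmatched, propose $M'=M\cup e$; if $e\notin M$ and exactly one endpoint, say $v$, is matched, by an edge $e'=\{v,w\}$, propose the \emph{rotation} $M'=(M\setminus e')\cup e$; otherwise stay put. Each proposal is accepted with Metropolis probability $\min\{1,\lambda(M')/\lambda(M)\}$ (equivalently $\lambda(e)^{\pm1}$ for removals/additions and $\lambda(e)/\lambda(e')$ for rotations), and is rejected otherwise. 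This chain is irreducible (every matching communicates with the empty matching via deletions) and aperiodic (positive holding probability), and the Metropolis rule makes it reversible with respect to $\nu$; hence $\nu$ is its unique stationary distribution.

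Second, and this is the crux, I would bound the mixing time via canonical paths (a special case of the multicommodity flow method). For each ordered pair of matchings $(I,F)$ route a canonical path $\gamma_{IF}$ in the transition graph: consider the symmetric difference $I\oplus F$, which decomposes into vertex-disjoint simple paths and even cycles; process these components in a fixed order (say by smallest vertex) and ``unwind'' each in turn by a short sequence of the elementary moves above, always staying within $O(1)$ edges of being a matching. The key estimate is a bound on the congestion $\bar\varrho=\max_{t}\frac{1}{\nu(M)P(M,M')}\sum_{(I,F):\,t\in\gamma_{IF}}\nu(I)\,\nu(F)\,|\gamma_{IF}|$ over transitions $t=(M\to M')$. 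This is controlled by an encoding argument: from $t$ together with the identity of the component being processed one can reconstruct $(I,F)$ up to small ambiguity, and the ratio $\nu(I)\nu(F)/\nu(M)$ is bounded by the product of $\lambda$ over at most two ``extra'' edges, which is exactly where $\lambda'=\max\{\max_{u,v}\sum_{e\ni\{u,v\}}\lambda(e),1\}$ enters, since parallel edges between the same pair of vertices must be summed. Carrying this through yields $\bar\varrho=\poly(n,\lambda')$, hence by the standard spectral-gap bound a gap of $1/\poly(n,\lambda')$ and a mixing time $\tau_{\mathrm{mix}}(\epsilon)=\poly(n,\lambda')\cdot\log\bigl(1/(\epsilon\,\nu_{\min})\bigr)$.

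Third I would clean up the multigraph bookkeeping. Since $\nu_{\min}\ge\lambda(M_{\max})/(\text{total weight})$ and both numerator and denominator are products of at most $n/2$ activities, $\log(1/\nu_{\min})=O(n\log\max\{\lambda',1\})$, which is absorbed into $\poly(n,\lambda')$; parallel edges are handled either by the summed-activity convention above or by contracting each parallel class to a single weighted edge for the purposes of the chain. Running the chain from the empty matching for $\tau_{\mathrm{mix}}(\epsilon)$ steps and outputting the final state produces a sample from a distribution $\nu'$ with $\|\nu-\nu'\|_{\mathrm{TV}}\le\epsilon$, and each step costs $\poly(n)$ time, giving total running time $\poly(n,\lambda',\log\epsilon^{-1})$. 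The main obstacle is the congestion bound in the second step: making the canonical-path / encoding argument go through with only polynomial loss, and in particular accounting for edge multiplicities so that the blow-up is exactly the stated $\lambda'$ and no worse. Everything else — ergodicity, reversibility, and the passage from a spectral gap to mixing-time and running-time bounds — is routine.
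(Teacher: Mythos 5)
This statement is quoted from Jerrum and Sinclair~\cite{Alistair1} (Corollary~4.3), and the paper does not prove it: its only original content is the remark immediately following, which reduces the multigraph case to the simple-graph case by merging each parallel class $e_1,\ldots,e_\ell$ into one edge of activity $\sum_i\lambda(e_i)$, sampling in the resulting simple graph, and then lifting back by choosing a representative parallel edge with probability $\lambda(e_i)/\sum_i\lambda(e_i)$. You instead set out to re-derive the cited result from scratch, and your plan is exactly the Jerrum--Sinclair machinery itself: the add/delete/rotate Metropolis chain reversible with respect to $\nu$, canonical paths through the symmetric difference $I\oplus F$, and an encoding argument bounding the congestion by $\poly(n,\lambda')$. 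As an outline this is faithful to how the cited theorem is actually proved, so nothing in your approach is wrong in spirit; but be aware that the entire difficulty of the statement lives in the congestion/encoding bound you defer to ``the main obstacle,'' so as written your argument is a plan for reproving \cite{Alistair1} rather than a proof, whereas the paper simply invokes it as a black box.

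Three concrete points deserve attention if you pursue your route. First, your bound $\nu_{\min}\ge\lambda(M_{\max})/Z$ is stated the wrong way around ($\nu_{\min}$ corresponds to the \emph{minimum}-weight matching, which can be astronomically small if some activity is tiny); the correct and standard fix is to bound the mixing time in terms of the probability of the \emph{start} state, and starting from the empty matching gives $\log(1/\nu(\emptyset))=\log Z=O\bigl(n^2+n\log(1+\lambda')\bigr)$, which suffices. Second, if you run the chain directly on the multigraph, the uniform choice of an edge and hence the transition probabilities (and so the congestion) pick up factors of the total edge multiplicity, which is not controlled by $n$ and $\lambda'$ alone; the clean way to get the stated $\poly(n,\lambda')$ bound is precisely the paper's contraction, running the chain on a simple graph with at most $\binom{n}{2}$ edges of activity at most $\lambda'$. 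Third, if you do contract, you must add the lifting step back to the multigraph (selecting $e_i$ within its parallel class with probability $\lambda(e_i)/\sum_j\lambda(e_j)$), which your sketch omits but which is needed for the output distribution to be the hard-core distribution on matchings of $G$ rather than of the contracted graph.
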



\begin{remark}
\cite{Alistair1} establishes this result for matchings in (simple) graphs.  However, the extension
to multigraphs is immediate: make the graph simple by replacing each set of multiple edges $e_1,\ldots,e_\ell$ between a pair of vertices $u,v$ by a single edge~$e$ of activity~$\lambda(e)=\sum_i \lambda(e_i)$; then use the algorithm to sample a matching from the hard-core distribution
in the resulting simple graph;
finally, for each edge $e=\{u,v\}$ in this matching, select one of the corresponding multiple
edges $e_i\ni\{u,v\}$ with probability $\lambda(e_i)/\sum_i\lambda(e_i)$.  Note that the
running time will depend polynomially on the maximum activity~$\lambda'$ in the simple graph,
as claimed.
\end{remark}

\blue{Note that, via a standard argument, the algorithm of  Theorem~\ref{sampling_matchings} can be used to design a \emph{fully-polynomial randomized approximation scheme (f.p.r.a.s.)}  for the \emph{partition function} of a hard-core probability distribution on the matchings of a multigraph $G$ — namely,  for the quantity $Z_{\lambda}(G) =  \sum_{M \in \mathcal{M}(G) } \lambda(M) $.

\begin{theorem}[\cite{Alistair1}, Corollary~4.4]\label{fpras_matchings}
Let $G$ be a multigraph,  $\{\lambda(e)\}_{e\in E(G)}$ a vector of activities associated
with the edges of~$G$, and $Z_{\lambda}(G)$ the corresponding partition function.  Let $n=|V(G)|$ be the number of vertices of $G$ and define 
$\lambda' = \max \{   \max_{u,v \in V(G) } \sum_{ e \ni \{u, v \} } \lambda(e) ,1  \} $. There exists an algorithm that, for any $\epsilon > 0$, runs in time $\mathrm{poly}(n,\lambda', 1/\epsilon)$ and outputs a quantity $\widetilde{Z}_G(\lambda)$ such that
$ \Pr \left(  (1-\epsilon)Z_{G}(\lambda)     \le \widetilde{Z}_G(\lambda)     \le (1+\epsilon) Z_{G}(\lambda)   \right) \ge 3/4.$ 
\end{theorem}

\begin{remark}\label{repetition}
The estimate in Theorem~\ref{fpras_matchings} could be arbitrarily bad with probability $1/4$. However, this probability can be
reduced to any desired  $\delta > 0$ by performing $O( \log \delta^{-1}) $ trials and taking the median.
\end{remark}
Theorem~\ref{fpras_matchings} allows us to design  a f.p.r.a.s. for the edge-marginals of a hard-core probability distribution  on the matchings of a multigraph $G$.
\begin{corollary}\label{easy_cor}
Let $G$ be a multigraph,  $\{\lambda(e)\}_{e\in E(G)}$ a vector of activities associated
with the edges of~$G$, and $\nu$ the corresponding hard-core distribution.  Let $n=|V(G)|$ be the number of vertices of $G$ and define 
$\lambda' = \max \{   \max_{u,v \in V(G) } \sum_{ e \ni \{u, v \} } \lambda(e) ,1  \} $. There exists an algorithm that, for any edge $e$, $\epsilon > 0$ and $\delta > 0$,  runs in time $\mathrm{poly}(n,\lambda', 1/\epsilon, \log \delta^{-1})$ and outputs a quantity $\widetilde{\nu}_e$ such that
$ \Pr \left(  (1-\epsilon) \nu_e \le  \widetilde{\nu}_e  \le ( 1+\epsilon)  \nu_e  \right) \ge 1- \delta$, where $\nu_e$ is the marginal of $\nu$ at $e$.
\end{corollary}
\begin{proof}
Let $G_e$ be the mutligraph  obtained by removing $e$ along with every other edge of $G$ adjacent to it. Let $Z_{\lambda}(G)$, $Z_{\lambda}(G_e)$ denote the partition functions corresponding to multigraphs $G, G_e$ with respect to $\{\lambda(e)\}_{e\in E(G)}$, respectively. Observe now that $\nu_e = \lambda(e) \cdot Z_{\lambda}(G_e)/ Z_{\lambda}(G)  $. Using the f.p.r.a.s. promised by Theorem~\ref{fpras_matchings} (and Remark~\ref{repetition}) to get appropriately accurate estimates for $Z_{\lambda}(G), Z_{\lambda}(G_e) $, we directly obtain an estimate for $\nu_e$ that satisfies the  guarantees of Corollary~\ref{easy_cor}.

\end{proof}

Finally, one can use Theorem~\ref{fpras_matchings}  as a subroutine in the algorithm of Singh and Vishnoi~\cite{singh2014entropy} to obtain the following result. 

\begin{corollary}\label{computing_the_dist}
Let $G$ be a multigraph on $n$ vertices and let $\delta \in (0,1)$ be a parameter. Let $\nu = \nu_{\delta}$ be the hard-core probability distribution over the matchings of $G$ promised by Theorem~\ref{hardcore_dist}. 
For every $\eta > 0$, there exists a $\mathrm{poly }(n, \log \eta^{-1}, \log \delta^{-1} )$-time algorithm that computes a set of edge activities $ \{ \lambda'(e) \}_{e \in E(G)} $ such that  the corresponding hard-core distribution $\nu'$ satisfies $\|  \nu - \nu'  \|_{\mathrm{ TV} } \le \eta $.
\end{corollary}
\begin{proof}
Corollary~\ref{computing_the_dist} follows in a straightforward way from the main results of Singh and Vishnoi~\cite{singh2014entropy} and  Jerrum and Sinclair~\cite{Alistair1}. Briefly, the main result of~\cite{singh2014entropy} states that finding a distribution that approximates $\nu$ can be seen as the solution of a max-entropy distribution estimation problem which can be efficiently solved given a ``generalized  counting oracle" for~$\nu$. The latter oracle is provided by Theorem~\ref{fpras_matchings}.
\end{proof}

}

\section{Causality, Lopsidependency and Approximate Resampling Oracles}\label{connection}

In this section we show  a connection between Theorem~\ref{generalLLLL} and  Theorem~\ref{our_theorem}.  While this section is not essential to the proof of our main results, it does provide useful intuition  since it implies the following natural approach to making applications of the Lopsided LLL algorithmic: we start designing a local search algorithm for addressing the flaws that correspond to bad events by considering the family of probability distributions $\{ \rho_i(\sigma,\cdot) \}_{i \in [m],  \sigma \in f_i}$ whose supports induce a causality graph that coincides with the lopsidependency graph of the Lopsided LLL application of interest. This is typically  a straightforward task. The key to successful  implementation is our ability to make the way in which the algorithm addresses flaws sufficiently  \emph{compatible} with the underlying probability measure $\mu$. To make this precise, we first recall an algorithmic interpretation of the notion of {\it charges\/} defined in~\eqref{charge_def}.

As shown in~\cite{AIK}, the charge $\gamma_i$ captures the compatibility between the actions of the algorithm for addressing flaw $f_i$ and the measure~$\mu$.  To see this,  consider the probability, $\nu_i(\tau)$, of ending up in state $\tau$ after (i) sampling a state $\sigma \in f_i$ according to $\mu$, and then (ii) addressing $f_i$ at $\sigma$. Define the \emph{distortion} associated with $f_i$ as
\begin{align}
d_i :=  \max_{\tau \in \Omega } \frac{\nu_i (\tau) }{\mu (\tau) }  \ge 1,
\end{align}
i.e., the maximum possible inflation of a state probability incurred by addressing $f_i$ (relative to its probability under $\mu$, and averaged over the initiating state $\sigma \in f_i$ according to $\mu$). Now observe from~\eqref{charge_def} that
\begin{align}\label{eq:charges}
\gamma_i =  \max_{\tau \in \Omega}  \frac{1}{ \mu(\tau) } \sum_{ \sigma \in f_i } \mu( \sigma )  \rho_i(\sigma, \tau ) =  d_i \cdot  \mu(f_i). 
\end{align}
An algorithm for which $d_i = 1$ is called a \emph{resampling oracle}~\cite{HV} for $f_i$,  and notice that it perfectly removes the conditional of the addressed flaw. However, designing resampling oracles for sophisticated measures can be impossible  by local search. This is because small, but non-vanishing, correlations can travel arbitrarily far in $\Omega$. Thus, allowing for some distortion can be very helpful, especially in cases where correlations decay with  distance.


Theorem~\ref{causality_lopsi} below shows that Theorem~\ref{our_theorem} is the  algorithmic counterpart of Theorem~\ref{generalLLLL}.
\begin{theorem}\label{causality_lopsi}
Given a family of flaws $F = \{f_1, \ldots, f_m \}$ over a state space $\Omega$, an algorithm $\mathcal{A}$ with causality graph $C$ with neighborhoods $\Gamma( \cdot)$, and  a measure $\mu$ over $\Omega$, then for each $S \subseteq F \setminus \Gamma(i)$ we have 
\begin{align}\label{connection_rel}
 \mu\Bigl(f_i \mid \bigcap_{j \in S} \overline{f_j}\Bigr)  \le \gamma_i ,
\end{align}
where the $\gamma_i$ are the charges of the algorithm as defined in~\eqref{charge_def}.
\end{theorem}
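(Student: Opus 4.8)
The plan is to bound $\mu\bigl(f_i \mid \bigcap_{j\in S}\overline{f_j}\bigr)$ directly via the definition of $\gamma_i$ in~\eqref{charge_def}, exploiting the fact that $S$ avoids the causality neighborhood $\Gamma(i)$. Write $E_S := \bigcap_{j\in S}\overline{f_j}$ for the event we condition on, so the goal is $\mu(f_i \cap E_S) \le \gamma_i\,\mu(E_S)$. Since $\rho_i(\sigma,\cdot)$ is a probability distribution for each $\sigma\in f_i$, we have $\sum_{\tau\in\Omega}\rho_i(\sigma,\tau)=1$, hence
\[
\mu(f_i \cap E_S) \;=\; \sum_{\sigma\in f_i\cap E_S}\mu(\sigma)
\;=\; \sum_{\sigma\in f_i\cap E_S}\mu(\sigma)\sum_{\tau\in\Omega}\rho_i(\sigma,\tau)
\;=\; \sum_{\tau\in\Omega}\ \sum_{\sigma\in f_i\cap E_S}\mu(\sigma)\,\rho_i(\sigma,\tau).
\]

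The key step is to restrict the inner sum to states $\tau$ that still lie in $E_S$. I claim that if $\sigma\in f_i\cap E_S$ and $\rho_i(\sigma,\tau)>0$ (so $\sigma\xrightarrow{i}\tau$ is a legal transition), then $\tau\in E_S$ as well, i.e.\ $\tau\notin f_j$ for every $j\in S$. Suppose not: then $\tau\in f_j$ for some $j\in S$. Since $j\in S\subseteq F\setminus\Gamma(i)$ we have $i\not\sim j$, and in particular $i\ne j$ (as $S\subseteq[m]\setminus\{i\}$ in the hypothesis of Theorem~\ref{generalLLLL}; more directly, $i\notin\Gamma(i)$). But $\sigma\in E_S$ gives $\sigma\notin f_j$. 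So the transition $\sigma\xrightarrow{i}\tau$ has $f_j\ni\tau$, $f_i\ne f_j$, and $f_j\not\ni\sigma$ — precisely the conditions under which $f_i$ causes $f_j$ in the Causality definition. Hence $j\in\Gamma(i)$, contradicting $j\notin\Gamma(i)$. Therefore every $\tau$ contributing to the double sum lies in $E_S$, and we may write
\[
\mu(f_i\cap E_S)\;=\;\sum_{\tau\in E_S}\ \sum_{\sigma\in f_i\cap E_S}\mu(\sigma)\,\rho_i(\sigma,\tau)
\;\le\;\sum_{\tau\in E_S}\ \sum_{\sigma\in f_i}\mu(\sigma)\,\rho_i(\sigma,\tau),
\]
where the inequality just drops the restriction $\sigma\in E_S$ from the inner sum (all terms are nonnegative).

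To finish, bound each inner sum using the charge: by the definition~\eqref{charge_def} of $\gamma_i$ as a maximum over $\tau$, for every $\tau\in\Omega$ we have $\sum_{\sigma\in f_i}\mu(\sigma)\,\rho_i(\sigma,\tau)\le\gamma_i\,\mu(\tau)$. Summing this over $\tau\in E_S$ gives $\mu(f_i\cap E_S)\le\gamma_i\sum_{\tau\in E_S}\mu(\tau)=\gamma_i\,\mu(E_S)$, which upon dividing by $\mu(E_S)$ (assumed positive, else the conditional probability is undefined or vacuously bounded) yields~\eqref{connection_rel}.

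The only genuinely delicate point is the claim that legal transitions out of $f_i\cap E_S$ stay inside $E_S$; everything else is bookkeeping. The subtlety there is making sure the Causality definition applies with the correct case: one must check both that $f_j\not\ni\sigma$ (which comes from $\sigma\in E_S$) and that $f_i\ne f_j$ (which holds because $i\notin\Gamma(i)$ while $j\in S$ would have to equal some index not in $\Gamma(i)$, and $i\in\Gamma(i)$ is false but $i$ is excluded from $S$ regardless). Once that case analysis is pinned down, the argument is exactly the ``resampling-oracle'' style telescoping used throughout the flaws–actions literature, specialized to the distortion-tolerant charge $\gamma_i=d_i\cdot\mu(f_i)$ identified in~\eqref{eq:charges}.
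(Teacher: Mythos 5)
Your proposal is correct and follows essentially the same route as the paper's proof: expand $\mu(f_i\cap F_S)$ using that each $\rho_i(\sigma,\cdot)$ sums to one, observe via the causality definition and $S\subseteq F\setminus\Gamma(i)$ that all transitions out of $f_i\cap F_S$ land in $F_S$, then swap the order of summation and bound the inner sum by $\gamma_i\,\mu(\tau)$. The only difference is that you spell out the causality step (which the paper asserts in one line) in more detail; note that condition (ii) of the Causality definition is a disjunction, so $f_j\not\ni\sigma$ alone suffices and the worry about $f_i\ne f_j$ is unnecessary.
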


\begin{proof}

Let $F_S :=  \bigcap_{j \in S}  \overline{f_j} $.
Observe that
\begin{eqnarray}
 \mu(f_i \mid F_S)                 & = & \frac{ \mu( f_i \cap  F_S  )   }{ \mu( F_S  ) }  \nonumber \\
 								         & = & \frac{ \sum_{\sigma \in f_i \cap F_S}  \mu(\sigma) \sum_{ \tau \in a(i,\sigma)}  \rho_i(\sigma,\tau)  }{ \mu(  F_S )   }   \nonumber \\
 									 & = &  \frac{ \sum_{\sigma \in f_i \cap  F_S }  \mu(\sigma) \sum_{ \tau \in  F_S}  \rho_i(\sigma,\tau)  }{ \mu(  F_S)   }  \label{eqn:star},
\end{eqnarray}									 
where the second equality holds because each $\rho_i(\sigma,\cdot)$ is a probability distribution, and  the third by the definition of causality and the fact that $S \subseteq F \setminus \Gamma(i)$.	Now notice that changing the order of summation in~\eqref{eqn:star} gives			 
\begin{align*}						 
&\frac{ \sum_{ \tau \in F_S} \sum_{\sigma \in f_i \cap  F_S}  \mu(\sigma)  \rho_i(\sigma,\tau)  }{ \mu(  F_S)   }  \\
  & =   \frac{ \sum_{ \tau \in  F_S}  \mu(\tau) \sum_{\sigma \in f_i \cap  F_S}   \frac{ \mu(\sigma) }{ \mu(\tau) }  \rho_i(\sigma,\tau)  }{ \mu(  F_S)   }     \\
  &\le  \frac{ \sum_{ \tau \in  F_S} \mu(\tau) \left( \max_{\tau' \in \Omega} \sum_{\sigma \in f_i  }   \frac{\mu(\sigma)}{\mu(\tau') }   \rho_i(\sigma,\tau') \right) }{ \mu( F_S)   }   \\
  & = \gamma_i .
\end{align*}
\end{proof}

In words, Theorem~\ref{causality_lopsi} shows that causality graph $C$ is a lopsidependency graph  with respect to measure $\mu$ with $b_i = \gamma_i$ for all $i \in [m]$.  Thus, when designing  an algorithm for an application of Theorem~\ref{generalLLLL} using Theorem~\ref{causality_lopsi}, we have to make sure that the induced causality graph coincides with the lopsidependency  graph, and that the measure distortion induced when addressing flaw $f_i$ is sufficiently small so that the resulting charge $\gamma_i$ is bounded above by $b_i$.

\section{Edge Coloring Multigraphs: Proof of Theorem~\ref{main}}\label{main_1}

We follow the exposition of the proof of Kahn in~\cite{mike_book}. \blue{Note that throughout the proof we assume that the maximum degree $\Delta$ of  the input multigraph $G$ satisfies $\Delta \ge \Delta_0$ for some appropriately large constant $\Delta_0$.}

 The key to the proof of Theorem~\ref{main} is the following lemma.
\begin{lemma}\label{main_lemma}
For all $\epsilon >0$, there exists $\chi_0 = \chi_0(\epsilon)$  such that if $ \chi_e^*(G) \ge \chi_0$ then we can find $N = \lfloor \chi_e^*(G)^{ \frac{3}{4} }  \rfloor $ matchings in $G$ whose deletion leaves a multigraph $G'$ with $\chi_e^*(G') \le \chi_e^*(G) - (1+\epsilon)^{-1} N $ in expected $\mathrm{poly}(n, \ln \frac{1}{ \epsilon} )$ time. 
\end{lemma}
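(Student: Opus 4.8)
The plan is to implement one round of Kahn's semi-random "color by matchings" procedure and show that it can be carried out algorithmically via Theorem~\ref{our_theorem} (or Theorem~\ref{clique_decomp}) applied to a suitable local search algorithm. First I would set up the probabilistic experiment: by Theorem~\ref{Alistair_hard_core} and Theorem~\ref{hardcore_dist}, fix a small auxiliary parameter (a function of $\epsilon$) and obtain a hard-core distribution $\nu$ on matchings of $G$ with near-uniform marginals $\frac{1-\delta}{\chi_e^*}$, bounded activities ($\lambda(e)\le K/\chi_e^*$), and the crucial correlation-decay property (part (b)). We then independently draw $N=\lfloor(\chi_e^*)^{3/4}\rfloor$ matchings $M_1,\dots,M_N$ from $\nu$ (approximately, using Theorem~\ref{sampling_matchings}; the $\poly$ activity bound from part (a) makes this efficient) and delete them from $G$. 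The goal is to show that with the stated probability the resulting $G'$ has $\chi_e^*(G')\le\chi_e^*-(1+\epsilon)^{-1}N$. By Theorem~\ref{Alistair_hard_core} it suffices to exhibit a hard-core distribution on $G'$ with uniform marginals $\frac{1}{\chi_e^* - (1+\epsilon)^{-1}N}$; equivalently (after scaling), one shows that deleting $N$ random matchings reduces every vertex-degree and every "dense subset" ratio $\Gamma$ by roughly $N\cdot\frac{1-\delta}{\chi_e^*}$ in a suitably controlled, concentrated way. The useful random variable to track at each vertex $v$ is $d_{G'}(v)=d_G(v)-\#\{i: v\text{ matched in }M_i\}$, and one needs each such quantity (and the analogous quantities certifying the matching-polytope constraints) to be concentrated around its mean.

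The heart of the argument — and where the algorithm enters — is establishing this concentration \emph{constructively}. I would define flaws indexed by vertices (and by the odd-set / dense-subgraph constraints needed to bound $\chi_e^*(G')$): flaw $f_v$ holds when the number of the $M_i$ covering $v$ deviates too far from its expectation $N\cdot\frac{1-\delta}{\chi_e^*}$. The state space is $\Omega=\mathcal{M}(G)^N$ with product measure $\mu=\nu^{\otimes N}$. The local search algorithm, when addressing a flaw $f_v$, resamples — from the conditional distributions induced by $\nu$ — the matchings $M_i$ only within a bounded-radius ball $B_t(v)$ around $v$, where $t=t(\epsilon)$ is the distance from Theorem~\ref{hardcore_dist}(b). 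Because the resampling is confined to $B_t(v)$, flaw $f_v$ can only cause flaws $f_u$ for $u$ within distance $2t$ of $v$; since degrees are bounded (here I would use, or first reduce to, the case of bounded maximum degree, or bound the number of relevant vertices locally), the causality neighborhoods $\Gamma(i)$ are small. The charges $\gamma_i=d_i\cdot\mu(f_i)$ are controlled by two facts: $\mu(f_i)$ is tiny by a standard concentration/Chernoff bound on the sum of $N$ nearly-independent indicators (this is exactly the step Kahn does with the Local Lemma + Azuma, now made quantitative), and the distortion $d_i$ is close to $1$ because resampling inside $B_t(v)$ is an \emph{approximate} resampling oracle — Theorem~\ref{hardcore_dist}(b) says conditioning on the configuration outside the ball perturbs marginals by at most $(1\pm\epsilon)$, so the measure distortion per flaw is $1+O(\epsilon)$. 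Feeding $\gamma_i\le(1+O(\epsilon))\mu(f_i)$ and the small neighborhoods into condition~\eqref{generalAlgoLLL} with a uniform choice $x_i=$ (something like $\sqrt{\mu(f_i)}$) verifies the hypotheses of Theorem~\ref{our_theorem}.

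To get \emph{polynomial} running time and the high-probability bound $1-n^{-c}$ rather than just positive probability, I would invoke Theorem~\ref{clique_decomp}: the causality graph decomposes into $n$ cliques (one local cluster per vertex), the algorithm is commutative because resampling in disjoint balls commutes (the product structure of $\mu$ and disjoint supports give $A_iA_j=A_jA_i$ when $i\nsim j$), and $\max_\sigma\theta(\sigma)/\mu(\sigma)$ is polynomial since the initial distribution is $\nu^{\otimes N}$ itself (or close to it by Theorem~\ref{sampling_matchings}), so $T_0$ is polynomial. Then a standard repetition/amplification argument boosts the success probability to $1-n^{-c}$. Finally, from a flawless state — one where every vertex is covered by $(1\pm o(1))N\cdot\frac{1-\delta}{\chi_e^*}$ of the matchings and the analogous bounds hold for the odd-set constraints — I would verify by Edmonds' characterization that $G'$ admits the required fractional coloring, concluding $\chi_e^*(G')\le\chi_e^*-(1+\epsilon)^{-1}N$ after choosing $\delta$ small enough relative to $\epsilon$ and $\chi_0$ large enough to absorb lower-order terms.

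I expect the main obstacle to be the bookkeeping that simultaneously (i) keeps the resampling radius $t=t(\epsilon)$ a \emph{constant} while (ii) controlling the charges tightly enough that condition~\eqref{generalAlgoLLL} holds with the specific slack needed, and (iii) correctly formulating the flaws for the $\Gamma$-type (odd-set) constraints — not just the degree constraints — since $\chi_e^*(G')=\max(\Delta(G'),\Gamma(G'))$ and one must certify the drop in \emph{both} quantities constructively. A secondary technical point is handling the approximate sampling error from Theorem~\ref{sampling_matchings}: one must ensure the total-variation error can be driven down to $1/\poly(n)$ (hence the $\ln(1/\epsilon)$-type dependence in the running time) without disturbing the charge estimates, which should follow from a coupling argument but needs to be stated carefully.
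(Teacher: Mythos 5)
Your plan is essentially the paper's: the same state space $\mathcal{M}(G)^N$ with product measure $\mu=\nu^{\otimes N}$, vertex flaws plus odd-set flaws, local resampling of each matching inside a radius-$t(\epsilon)$ ball via the conditional hard-core distribution, commutativity and the clique decomposition of Theorem~\ref{clique_decomp} for polynomial expected time, and Edmonds' characterization to certify $\chi_e^*(G')\le c^*$ from a flawless state. The one substantive soft spot in your charge bound is the claim that ``the distortion $d_i$ is close to $1$ because conditioning outside the ball perturbs marginals by $(1\pm\epsilon)$, hence $\gamma_i\le(1+O(\epsilon))\mu(f_i)$.'' This does not follow: the distortion in~\eqref{eq:charges} is a worst-case ratio of \emph{state} probabilities, and after resampling the ball the configuration outside still carries the conditioning on $f_i$, so $d_i$ is essentially $\max_\tau \mu\bigl(f_i\mid \mathrm{outside}=\mathrm{outside}(\tau)\bigr)/\mu(f_i)$, a ratio of two tiny quantities that need not be near $1$; moreover, Theorem~\ref{hardcore_dist}(b) controls only single-edge marginals under distant conditioning, not the conditional probability of an arbitrary local event. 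The correct route (the paper's) is to compute the charge exactly, using the spatial Markov property of the hard-core measure, obtaining $\gamma_{f_v}=\max_\tau \mu\bigl(f_v\mid Q_v(t,\sigma)=Q_v(t,\tau)\bigr)$, and then to rerun the Azuma/Chernoff concentration argument \emph{under the conditional measure}, using the correlation decay only to keep the conditional edge marginals at $v$ at least $(1-\delta)\frac{1-\delta}{\chi_e^*(G)}$. Your ingredients suffice for this, but the bound has to be organized as a conditional concentration estimate, not as ``small distortion times small unconditional probability.''

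The second gap is the odd-set constraints, which you flag as an obstacle but do not resolve, and which are needed in two distinct places. First, to keep the causality neighborhoods manageable one must show that, in the absence of degree flaws, only \emph{connected} odd subgraphs $H$ with fewer than $\Delta/((\epsilon/4)N)\le\Delta^{1/3}$ vertices can violate Edmonds' matching constraints; this caps the neighborhood size at $D=\Delta^{O(t+\Delta^{1/3})}$, which the charges $\mathrm{e}^{-\Omega(N)}=\mathrm{e}^{-\Omega(\Delta^{3/4})}$ can beat --- note there is no reduction to bounded maximum degree here ($\Delta\to\infty$), contrary to your parenthetical. Second, since there are exponentially many candidate subgraphs $H$, each step of the local search needs a polynomial-time way to \emph{find} a violated $f_H$; the paper uses the Padberg--Rao separation oracle, prioritizing $f_v$ flaws so that any violated matching constraint the oracle returns corresponds to a legitimate $f_H$ flaw. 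Without this flaw-search mechanism the per-step cost of your algorithm is not polynomial, so the lemma's running-time claim is not established by the proposal as written.
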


\begin{remark}
Since $\chi_e^*(G) = \mathrm{poly}(n)$, we may assume that $\epsilon  \ge \frac{1}{ \mathrm{poly}(n) }$ without loss of generality . Therefore, the expected running time of the algorithm promised by Lemma~\ref{main_lemma}  is $\mathrm{poly}(n)$.
\end{remark}

Using the algorithm of  Lemma~\ref{main_lemma} recursively, for every $\epsilon >0 $ we can efficiently find an edge coloring of $G$ using at most   $ (1+\epsilon) \chi_e^* + \chi_0$ colors as follows.  First, we compute $\chi_e^*(G)$ using the algorithm of Padberg and Rao~\cite{padberg1982odd}. If $\chi_e^* \ge  \chi_0$, then we apply Lemma~\ref{main_lemma} to get a multigraph $G'$ with $\chi_e^*(G') \le \chi_e^*(G) - (1+\epsilon)^{-1} N $. We can now color $G'$ recursively using at most $(1+\epsilon) \chi_e^*(G') + \chi_0 \le (1+\epsilon) \chi_e^*(G)  - N +\chi_0 $ colors.  Using one extra color for  each of the $N$ matchings promised by Lemma~\ref{main_lemma}, we can then complete the coloring of $G$, proving the claim.  In the base case where $\chi_e^*(G) < \chi_0$,    we color $G$ greedily using  $2\Delta-1 $ colors. The fact that
$ 2 \Delta -1  \le 2 \chi_e^* - 1 <  \chi_e^* + \chi_0$  concludes the proof  of Theorem~\ref{main} as the number of recursive calls is at most $n$.

In the following sections, we prove  Lemma~\ref{main_lemma}. In Section~\ref{chrom_algo} we describe the local search algorithm behind Lemma~\ref{main_lemma}, and in Section~\ref{natooos} we prove its convergence. In Sections~\ref{charge_lemma},~\ref{commutativity_lemma_proof}  we prove two important auxiliary lemmas that are used in our 
convergence proof.

\subsection{The Algorithm}\label{chrom_algo} 

Observe that we  only need to  prove Lemma~\ref{main_lemma}  for $\epsilon < \frac{1}{10}$ since, clearly, if it holds for $\epsilon$ then it holds for all $\epsilon' > \epsilon$.
So we fix $ \epsilon \in(0,0.1)$ and 
let $c^* = \chi_e^*(G) - (1+\epsilon)^{-1} N $. Our goal will be to delete $N$ matchings from $G$ to get a multigraph $G'$ which has fractional chromatic index at most $c^*$. 

\paragraph{The flaws.}
Let $\Omega = \mathcal{M}(G)^N $ be the set of  possible $N$-tuples of matchings of $G$. For a state $\sigma = (M_1, \ldots, M_N) \in \Omega$   let $G_{\sigma} $ denote the multigraph  obtained  by deleting the $N$ matchings $M_1, \ldots, M_N$ from $G$.  For a vertex $v \in V(G)$ we define $d_{G_{\sigma}}(v)$ to be the degree of $v$ in $G_{\sigma}$. We now define  the following flaws.  For every vertex $v \in V(G)$  let
\begin{align*}
    f_v   =    \left\{  \sigma \in \Omega: d_{G_{\sigma}}(v) >  c^*-   \frac{\epsilon}{4}  N \right\} .
\end{align*}
For every  connected  subgraph $H$ of $G$ with an odd number of vertices and such that (i)         $|V(H)| \le  \frac{8\Delta}{ \epsilon N } $, and (ii)  $|E(H)| >  \left( \frac{ |V(H)| -1 }{2 } \right) c^*$, let
\begin{align*}
  f_{H}      = \{ \sigma \in \Omega: H \subseteq G_{\sigma} \}  .
\end{align*}
The following lemma implies that it suffices to find a flawless state.  (This lemma was proved in~\cite{kahnChrom}, but we 
include a proof here for completeness.)

\begin{lemma}[\cite{kahnChrom}]\label{sufficient_flaws}
Any  flawless state $\sigma$ satisfies $\chi_e^*(G_{\sigma}) \le c^* $.
\end{lemma}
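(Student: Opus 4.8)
\textbf{Proof plan for Lemma~\ref{sufficient_flaws}.}
The plan is to use the Edmonds characterization $\chi_e^*(G_\sigma) = \max(\Delta(G_\sigma), \Gamma(G_\sigma))$, where $\Gamma(G_\sigma) = \max_{H \subseteq V(G_\sigma),\, |H| \ge 2} \frac{|E(G_\sigma[H])|}{\lfloor |H|/2 \rfloor}$, and to show that in a flawless state both terms on the right are at most $c^*$. Absence of the flaws $f_v$ immediately controls the first term: for every vertex $v$ we have $d_{G_\sigma}(v) \le c^* - \frac{\epsilon}{4} N < c^*$, so $\Delta(G_\sigma) < c^*$. The real work is bounding $\Gamma(G_\sigma)$, i.e.\ showing $|E(G_\sigma[H])| \le \lfloor |H|/2\rfloor \cdot c^*$ for every vertex subset $H$ with $|H| \ge 2$.

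First I would dispose of the case $|H|$ even (and the trivial $|H|=2$ case): writing $|H| = 2k$, a matching inside $G_\sigma[H]$ has at most $k$ edges, so since $\Delta(G_\sigma) < c^*$ one can bound the number of edges by, e.g., summing degrees — $|E(G_\sigma[H])| \le \frac{1}{2}\sum_{v \in H} d_{G_\sigma}(v) < \frac{1}{2}\cdot 2k \cdot c^* = k\cdot c^* = \lfloor |H|/2\rfloor c^*$. So the binding constraint is genuinely the odd case, $|H| = 2k+1$, where $\lfloor |H|/2 \rfloor = k$ and the degree-sum bound only gives $\frac{2k+1}{2}c^*$, which is too weak by roughly $c^*/2$. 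This is exactly the gap the flaws $f_H$ are designed to close, but note $f_H$ is only defined for \emph{connected} subgraphs $H$ with an odd number of vertices \emph{and} with the extra size restriction $|V(H)| \le \frac{\Delta}{(\epsilon/4)N}$. So the main steps for the odd case are: (i) reduce to connected $H$ — if $G_\sigma[H]$ is disconnected, decompose into connected components $H_1, \ldots, H_r$; since $\sum_j \lfloor |H_j|/2 \rfloor \le \lfloor (\sum_j |H_j|)/2 \rfloor$ and $|E(G_\sigma[H])| = \sum_j |E(G_\sigma[H_j])|$, it suffices to bound each component, so WLOG $H$ is connected; (ii) for connected $H$ with $|H|$ odd, if the size bound $|V(H)| \le \frac{\Delta}{(\epsilon/4)N}$ holds, then absence of $f_H$ directly gives $|E(G_\sigma[H])| \le \frac{|V(H)|-1}{2}c^* = k c^*$, as desired; (iii) the remaining case is a connected odd-order $H$ that is \emph{too large}, $|V(H)| > \frac{\Delta}{(\epsilon/4)N}$ — here we cannot invoke $f_H$, so we must again fall back on a counting argument, but now the large size of $H$ helps: using $\Delta(G_\sigma) < c^*$ and the degree-sum bound, $|E(G_\sigma[H])| \le \frac{1}{2}|V(H)|\,c^*$, and we need to improve the leading $\frac{1}{2}|V(H)|$ to $\frac{1}{2}(|V(H)|-1)$, i.e.\ save one unit of $c^*$; since $|V(H)| > \frac{\Delta}{(\epsilon/4)N} = \frac{4\Delta}{\epsilon N}$ and $c^* \le \chi_e^* \le$ something comparable to $\Delta$ up to the $\Gamma$ term, the ``$+\tfrac12 c^*$'' slack is dominated by a saving that comes from the fact that $c^* - \tfrac{\epsilon}{4}N$ (the actual degree bound from $f_v$) is strictly smaller than $c^*$, and multiplying this $\tfrac{\epsilon}{4}N$ saving per vertex over $|V(H)| > \frac{4\Delta}{\epsilon N}$ vertices recovers more than $\Delta \ge \tfrac12\chi_e^* \ge \tfrac12 c^*$.

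The step I expect to be the main obstacle is exactly (iii): making the arithmetic in the ``$H$ too large'' case work cleanly. The right way to organize it is to use the \emph{sharper} degree bound coming from absence of the $f_v$'s, namely $d_{G_\sigma}(v) \le c^* - \tfrac{\epsilon}{4}N$ for all $v$, giving $|E(G_\sigma[H])| \le \tfrac12 |V(H)| (c^* - \tfrac{\epsilon}{4}N)$, and then check that $\tfrac12 |V(H)|(c^* - \tfrac{\epsilon}{4}N) \le \tfrac12(|V(H)|-1)c^*$, which rearranges to $|V(H)| \cdot \tfrac{\epsilon}{4}N \ge c^*$; since $c^* \le \chi_e^* = \max(\Delta,\Gamma)$ and $\Gamma \le \chi_e^*$... one must be a little careful here, but in the regime of the lemma $\chi_e^*$ is comparable to $\Delta$ (indeed $\Delta \le \chi_e^*$ always, and the interesting direction is handled because $|V(H)| > \frac{4\Delta}{\epsilon N}$ forces $|V(H)|\cdot\tfrac{\epsilon}{4}N > \Delta$, while $c^* < \chi_e^*$ — so one needs $\Delta \ge c^*$, which may fail, in which case $\chi_e^* = \Gamma$ and a separate short argument bounding $c^*$ by $\Delta$ up to lower-order terms, or simply noting $c^* = \chi_e^* - (1+\epsilon)^{-1}N < \chi_e^*$ and $N = \lfloor \chi_e^{*3/4}\rfloor$ so $c^* \approx \chi_e^*$, combined with $\chi_e^* = \max(\Delta,\Gamma)$ and a global bound $\Gamma \le \Delta \cdot(1+o(1))$ on multigraphs with large $\chi_e^*$, closes it). I would present (iii) by first stating the clean target inequality $|V(H)|\cdot\tfrac{\epsilon}{4}N \ge c^*$ and then verifying it from the size hypothesis on $H$ together with the relation between $c^*$, $N$, and $\Delta$; everything else is bookkeeping.
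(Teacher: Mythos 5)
Your proposal follows the paper's proof of Lemma~\ref{sufficient_flaws} essentially step for step: Edmonds' characterization, the $f_v$-flaws giving the degree condition, the degree-sum bound for even sets, the reduction of a general odd set to its connected components, and the $f_H$-flaws handling small connected odd subgraphs. All of that matches the paper and is fine. The problem is your case (iii). You correctly isolate the target inequality $|V(H)|\cdot\tfrac{\epsilon}{4}N \ge c^*$, and correctly observe that the size hypothesis only yields $|V(H)|\cdot\tfrac{\epsilon}{4}N > \Delta$, so you need $c^*\le\Delta$; but the patch you propose for when this fails rests on the claim that $\Gamma \le (1+o(1))\Delta$ for multigraphs with large $\chi_e^*$, and that claim is false. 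A triangle in which each pair of vertices is joined by $\Delta/2$ parallel edges has $\max_{H}|E(H)|/\lfloor |H|/2\rfloor \ge \tfrac32 \Delta$ at every scale, so $\chi_e^*$ (and hence $c^*$) can exceed $\Delta$ by a constant factor no matter how large $\Delta$ is; ``bounding $c^*$ by $\Delta$ up to lower-order terms'' is therefore not available. (Your first-pass comparison ``recovers more than $\Delta \ge \tfrac12\chi_e^* \ge \tfrac12 c^*$'' also mixes a degree-sum saving with an edge-count deficit; the correct requirement, as you later derive, is $\Delta\ge c^*$, not $\Delta\ge\tfrac12 c^*$.) As written, step (iii) is a genuine gap.

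For context, the paper itself dispatches this case in one sentence: summing degrees, condition 2 can fail only for $H$ with fewer than $\frac{\Delta}{(\epsilon/4)N}$ vertices. The degree-sum actually gives the threshold $\frac{c^*}{(\epsilon/4)N}$, so the paper is implicitly identifying that threshold with the cap in the definition of $f_H$, i.e.\ treating $c^*\le\Delta$; your instinct that the regime $c^*>\Delta$ deserves a remark is reasonable, but it has to be handled differently. Two legitimate ways to finish: (1) for $c^*>\Delta$ use the complementary bound $|E_{G_\sigma}(H)|\le \tfrac12 |V(H)|\Delta$ (already valid in $G$), which makes the odd-set condition automatic once $|V(H)|\ge \frac{c^*}{c^*-\Delta}$; combined with $\chi_e^*\le\tfrac32\Delta$ and the sharper bound $c^*-\tfrac{\epsilon}{4}N$ in the remaining narrow range $\Delta<c^*<\Delta+O(\epsilon N)$, this covers every connected odd $H$ above a size cap that is larger than the paper's by at most a constant factor; or (2) simply observe that the size cap in $f_H$ may be taken to be $\frac{c^*}{(\epsilon/4)N}\le\frac{3\Delta/2}{(\epsilon/4)N}$, which makes the degree-sum step exact and changes nothing downstream (in particular part (b) of Lemma~\ref{termination} is unaffected). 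Either of these replaces the false $\Gamma\le(1+o(1))\Delta$ step; without such a replacement your proof of the large-$H$ case does not go through.
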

\begin{proof}
Edmonds' characterization~\cite{edmonds1965maximum} of the matching polytope implies that the chromatic index of $G_{\sigma}$ is at most $c^*$ if 
\begin{enumerate}
\item $\forall v:  d_{G_{\sigma} }(v) \le c^*$; and
\item  $\forall H \subseteq G_{\sigma} $ with an odd number of vertices: 
\begin{align*}
E(H) \le \frac{ |V(H)| -1 }{ 2}\, c^*.
\end{align*}
\end{enumerate}
Now clearly, addressing every flaw of the form $f_v$ establishes condition $1$. By summing degrees it also implies that, for every subgraph~$F$,
$|E(F)| \le \frac{ |V(F)|  (c^* -  \epsilon N/4)  }{2 }   \le \  \frac{| V(F)| }{2 }   c^* $.

Moreover, any odd subgraph $H$ can be decomposed into a connected component $H'$ with an odd number of vertices, and a 
(possibly empty) subgraph $F$ with an even number of vertices. 
Since there are no edges between $F$ and $H'$, in the absence of $f_v$ flaws we obtain
\begin{align*}
|E(H) |= |E(F) |  + |E(H')|  \le \frac{ | V(F)|  c^*}{2}  +|E(H') |.
\end{align*}
Thus it suffices to prove condition~2 for the connected odd subgraph~$H'$, for if  $|E(H') | \le ( |V(H')| -1 ) c^*/2 $  then we have
\begin{align*}
|E(H) | \le ( |V(F)| + |V(H')| -1 ) c^*/2  =  ( |V(H) |  -1 ) c^* /2.
\end{align*}
Now, again by summing degrees, we see that if no $f_v$ flaw is present then condition~$2$ can fail only for $H$ with fewer than $\frac{8\Delta}{ \epsilon N }$ vertices, concluding the proof. Indeed, in the absence of $f_v$ flaws, we have $|E(H)| \le |V(H) | (c^* - \epsilon N/4)/2 $ and, since $c^* \le \chi_e^*(G) \le  2 \Delta$, 
if  $|V(H) | (c^* - \epsilon N/4)/2  \ge  (|V(H)| -1)c^* /2 $ then $|V(H)| \le c^*/( ( \epsilon/4)N )    \le  8\Delta/ \epsilon N   $.
\end{proof}

To describe an efficient algorithm for finding flawless states we need to (i) determine the initial distribution of the algorithm and show that it is efficiently samplable; (ii) show how to address each flaw efficiently; (iii) show that the expected number of steps of the algorithm is polynomial;  and finally (iv) show that we can search for flaws in polynomial time, so that each step is efficiently implementable.

\paragraph{The initial distribution.} 
Apply Theorem~\ref{hardcore_dist} with $\delta = \frac{ \epsilon }{ 4} $. Let  $\nu$ be the promised hard-core probability distribution, $\lambda=\{\lambda(e)\}$ the vector of activities associated with it, and $K$ the corresponding constant.  Note that the activities~$\lambda(e)$ defining~$\nu$
are not readily available.
However, recalling Corollary~\ref{computing_the_dist} we see that we can efficiently compute a set of activities that gives an
arbitrarily good approximation to the desired distribution~$\nu$.

For a parameter $\eta >0$ and a distribution $p$, we  say that we \emph{$\eta$-approximately sample} from $p$ to  express the fact that we sample from a distribution $\tilde{p}$ such that $\| p - \tilde{p} \|_{\mathrm{TV} } \le \eta$. Set $\eta = \frac{1}{n^\beta}$, 
where $\beta$ is a sufficiently large constant to
be specified later,  and  let $\nu'$ be the distribution promised by Corollary~\ref{computing_the_dist}.  The initial distribution of our algorithm, $\theta$, is obtained by $\eta$-approximately sampling $N$ random matchings  (independently)  from~$\nu'$.  Observe that $\| \theta - \mu \|_{ \mathrm{TV}  }  \le 2 \eta N$,
where $\mu$ denotes the probability distribution over $\Omega$ induced by taking $N$ independent samples from $\nu$.

\paragraph{Addressing flaws.} 
For an integer $d> 0 $ and a connected subgraph $H$,  let $S_{ <d }(H) $ be the set of vertices within distance strictly less than~$d$ of a vertex $u \in V(H) $.    
Given a state $\sigma = (M_1, \ldots, M_N)  $, a subgraph $H$,  and  $d >0$  let 
 \begin{align*}
 Q_H(d,\sigma) = \left(M_1 - S_{<d}(H),  \ldots, M_N - S_{ <d}(H)  \right) ,
 \end{align*}
where we define $M- X = M \cap E(G-X) $. Moreover,   let $Q_H^i(d,\sigma) = M_i - S_{< d }(H)  $ denote the $i$-th entry of $Q_H(d,\sigma)$.  \blue{(In words, $Q_H^i(\sigma,d)$ is  the set of edges of $M_i$ with the property that both their endpoints are at distance at least $d$ from $H$.)} 
Finally, let $G_{<d+1}(H)$  be the multigraph induced by $S_{<d+1}(H)$ and   $ \mathcal{M}_{d+1}^i = \mathcal{M}_{d+1}^i(H,\sigma) $  be the set of matchings of $G_{<d+1}(H) $ that are ``compatible" with $Q_H^i(d,\sigma)$. That is, for any  matching $M$ in $ \mathcal{M}_{d+1}^i$ we have that $M \cup Q_H^i(d,\sigma) $ is also a matching of $G$. More specifically, note that $\mathcal{M}^i_{d+1}(H,\sigma) $ corresponds to  the set of matchings of the following multigraph $G_{i,<d+1}(H)$. Let $V_{i,d}$ denote the set of vertices of $S_{ <d+1}(H)$ that belong to edges in $Q_H^i(\sigma,d ) $. Multigraph $G_{i,<d+1}(H)$ is  induced by $S_{ <d+1}(H) \setminus V_{i,d}$.

 We  consider the  procedure {\sc Resample}  below which takes as input a connected subgraph $H$,
a state $\sigma$ and a positive integer $d \le n$, and which will be used to address flaws.

\begin{algorithm}
\begin{algorithmic}[1] 
\Procedure{Resample}{$H,\sigma,d$}
\State Let $\sigma = (M_1, M_2, \ldots, M_N)$
\For {$i=1$ to $N$}
	\State Let $p$ be the hard-core distribution over matchings in $\mathcal{ M}^i_{d+1}$  induced by  $ \{ \lambda'(e) \}_{ e \in E(G_{<d+1} ) }  $. \label{seven}
	\State $\eta$-approximately sample a matching $M$ from  distribution $p$ \label{eight}
	\State Let $M_i' =  M \cup Q_H^i(d,\sigma) $ \Comment By definition, $M_i'$ is a matching
\EndFor
\State Output $\sigma' = (M_1', M_2', \ldots, M_N')$
\EndProcedure
\end{algorithmic}
\end{algorithm}

 Throughout the proof, we fix the parameter 
\begin{align*}
t = 8 (K+1)^2 \delta^{-1} + 2.
\end{align*}
To address $f_v, f_H$ in state $\sigma$, we  invoke procedures {  \sc Resample $(  \{ v \}  ,\sigma, t)$ } and {\sc Resample $(  H ,\sigma, t)$}, respectively.


\paragraph{Searching for flaws.} 
Notice that we can compute $c^*$ in polynomial time using the algorithm of Padberg and Rao~\cite{padberg1982odd}. Therefore,  given a state $\sigma \in \Omega$, we can search for flaws of the form $f_v$ in polynomial time. However, the flaws of the form $f_H$ are potentially exponentially many, so a brute-force search does not suffice for our purposes. 

Fortunately, the result of Padberg and Rao provides  a polynomial time oracle for this problem as well. Recall Edmonds' characterization used in the proof of Lemma~\ref{sufficient_flaws}. The constraints over odd subgraphs~$H$ are called \emph{matching constraints}. Recall further that in the proof of Lemma~\ref{sufficient_flaws} we showed that, in the absence of $f_v$-flaws, the only matching constraints that could possibly be violated correspond to $f_H$ flaws. On the other hand, the
oracle of Padberg and Rao can decide in polynomial time  whether $G$ has a fractional $c$-coloring or return a  violated matching constraint, for every $c \ge 0$. Hence, if our algorithm prioritizes $f_v$ flaws over $f_H$ flaws, this oracle can be used to detect the latter in polynomial time.

\subsection{Proof of Lemma~\ref{main_lemma}}\label{natooos}

We are left to show that the expected number of steps of the algorithm is polynomial and that each step can be executed in polynomial time. To that end, we will show that both of these statements are true assuming that the initial distribution $\theta$ is $\mu$ instead of approximately $\mu$, and that in Lines~\ref{seven},~\ref{eight} of the procedure {\sc Resample$(H,\sigma,d)$}  we perfectly sample from the hard-core probability distribution induced by   activities $\{ \lambda(e) \}_{ e \in E(G_{i,<d}(H) ) }  $ instead of $\eta$-approximately sampling from $p$. We can maximally couple the approximate and ideal distributions, and then take the constant~$\beta$ in the definition of the approximation parameter $\eta$ to be sufficiently large. The latter  implies that the probability that the coupling will fail during the execution of the algorithm is negligible (i.e., at most $ \frac{1}{n^c}$). Since the fractional chromatic index of a multigraph can be computed in polynomial time, we can absorb the probability that the coupling fails into the polynomial expected running time by executing our algorithm sufficiently many times. That is, we execute our algorithm for a number of steps that is twice its expected running time, and if the edge coloring it produces is not a desirable one, we repeat the process. 


For an integer $d > 0$ and a vertex $v$, let $F_{d}(v)$ be the set of flaws indexed by a vertex of $S_{< d }  (v)$ or a subgraph $H$ intersecting $S_{< d}(v)$. For each set $H$ for which we have defined $f_H$ we let $F_{d}(H) = \bigcup_{v \in V(H) } F_{d}(v)$. For each flaw $f_v$ we define the causality neighborhood $\Gamma(f_v) =  F_{t+2}(v)$, and for each flaw $f_H$ we define $\Gamma(f_H) = F_{t+2}(H)$, where $t$ is as defined in the previous subsection. Notice that this is a valid choice because flaw $f_v$ can only cause flaws in $F_{t+1}(v)$ and flaw $f_H$ can only cause flaws in $F_{t+1}(H)$. The reason why we choose these neighborhoods to be larger than seemingly necessary is because, as we will see, with respect to this causality graph our algorithm is commutative, allowing us to apply Theorem~\ref{clique_decomp}.
\begin{lemma}\label{termination}
Let $f \in \{f_v, f_H \}$  for a vertex $v$ and a  connected subgraph $H$ of $G$ with an odd number of vertices and let  $D =  \Delta^{ t+ 2\Delta^{ \frac{1}{3}} +4 }  $.  We have:
\begin{enumerate}[(a)]

\item       $\gamma_f \le \frac{1}{2\mathrm{e} D }  $  ;

\item$ |\Gamma(f)| \le D$,
\end{enumerate}
where the charges are computed with respect to the measure $\mu$ and the algorithm that samples from the ideal distributions.
\end{lemma}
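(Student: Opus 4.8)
The plan is to prove parts (a) and (b) separately, with part (b) being essentially a counting/volume estimate and part (a) being the technical heart. For part (b), observe that $\Gamma(f) = S^*_{t+2}(\cdot)$ consists of flaws indexed by a vertex within distance $t+2$ of the relevant vertex set, or by a connected odd subgraph $H'$ intersecting that ball. Since $f_{H'}$ is only defined for $|V(H')|\le \Delta/((\epsilon/4)N) \le \Delta^{1/4}$ (using $N=\lfloor\chi_e^{*3/4}\rfloor$ and $\chi_e^*\ge\Delta$ up to constants), any such $H'$ is contained in the ball of radius $t+2+\Delta^{1/4}$ around the base vertex. The number of vertices in a radius-$r$ ball is at most $\Delta^r$ (crudely, $1+\Delta+\cdots+\Delta^{r-1}\le\Delta^r$ for $\Delta\ge 2$), and the number of connected subgraphs on $\le k$ vertices rooted in a ball is bounded by the number of rooted subtrees, which is singly exponential in $k\Delta$ — all of which is comfortably absorbed into $D=\Delta^{t+\Delta^{1/3}+4}$ for $\Delta$ large, since $\Delta^{1/4}\ll\Delta^{1/3}$. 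I would also pick up here the base-vertex set for $f_H$ (size $\le\Delta^{1/4}$) and a union bound over it, still absorbed into the $+4$ in the exponent.

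For part (a), recall from~\eqref{eq:charges} that $\gamma_f = d_f\cdot\mu(f)$. I would bound the two factors separately. To bound $\mu(f)$: under $\mu$ (i.e., $N$ independent samples from the ideal hard-core distribution $\nu$ with marginals $\frac{1-\delta}{c}$, $c=\chi_e^*$), the expected degree of $v$ in $G_\sigma$ is $d_G(v)(1-N\cdot\frac{1-\delta}{c})$, which is at most $c^* - \frac{\epsilon}{2}N + O(\text{lower order})$ by the choice $\delta=\epsilon/4$ and $c^*=\chi_e^*-(1+\epsilon)^{-1}N$ — so the event $f_v$ (degree exceeding $c^*-\frac{\epsilon}{4}N$) is a large-deviation event. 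Because the indicator $\mathbf 1[e\notin M_k]$ events are negatively correlated (hard-core distributions on matchings are negatively associated, or one can invoke the approximate-independence of Theorem~\ref{hardcore_dist}(b) together with a standard martingale/Azuma argument over the edges at $v$), a Chernoff-type bound gives $\mu(f_v)\le \exp(-\Omega(\epsilon^2 N))$; since $N=\Theta(\Delta^{3/4})$ this beats $\Delta^{-\Delta^{1/3}}$ for large $\Delta$. The same reasoning handles $f_H$: $|E(H)|$ in $G_\sigma$ concentrates below $\frac{|V(H)|-1}{2}c^*$ because each edge survives with probability $\approx 1-N(1-\delta)/c$, and a violated matching constraint is again exponentially unlikely in $N$.

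To bound the distortion $d_f$: this is where the decay-of-correlations property of Theorem~\ref{hardcore_dist}(b) is used, and I expect it to be the main obstacle. The procedure {\sc Resample}$(H,\sigma,t)$ only rewrites, in each matching $M_i$, the edges inside the ball $S_{<t+1}(H)$ (after deleting the vertices $V_{i,d}$ touched by boundary edges), resampling from the hard-core distribution on that subgraph, while freezing everything at distance $\ge t$. The key claim is that, because the conditional distribution of the interior under $\nu$ — conditioned on the frozen exterior — is within a $(1\pm\epsilon')$ multiplicative factor of the unconditional hard-core distribution on the same subgraph (this is exactly what $t=8(K+1)^2\delta^{-1}+2$-distance buys us via Theorem~\ref{hardcore_dist}(b), applied edge-by-edge and telescoped), the state probability under $\nu_f$ can inflate the $\mu$-probability only by a bounded multiplicative factor per matching, hence $d_f \le (1+\epsilon')^{O(\text{something})}$. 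I need to be careful that this factor does not grow with $n$: the resampled region has $\le\Delta^{t+\Delta^{1/4}+1}$ edges, so a naive edge-by-edge product of $(1\pm\epsilon')$ factors would be catastrophic. The resolution is that distortion is measured per \emph{state} $\tau$, not per edge, and the correct statement (as in~\cite{AIK,HV}) is that $d_f$ is bounded by the ratio of the conditional to unconditional probability of a \emph{single} reconfiguration of the whole interior, which by Theorem~\ref{hardcore_dist}(b) — now applied to the event ``the interior matching equals this specific configuration'', which is $t$-distant from the frozen part — is at most $\frac{1+\epsilon'}{1-\epsilon'}$ for a suitable $\epsilon'$, uniformly. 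I would therefore set $\epsilon'$ to a small constant (depending on $\zeta$) so that $\frac{1+\epsilon'}{1-\epsilon'}\le$ any prescribed constant $>1$, choose $\delta$ accordingly (consistent with $\delta=\epsilon/4$ by taking $\epsilon$ small, which is legitimate since we already reduced to $\epsilon<1/10$ and the lemma is monotone), and conclude $d_f\le C_\zeta$. Combining, $\gamma_f = d_f\mu(f) \le C_\zeta\exp(-\Omega(\Delta^{3/4})) \le \frac{1-\zeta}{\mathrm{e}D}$ since $D=\Delta^{t+\Delta^{1/3}+4}$ is only $\exp(O(\Delta^{1/3}\log\Delta))$, which is dominated by $\exp(\Omega(\Delta^{3/4}))$ for $\Delta$ large enough, giving the required $\Delta_\zeta$.
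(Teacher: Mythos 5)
Your part (b) counting and your bound on $\mu(f)$ are essentially fine in outline (for the latter, the right tool is the bounded-differences inequality over the $N$ independent matchings, as each matching changes $d_{G_\sigma}(v)$ by at most one — no negative association among the edge indicators at $v$ is needed or readily available; also your count of connected subgraphs as ``singly exponential in $k\Delta$'' is too weak as stated, since with $k=\Theta(\Delta^{1/4})$ that is $\exp(\Omega(\Delta^{5/4}))$, which is \emph{not} absorbed into $D$; the correct bound $(\mathrm{e}\Delta)^{k}=\Delta^{O(\Delta^{1/3})}$ is what absorbs). The genuine gap is the claim $d_f\le C_\zeta$. Unwinding the definitions for this algorithm, the distortion equals $\max_{\tau}\mu\bigl(f\mid Q(t,\sigma)=Q(t,\tau)\bigr)/\mu(f)$, i.e.\ the worst-case ratio between the conditional probability of the flaw given a frozen exterior configuration and its unconditional probability. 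Theorem~\ref{hardcore_dist}(b) controls only the marginal of a \emph{single edge} under conditioning on a $t$-distant event; it says nothing about the probability of a bulk event, neither ``the entire interior matching equals this specific configuration'' nor the large-deviation event $f$ itself. Probabilities of such events under two measures whose one-dimensional marginals agree to within $1\pm\epsilon'$ can differ by factors exponential in the number of coordinates involved (here, in $N$ or in the size of the resampled ball), so no uniform constant bound on $d_f$ follows from the cited tools, and the step as written fails.

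The paper avoids factorizing $\gamma_f=d_f\cdot\mu(f)$ into separately bounded pieces. Instead it proves the exact identity $\gamma_{f} = \max_{\tau}\mu\bigl(\sigma\in f\mid Q(t,\sigma)=Q(t,\tau)\bigr)$, using that the distribution used by {\sc Resample} on the ball is precisely the hard-core distribution compatible with the frozen exterior, so that the $\lambda$-weights cancel in the sum defining the charge; the product $d_f\mu(f)$ is thus evaluated as a single quantity. It then bounds this conditional probability directly (Lemma~\ref{basic_lemma}): Theorem~\ref{hardcore_dist}(b), applied edge-by-edge, shows each conditional edge marginal at $v$ is still at least $(1-\delta)\frac{1-\delta}{\chi_e^*(G)}$, hence the conditional expected degree stays below $c^*-\frac{\epsilon}{3}N$, and bounded differences over the (conditionally still independent) $N$ matchings gives $\exp(-\Omega(\epsilon^2 N))\le\frac{1-\zeta}{\mathrm{e}D}$. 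If you replace your two-factor bound by this ``conditional expectation plus conditional concentration'' argument — which your unconditional computation already contains in embryo — the proof goes through; as proposed, the distortion bound is the missing (and unprovable-by-these-means) step.
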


 \begin{lemma}\label{commutativity_lemma}
For each pair of flaws $f \nsim g$, the matrices $A_f, A_g$ commute.
\end{lemma}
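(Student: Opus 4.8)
\textbf{Proof plan for Lemma~\ref{commutativity_lemma}.}
The plan is to unfold Definition~\ref{commutative_algos} and verify directly that $A_f A_g = A_g A_f$ when $f \nsim g$, by showing that the two procedures \textsc{Resample} that implement the addressing of $f$ and $g$ act on disjoint portions of the state $\sigma = (M_1,\ldots,M_N)$, and hence the order in which they are applied is irrelevant. First I would recall that addressing $f_v$ (resp. $f_H$) invokes \textsc{Resample}$(\{v\},\sigma,t)$ (resp. \textsc{Resample}$(H,\sigma,t)$), which for each coordinate $i$ only modifies the edges of $M_i$ lying inside the multigraph induced by $S_{<t+1}(\cdot)$ — the edges in $E_{i,\ge d}$ (those reaching out past distance $t$) are copied verbatim, and the new matching on $S_{<t+1}$ is resampled from a hard-core distribution that depends only on the structure of $G$ restricted to that ball (after removing the ``boundary'' vertices $V_{i,d}$). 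So the randomness and the rewritten entries of coordinate $i$ are a function only of $\sigma$ restricted to $S_{<t+1}$ of the relevant vertex set.

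Next I would use the hypothesis $f \nsim g$. By the definition of the causality neighborhoods in the preceding subsection, $\Gamma(f_v) = S^*_{t+2}(v)$ and $\Gamma(f_H) = S^*_{t+2}(H)$; so $f \nsim g$ means, concretely, that the vertex sets underlying $f$ and $g$ are at distance strictly greater than $t+2$ in $G$. Consequently the balls $S_{<t+1}(\cdot)$ associated with $f$ and with $g$ are vertex-disjoint, and in fact there is no edge of $G$ joining the two balls. This is the crux: for each coordinate $i$, the edges of $M_i$ that \textsc{Resample} for $f$ touches (reads or rewrites) are confined to the $f$-ball, while those \textsc{Resample} for $g$ touches are confined to the $g$-ball, and these are separated. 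Hence the two operations commute coordinatewise as randomized maps on matchings: applying $f$ then $g$ produces the same joint distribution over $(M_1',\ldots,M_N')$ as applying $g$ then $f$, because each coordinate decomposes as (fixed far-away part) $\cup$ ($f$-ball resample) $\cup$ ($g$-ball resample) $\cup$ (untouched region), and the two resamples are independent and act on disjoint edge sets.

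Finally I would translate this into the matrix statement. Writing $A_f[\sigma,\sigma']$ as a product over coordinates $i$ of the transition probability on $M_i$ (times the indicator $\sigma \in f$), and similarly for $A_g$, the entry $(A_f A_g)[\sigma,\sigma'']$ is a sum over intermediate states $\sigma'$; using the disjoint-support decomposition one checks that both $(A_f A_g)[\sigma,\sigma'']$ and $(A_g A_f)[\sigma,\sigma'']$ equal the same expression — the product, over $i$, of the probability of the $f$-ball change and the $g$-ball change, with the far-away and untouched regions required to be consistent in $\sigma$ and $\sigma''$. One subtlety to handle carefully is the indicator factors: $A_f$ has the factor $\mathbf{1}[\sigma \in f_i]$ sitting at the left end of the product and $A_g$ at the right end, so one must argue that ``$\sigma \in f$'' is preserved by addressing $g$ and vice versa — but this again follows from disjointness of supports, since whether $\sigma \in f_v$ (a degree condition on $v$ in $G_\sigma$) or $\sigma \in f_H$ (a condition on edges inside $H$) depends only on $\sigma$ restricted to the $f$-ball, which $g$ does not alter. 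The main obstacle, and the place where care is genuinely required, is the bookkeeping in the \textsc{Resample} procedure: one must check that the sets $E_{i,\ge d}$, $E_{i,=d}$, $V_{i,d}$, and the subgraph $G_{i,<d+1}$ computed during the $f$-call are literally unchanged by an intervening $g$-call (and vice versa), which is where the ``strictly greater than $t+2$'' slack in the causality neighborhoods — chosen, as the text notes, larger than strictly necessary — is used to absorb the distance-$d$ boundary layer and guarantee clean separation.
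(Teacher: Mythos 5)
Your argument is essentially the paper's: both hinge on the fact that $f \nsim g$ forces the region rewritten by the \textsc{Resample} call for one flaw to be separated from the region that determines both the presence of the other flaw and the distribution used to address it, so that addressing one flaw neither destroys the other nor perturbs its resampling distribution. The paper packages the conclusion as an explicit probability-preserving bijection between pairs of transitions $\sigma_1 \xrightarrow{f} \tau \xrightarrow{g} \sigma_2$ and $\sigma_1 \xrightarrow{g} \tau' \xrightarrow{f} \sigma_2$, whereas you phrase it as a coordinatewise disjoint-support factorization of the matrix entries, but these amount to the same computation.
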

The proof of Lemma~\ref{termination} can be found in Section~\ref{charge_lemma}. Lemma~\ref{commutativity_lemma} establishes that our algorithm is commutative with respect to the causality relation $\sim$ induced by neighborhoods $\Gamma(\cdot)$. Its proof can be found in Section~\ref{commutativity_lemma_proof}.

Now, setting $x_f = \frac{1}{ 1+ \max_{ f' \in F }| \Gamma(f') | }$ for each flaw $f$, we see that 
condition~\eqref{generalAlgoLLL}  with      $ \epsilon = 1/4$  is implied by
\begin{align}\label{symmetric}
 \gamma_f  \cdot \bigl(1+\max_{f' \in F} ( | \Gamma(f') |\bigr)  \cdot  \mathrm{e} \le    3/4   \enspace \text{for every flaw $f$} ,
\end{align}
which is true for large enough $\Delta$ according to Lemma~\ref{termination}.  Notice further that the causality graph induced by $\sim$ can be  covered by $n$ cliques, one for each vertex of  $G$,  with potentially further edges between them.  Indeed, flaws indexed by subgraphs  that contain  a certain vertex  of $G$ form a clique in the causality graph. Combining  Lemma~\ref{commutativity_lemma}
with the latter observation, we are able to apply  Theorem~\ref{clique_decomp}, which implies that our algorithm terminates after an expected number of at most $O\bigl( \max_{\sigma \in \Omega} \frac{ \theta(\sigma) }{\mu(\sigma) } \cdot n \log ( n \log (1/\delta) ) \bigr) = O( n  \log n   ) $ steps. (This is because we assume that $\theta = \mu$ per our discussion above.)

This completes the proof of Lemma~\ref{main_lemma} and hence, as explained at the beginning of
Section~\ref{main_1}, Theorem~\ref{main} follows.  It remains, however, to go back and
prove Lemmas~\ref{termination} and~\ref{commutativity_lemma}, which we do in the next two subsections.

\subsection{Proof of Lemma~\ref{termination}}\label{charge_lemma}


 \begin{proof}[Proof of part (a)]
  
We will need the following key lemma, which was essentially  proved in~\cite{kahnChrom}. Its proof can be found in Appendix~\ref{bounds}. 
Recall that $\mu$ is the distribution over $\Omega$ induced by taking $N$ independent samples from $\nu$.
\begin{lemma}\label{basic_lemma}
 For any random state $\sigma $ distributed according to $\mu$:
\begin{enumerate}[(i)]
\item for every flaw~$f_v$ and state $ \tau \in \Omega $:  $\mu( \sigma \in f_v \mid Q_v(t,\sigma) = Q_v(t,\tau)  ) \le \frac{1 }{ 2 \mathrm{e} D}$; and
\item for every flaw $f_H$ and state $\tau \in \Omega$:  $\mu(  \sigma \in f_H \mid Q_H(t,\sigma) = Q_H(t,\tau)) \le  \frac{1}{ 2\mathrm{e} D } $.
\end{enumerate}
\end{lemma}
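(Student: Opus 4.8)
The idea is to turn each part into a one-sided concentration (Chernoff/Hoeffding) estimate for a sum of \emph{independent} indicators, using Theorem~\ref{hardcore_dist}(b) to absorb the effect of the conditioning into the marginals, and then to note that the resulting bounds $\exp(-\Omega_\epsilon(\Delta^{3/4}))$ (for $f_v$) and $\exp(-\Omega_\epsilon(\Delta^{1/2}))$ (for $f_H$) beat $\tfrac{1-\zeta}{\mathrm e D}=\tfrac{1-\zeta}{\mathrm e}\Delta^{-(t+\Delta^{1/3}+4)}$ once $\Delta\ge\Delta_\zeta$, since $\Delta^{1/2}$ and $\Delta^{3/4}$ both dominate $(t+\Delta^{1/3}+4)\ln\Delta$ ($t=t(\epsilon)$ being a constant). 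Write $c=\chi_e^*(G)\ge\Delta$, so $\nu$ has all marginals $\tfrac{1-\delta}{c}$, set $b:=(1+\epsilon)^{-1}+\tfrac\epsilon4$, so the thresholds appearing in $f_v$ and $f_H$ are exactly $c-bN$, and record that $\epsilon<\tfrac1{10}$ gives $b<1$ with $1-b\ge\tfrac\epsilon4$, while $c\ge\Delta\to\infty$ gives $N/c\le c^{-1/4}=o(1)$ and hence $c^*/c=1-o(1)$. Two structural facts are used throughout: (1) since $\mu=\nu^{\otimes N}$ and each conditioning event $\{Q_v(t,\sigma)=Q_v(t,\tau)\}$, resp.\ $\{Q_H(t,\sigma)=Q_H(t,\tau)\}$, factors over the $N$ coordinates with the $i$th factor a condition on $M_i$ only (and always has positive probability, as $\nu$ has full support), the conditioned measure still makes $M_1,\dots,M_N$ independent, with $M_i$ distributed as $\nu$ conditioned on a prescribed value of $M_i-S_{<t}(v)$, resp.\ $M_i-S_{<t}(H)$; (2) such a conditioning event is determined by the matching edges at distance $\ge t$ from $v$, resp.\ from every vertex of $V(H)$, hence it is $(t-1)$-distant from every edge incident to $v$ and $t$-distant from every edge with both endpoints in $V(H)$, so by Theorem~\ref{hardcore_dist}(b) the conditional marginal of an edge at $v$ stays within $1\pm\epsilon_1$ of $\tfrac{1-\delta}{c}$, where $\epsilon_1=\tfrac{\delta}{1-\delta/(8(K+1)^2)}\le\tfrac\epsilon4\bigl(1+\tfrac\epsilon2\bigr)$, and the conditional marginal of an edge inside $V(H)$ stays within $1\pm\delta$.

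\textbf{Part (i).} Fix $v$, let $E_v$ be its incident edges ($|E_v|=d_G(v)\le\Delta$), let $E_i\in E_v\cup\{\bot\}$ be the edge of $M_i$ at $v$ (unique, since $M_i$ is a matching), and let $X_i'=\mathbf 1[E_i\ne\bot,\ E_i\notin\{E_1,\dots,E_{i-1}\}]$ indicate that $M_i$ covers $v$ through a fresh edge. Then $d_{G_\sigma}(v)=d_G(v)-\sum_{i=1}^N X_i'$, so $f_v=\{\sum_i X_i'<m\}$ with $m:=d_G(v)-(c^*-\tfrac\epsilon4 N)\le c-(c-bN)=bN$. I would reveal $M_1,\dots,M_N$ in order: on the event $\sum_{j<i}X_j'<m$, fewer than $m$ edges of $E_v$ have been used, so more than $d_G(v)-m=c^*-\tfrac\epsilon4 N$ remain unused and (all $\nu$-marginals being equal) $\Pr[X_i'=1\mid M_1,\dots,M_{i-1}]\ge r:=(1-\epsilon_1)(1-\delta)\tfrac{c^*-\epsilon N/4}{c}=(1-\epsilon_1)(1-\delta)\bigl(1-\tfrac{bN}{c}\bigr)$. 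A standard coupling with i.i.d.\ $\mathrm{Bernoulli}(r)$ variables then gives $\mu(f_v\mid Q_v(t,\sigma)=Q_v(t,\tau))\le\Pr[\mathrm{Bin}(N,r)<m]$. Since $m\le bN$, a short calculation with $\delta=\epsilon/4$, $\epsilon<\tfrac1{10}$ and $N/c=o(1)$ gives $Nr-m\ge N\bigl((1-\epsilon_1)(1-\delta)(1-\tfrac{bN}{c})-b\bigr)\ge\tfrac\epsilon8 N$ for $\Delta$ large, so Hoeffding's inequality yields $\mu(f_v\mid Q_v(t,\sigma)=Q_v(t,\tau))\le\exp(-\epsilon^2 N/32)$; as $N\ge\tfrac12\Delta^{3/4}$ this is $\le\tfrac{1-\zeta}{\mathrm e D}$ for $\Delta\ge\Delta_\zeta$.

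\textbf{Part (ii).} Fix a connected odd $H$. If $|V(H)|>\tfrac{\Delta}{(\epsilon/4)N}$ or $|E(H)|\le\tfrac{|V(H)|-1}{2}c^*$ then $f_H=\emptyset$ and there is nothing to prove; otherwise put $h:=|V(H)|\in\{3,5,\dots\}$, $m_H:=|E(H)|>\tfrac{h-1}{2}c^*$, and note $f_H=\{H\subseteq G_\sigma\}=\bigcap_{i=1}^N\{M_i\cap E(H)=\emptyset\}$. Distributing the $m_H$ edges of $H$ among the $\binom h2$ vertex-pairs of $V(H)$, pigeonhole produces a pair $\{x^*,y^*\}\subseteq V(H)$ joined by $m^*\ge m_H/\binom h2>c^*/h$ edges of $E(H)$. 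These $m^*$ edges all meet $x^*$, so the events that $M_i$ contains each of them are pairwise disjoint, and each such edge has both endpoints in $V(H)$; hence, writing $Q_i$ for the relevant value of $M_i-S_{<t}(H)$,
\[
\Pr\bigl[M_i\cap E(H)\ne\emptyset\mid Q_i\bigr]\ \ge\ m^*(1-\delta)\tfrac{1-\delta}{c}\ >\ (1-\delta)^2\tfrac{c^*}{ch}\ \ge\ \tfrac1{2h}\qquad(\Delta\text{ large}).
\]
Multiplying over the $N$ independent coordinates, $\mu(f_H\mid Q_H(t,\sigma)=Q_H(t,\tau))\le(1-\tfrac1{2h})^N\le\exp(-\tfrac N{2h})$. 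Since $h\le\tfrac{4\Delta}{\epsilon N}$ and $N\ge\tfrac12 c^{3/4}\ge\tfrac12\Delta^{3/4}$ we get $\tfrac N{2h}\ge\tfrac{\epsilon N^2}{8\Delta}\ge\tfrac\epsilon{32}\Delta^{1/2}$, hence the bound is $\le\exp(-\tfrac\epsilon{32}\Delta^{1/2})\le\tfrac{1-\zeta}{\mathrm e D}$ for $\Delta\ge\Delta_\zeta$.

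\textbf{Where the difficulty lies.} The delicate point in part (i) is that $d_{G_\sigma}(v)$ is \emph{not} $d_G(v)$ minus a sum of $N$ independent indicators: the independently sampled $M_1,\dots,M_N$ need not be edge-disjoint, and the resulting ``collision'' correction is too weakly concentrated to be discarded by a crude union or Markov bound; the sequential fresh-edge coupling above is exactly what sidesteps this. The second subtlety is quantitative: the slack $1-b\approx\tfrac\epsilon2$ between $\ex[d_{G_\sigma}(v)]$ and the threshold is of the same order as the conditioning error produced by Theorem~\ref{hardcore_dist}(b), so $Nr-m\ge\tfrac\epsilon8 N$ survives only because $f_v$ is defined with threshold $c^*-\tfrac\epsilon4 N$ rather than $c^*$, because the marginals of $\nu$ are \emph{exactly} $\tfrac{1-\delta}{c}$, and because one uses the sharper distance-$(t-1)$ estimate $\epsilon_1\approx\delta$ instead of a lazy factor $2\delta$. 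Part (ii) is essentially Kahn's original odd-set computation, and part (i) his degree computation with the fresh-edge bookkeeping made explicit.
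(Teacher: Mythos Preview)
Your proof is correct, but both parts take a different route from the paper's, and your part~(i) is more complicated than necessary. The paper handles both parts uniformly via the bounded-differences inequality (McDiarmid): for part~(i) it observes that $d_{G_\sigma}(v)$ is a function of the $N$ independent matchings $M_1,\dots,M_N$ with each coordinate affecting the value by at most~$1$ (since a matching meets $v$ in at most one edge), bounds the conditional expectation by $c^*-\tfrac{\epsilon}{3}N$ using Theorem~\ref{hardcore_dist}(b), and applies the concentration inequality directly. Your worry that ``$d_{G_\sigma}(v)$ is not $d_G(v)$ minus a sum of independent indicators'' is thus a non-issue in the paper's framework---the bounded-differences inequality does not require the function to be a sum---and your sequential fresh-edge coupling, while correct, is extra machinery. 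For part~(ii) the paper again uses McDiarmid, but on the superset event $\{|E_\sigma(V(H))|>\tfrac{|V(H)|-1}{2}c^*\}$ (which contains $f_H$), noting that each $M_i$ changes the induced edge count by at most $\tfrac{|V(H)|-1}{2}$ and bounding the expectation via Edmonds' matching-polytope inequality; this yields an exponent of order $\epsilon^2 N\sim\Delta^{3/4}$. Your pigeonhole-on-parallel-edges argument is more elementary and avoids the superset trick, at the cost of a weaker exponent $\sim\Delta^{1/2}$; both comfortably beat $(t+\Delta^{1/3}+4)\ln\Delta$, so either suffices. Finally, your attribution ``essentially Kahn's original odd-set computation'' for part~(ii) does not match the paper's McDiarmid-based argument.
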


We show the proof of part (a) of Lemma~\ref{termination} only for the case of $f_v$- flaws, as the proof for $f_H$- flaws is very similar.   Specifically, our goal will be to prove that
\begin{align}\label{goal}
\gamma_{f_v}  =  \max_{\tau \in \Omega} \mu( \sigma \in f_v \mid Q_v(t,\sigma) = Q_v(t,\tau)  ) .
\end{align}
Lemma~\ref{basic_lemma} then concludes the proof.

Recalling the definition of~$\gamma_f$ from~\eqref{charge_def} we see that, in order to prove~\eqref{goal}, it suffices to show that, for $\sigma$ distributed according to  $\mu$ and any state $\tau \in \Omega$,
\begin{align}\label{third_charge}
\sum_{ \omega \in f_v } \frac{ \mu(\omega) }{ \mu(\tau) } \rho_{f_v} (\omega,\tau) =    \mu( \sigma \in f_v \mid Q_v(t,\sigma) = Q_v(t,\tau)  )  .
\end{align}
Indeed,  maximizing~\eqref{third_charge}  over $\tau \in \Omega $ yields~\eqref{goal} and completes the proof.

Fix $\tau = ( M_1, M_2, \ldots, M_N) \in \Omega$. To compute the sum on the left-hand side of~\eqref{third_charge} we need to determine the set of states $\mathrm{In}_v(\tau) \subseteq f_v$ for which  $\rho_{f_v}(\omega,\tau) > 0$.  To do this,  recall that given as input a state    $\omega = (M_1^{\omega}, M_2^{\omega}, \ldots, M_N^{\omega} ) \in f_v$,  procedure  {\sc Resample($ v , \omega, t )$}  modifies  one by one each matching $M_i$, $i \in [N]$, ``locally" around $v$.  In particular, observe that the support of the distribution for updating $M_i$ is exactly the set $\mathcal{M}_{t+1}^i(v,\omega)$, and hence it must be the case that $Q_v^i(t,\omega) = Q_v^i(t,\tau)$ for every $i \in [N]$ and state $\omega \in \mathrm{In}_v(\tau)$. This also implies that, for every such $\omega $,
\begin{align} \label{fourth_charge}
\frac{ \mu(\omega) }{ \mu(\tau) }   =  \prod_{i=1}^N \frac{ \nu(M_i^{\omega}) }{   \nu(M_i)}   = \prod_{i=1}^{N}  \frac{  \lambda (M_i^{\omega}  \cap E( G_{<t+1}(v) )  )  }{  \lambda (M_i \cap E(G_{<t+1}(v) ) ) }  .
\end{align}
Recall  now that we have assumed that the hard-core distribution  in Lines~\ref{seven},~\ref{eight} of { \sc Resample $(v,\omega,t)$ }is induced by the ideal vector of activities $\lambda$. In particular, we have
\begin{align}
\rho_{f_v}(\omega, \tau) &= \prod_{i= 1}^{N}  \frac{ \lambda (M_i \cap E(G_{<t+1}(v) ) ) }{ \sum_{ M \in \mathcal{M}_{t+1}^i(v , \omega ) } \lambda(M)} \nonumber \\
				  &   =\prod_{i= 1}^{N}  \frac{ \lambda (M_i \cap E(G_{<t+1}(v) ) ) }{ \sum_{ M \in \mathcal{M}_{t+1}^i(v,\tau)  } \lambda(M)}  
\end{align}
since $Q_v^i(t, \omega) = Q_v^i(t,\tau)$,  which combined with~\eqref{fourth_charge} yields
\begin{align}\label{five_charge}
\frac{ \mu(\omega) }{ \mu(\tau)  }  \rho_{f_v}(\omega, \tau) =   \prod_{i =1}^N   \frac{  \lambda (M_i^{\omega} \cap E(G_{<t+1}(v) ) )  }{ \sum_{ M \in \mathcal{M}_{t+1}^i(v,\tau)  } \lambda(M) }.
\end{align} 
Letting $\sigma = (M_1^{\sigma}, \ldots, M_N^{\sigma} )  $ be a random state distributed from $\mu$ we see that, by definition, the right-hand side of~\eqref{five_charge} equals:
\begin{align}\label{kouta}
\prod_{i =1}^N   \frac{  \lambda (M_i^{\omega} \cap E(G_{<t+1}(v) ) )  }{ \sum_{ M \in \mathcal{M}_{t+1}^i(v,\tau)  } \lambda(M) } = \prod_{i=1}^N \nu(  M_i^{\sigma} = M_i^{\omega}    \mid Q_v^i(t,\sigma) =  Q_v^i(t,\tau) )  = \mu( \sigma = \omega \mid Q_v(t,\sigma) =  Q_v(t,\tau)).
\end{align}
Finally, combining~\eqref{five_charge} and~\eqref{kouta}, we obtain that
\begin{align*}
\sum_{ \omega \in f_v } \frac{ \mu(\omega) }{ \mu(\tau)  }  \rho_{f_v}(\omega, \tau) = \sum_{\omega \in f_v}  \mu( \sigma = \omega \mid Q_v(t,\sigma) =  Q_v(t,\tau)) = \mu( \sigma \in f_v \mid    Q_v(t,\sigma) =  Q_v(t,\tau) ),
\end{align*}
concluding the proof of the first part of Lemma~\ref{termination}.

%
%
\end{proof}

\begin{proof}[Proof of part (b)]

For this proof we will use the following well-known proposition, which we also prove here for completeness.

\begin{proposition}\label{standard_counting_prop}
For every vertex $v$ there are at most $(\mathrm{e} \Delta)^{s-1}$ sets of vertices $S$ such that (i) $v \in S$; (ii) $|S| = s$; and (iii) $G[S]$ is connected.
\end{proposition}
\begin{proof}

The number of such sets is bounded by the number of distinct $s$-vertex trees which are rooted at $v$. The latter quantity is bounded by the number of distinct $\Delta$-ary rooted trees with $s$ vertices, which is 
\begin{align}\label{number_of_trees}
T_{\Delta}(s) := \frac{ { \Delta s \choose s } }{ (\Delta-1)s +1},
\end{align}
see e.g.~\cite{knuth1968art}. It is not hard to see that $T_{\Delta}(s) \le (\mathrm{e} \Delta)^{s-1}$ for $s \in \{1,2\}$ and $\Delta \ge 1$. For $s \ge 3$, we  obtain
\begin{align*}
T_{\Delta}(s) \le \frac{   \left( \frac{ \Delta s  \cdot \mathrm{e} }{ s } \right)^{s}   }{ (\Delta-1)s + 1 }  = \frac{   \left(  \Delta   \cdot \mathrm{e}  \right)^{s}   }{ (\Delta-1)s + 1 }  \le \left(  \mathrm{e}  \Delta \right)^{s-1 } 
\end{align*}
for sufficiently large $\Delta$, concluding the proof. Note that in deriving the first inequality we used that ${ a \choose b } \le (a \cdot \mathrm{e} / b)^{ b} $ for positive integers $b \le a$.

\end{proof}
To prove part (b) of Lemma~\ref{termination} it suffices to show that
\begin{align}\label{with_my_shot}
| F_{t+2}(v)|  \le \Delta^{ t + 2 \Delta^{1/3} + 3}
\end{align}
for every vertex $v$. Indeed,~\eqref{with_my_shot} clearly suffices if $f = f_v$. If $f = f_H$, notice that every $H$ for which we define $F_{t+2}(H)$ has fewer than $\Delta$ vertices  (assuming $\Delta$ is sufficiently large) and, therefore, every $F_{t+2}(H)$ has less than $D = \Delta^{ t+ 2\Delta^{ 1/3 }  +4} $ elements.

Towards proving~\eqref{with_my_shot}, at first notice  that  every set $S_{ < t+2}(v)$  has at most $\Delta^{t+2}$ elements. Moreover, using Proposition~\ref{standard_counting_prop}  we obtain that, for sufficiently large $\Delta$, every vertex $u$ is in at most 
\begin{align*}
 H_u := \sum_{s=1 }^{\frac{ 8\Delta }{ \epsilon N }    }  \left( \mathrm{e} \Delta \right)^{ s-1}    \le \frac{1}{ \mathrm{e} \Delta}  \cdot \frac{ \mathrm{ (e \Delta})^{  \frac{ 8 \Delta}{ \epsilon N } +1}  -1 }{ \mathrm{e} \Delta -1   }   \le \Delta^{ 2\Delta^{ 1/3 }  }
 \end{align*}
  sets~$H$ corresponding to a flaw~$f_H$. Note that in deriving the second inequality above we used the fact that $N = \lfloor \chi_e^*(G)^{3/4} \rfloor = \Theta(\Delta^{3/4} )$, which in turn implies that $\frac{ 8\Delta }{ \epsilon N } \le 2 \Delta^{ 1/3}$ for sufficiently large $\Delta$.   Overall:
  \begin{align*}
  | F_{t+2}(v)|  \le |S_{ < t+2}(v) |  \cdot \left(  \max_{u \in  S_{ < t+2}(v) } H_u +1 \right)  \le \Delta^{t+2} \cdot \left(\Delta^{ 2\Delta^{ 1/3 }  } + 1 	\right) \le \Delta^{ t + 2 \Delta^{1/3} + 3}
  \end{align*}
  for sufficiently large $\Delta$, concluding the proof.
\end{proof}

\subsection{Proof of Lemma~\ref{commutativity_lemma}}\label{commutativity_lemma_proof}

Fix states $\sigma_1 = (M_1, M_2, \ldots, M_N)  \in f$ and $\sigma_2 = (M_1', M_2', \ldots, M_N')  \in g$
such that $f\not\sim g$. To prove that the matrices $A_f, A_g$ commute, we need to show that for every such pair
\begin{align}\label{key_for_comm}
\sum_{ \tau} \rho_{f}(\sigma_1,\tau) \rho_{g}(\tau, \sigma_2) = \sum_{ \tau} \rho_{g}(\sigma_1,\tau) \rho_{f}(\tau, \sigma_2)  .
\end{align}
To that end, let $H_f, H_g$ be the subgraphs  (which may consist only of a single vertex) associated with flaws $f$ and $g$, respectively.  Since $f \nsim g $ we have $\min_{ u \in  V( H_f), v \in V(H_g) } \mathrm{dist}(u,v)  \ge t+2 $, where $\mathrm{dist}(u,v) $ denotes the length of the shortest path
between $u$ and $v$. Notice that this implies $S_{<t+2} (H_f ) \cap S_{<t+2} (H_g)  = \emptyset $.

Consider a pair of transitions $ \sigma_1 \xrightarrow{f} \tau$, $\tau \xrightarrow{g} \sigma_2$, where $\tau = ( M_1'', \ldots, M_N'')$,  and  so that  $\rho_f(\sigma_1, \tau) >0 $, $ \rho_g(\tau, \sigma_2) > 0$. The facts that  procedure {\sc Resample $(\sigma,f,t)$}  only modifies the input set of matchings
locally within $S_{<t+1}(H_f)$,   that $ \rho_g(\tau, \sigma_2) > 0$, and that $S_{<t+2} (H_f ) \cap S_{<t+2} (H_g)  = \emptyset $ imply that  (i)~$ \sigma_1 \in g$;  and (ii)~for every $i \in [N]$, $M_i   \cap ( S_{<t+2} (H_g)   ) =  M_i''   \cap ( S_{<t+2} (H_g)   ) $. 
Notice now that the probability distribution $\rho_g(\tau, \cdot ) $  depends only on $(M_1'' \cap S_{<t+2}(H_g), \ldots, M_N'' \cap S_{t+2}(H_g)  )$. Hence, (i) and (ii)  imply  that the  probability distribution $\rho_g( \sigma_1 , \cdot ) $ is well defined and, in addition, there exists a natural bijection~$b_g$ between the action set $a(g,\tau)$ and the action set $a(g, \sigma_1)$  so that $\rho_g(\tau, \tau' ) = \rho_g(\sigma_1,  b_g(\tau'))$ for every $\tau' \in a(g,\tau)$. This is because both distributions are implemented by sampling from the set of matchings of the same multigraph according to the same probability distribution.

Now let $ \tau' = b_g(\sigma_2)  $. A symmetric argument implies that $\tau'\in f$ and that there exists a natural bijection~$b_f$ between $a(f, \sigma_1 ) $ and $a(f, \tau' )$ so that  $\rho_f(\sigma_1, \sigma)  = \rho_{f}(\tau', b_f(\sigma)) $ for every $\sigma \in a(f, \sigma_1)$. 
In particular, notice that  $\sigma_2 =  b_f(\tau)$ and that
\begin{align}
\rho_f(\sigma_1, \tau) \rho_g(\tau, \sigma_2) & =   \rho_g(\sigma_1, \tau') \rho_f(\tau', b_f(\tau))  \nonumber  \\
							           &  = \rho_g(\sigma_1, \tau') \rho_f( \tau', \sigma_2 ) .  \label{product_holds}
\end{align}
Overall, what we have shown is a bijective mapping that sends any pair of transitions $\sigma_1 \xrightarrow{f} \tau, \tau \xrightarrow{g} \sigma_2$  to a pair of transitions $\sigma_1 \xrightarrow{g} \tau', \tau' \xrightarrow{f} \sigma_2$ and which satisfies~\eqref{product_holds}. This establishes~\eqref{key_for_comm}, concluding the proof. \qquad$\square$

\section{List-Edge Coloring Multigraphs: Proof of Theorem~\ref{main_list}}\label{list_main}
In this section we review the proof of Theorem~\ref{list_kahn} and then prove its
constructive version, Theorem~\ref{main_list}.  \blue{Again, throughout the proof we assume that the maximum degree $\Delta$ of  the input multigraph $G$ satisfies $\Delta \ge \Delta_0$ for some appropriately large constant $\Delta_0$.}

In Section~\ref{royals} we give a high-level sketch of the existential proof of Kahn, and we state the key technical results from that paper (Theorems~\ref{first_th},~\ref{second_th}, and Lemma~\ref{key_list_kahn_lemma}). As we will see, our main contribution is to make Theorem~\ref{first_th} constructive. Towards this end, we describe our local search algorithm in Section~\ref{list_algo}, where we also prove its correctness assuming it converges (Lemma~\ref{avoiding_fe_suffices}), as well as an important property of the flaws we consider (Lemma~\ref{key_list_kahn_lemma_new}).
Finally, in Section~\ref{proofara} we prove that our search algorithm  has  expected polynomial running time, concluding the proof of Theorem~\ref{main_list}.

\subsection{A High Level Sketch of the Existential Proof}\label{royals} 

 As we explained in the introduction, the non-constructive proof of Theorem~\ref{list_kahn}  is a sophisticated version of the semi-random method and proceeds by partially coloring the edges of the  multigraph in iterations, until at some point the coloring can be completed greedily. (More accurately, the method establishes the \emph{existence} of such a sequence of desirable partial colorings.)

We will  follow the exposition in~\cite{mike_book}. In each iteration, we have a list $L_e$ of acceptable colors for each edge $e$. We assume that each $L_e$ originally has $C$ colors for some $C \ge (1+\epsilon) \chi_e^*(G)$, where $\epsilon >0$ is an arbitrarily small constant. For each color  $i$, we let $G_i$ be the subgraph of $G$ formed by the edges for which $i$ is acceptable. Since $G_i \subseteq G, \chi_e^*(G_i) \le \chi_e^*(G)$. Thus, Theorem~\ref{hardcore_dist} implies that we can find a hard-core distribution on the matchings of $G_i$ with marginals $(\frac{1}{C}, \ldots, \frac{1}{C} )$ whose activity vector $\lambda_i$ satisfies $\lambda_i(e) \le \frac{K}{C}$ for all $e$, where $K= K(\epsilon)$ is a constant.

In each iteration, we will use the  \emph{same} activity vector $\lambda_i$ to generate the random matchings assigned to color $i$. Of course, in each iteration we restrict our attention to the subgraph of $G_{i}$ obtained by deleting the set $E^*$ of edges colored (with any color) in previous iterations,  and the endpoints of the  set of edges  $E_{i}^*$ colored $i$ in previous iterations. (Thus,  although we use the same activity vector for each color  in each iteration, the induced hard-core distributions may vary significantly.)   Further, we will make sure that our distributions have the property that for each edge $e$, the expected number of matchings containing~$e$ is very close to~$1$. \blue{(In other words, the sum over $i$ of the probabilities that edge $e$ is a part of the matching corresponding to color $i$ is close to~$1$.)}

We apply the Lopsided LLL in the following probability space. For each color $i$, independently, we choose a matching $M_i \in G_i$  from the corresponding distribution. Next, we \emph{activate} each edge in $M_i$ independently with probability $\alpha := \frac{1}{ \log \Delta(G) } $; we assign colors only to activated edges in order to ensure that very few edges are assigned more than one color. We then update the multigraph by deleting the colored edges, and update the lists $L_e$ by deleting any color assigned to an edge incident on~$e$.   We give a more detailed description below.

Notice that our argument needs to ensure that (i) at the beginning of each iteration the induced hard-core distributions are such that, for each uncolored edge $e$, the expected number of random matchings containing $e$ is very close to $1$; and (ii)  after some number of iterations, we can complete the coloring greedily. 

As far as the latter condition is concerned, notice that if (i) holds throughout then, in each iteration, the probability that an edge retains a color remains close to the activation probability $\alpha$. This allows us to prove that the maximum degree in the uncolored multigraph drops by a factor of about $1-\alpha$ in each iteration. Hence, after $\log_{\frac{1}{1-\alpha} } 3K$ iterations, the maximum degree in the uncolored multigraph will be less than $\frac{\Delta}{2K} $. Furthermore, for each $e$ and $i$, the probability that $e$ is in the random matching of color $i$ is at most $\lambda_i(e) \le \frac{K}{C}$. Since (i)
continues to hold, this implies there are at least $\frac{C}{K} > \frac{\Delta}{K}$ colors available for each edge, and so the coloring can be completed greedily. (Recall that the $C >\chi_e^*(G) \ge \Delta$.)

\par\bigskip
{\bf An Iteration.}
\begin{enumerate}

\item For each color $i$, pick a matching $M_i$ according to a hard-core probability
distribution $\mu_i$ on $\mathcal{M}(G_i)$ with activities $\lambda_i$ such that for some constant $K$:

\begin{enumerate}[(a)]

\item $\forall e \in E(G), \sum_{ i } \mu_i ( e \in M_i ) \approx 1$; and

\item  $\forall i , \forall e \in E(G), \lambda_i(e) \le \frac{K}{C} $ and hence $\forall v \in V(G), \sum_{L_e \ni i } \lambda_i(e) \le K$.

\end{enumerate}

\item For each $i$, activate each edge of $M_i$ independently with probability $\alpha = \frac{1}{ \log \Delta(G)}$, to obtain a new matching $F_{i} \subseteq M_i$. We color the edges of $F_i$ with color $i$ and delete $V(F_i)$  from $G_i$. We also delete from $G_i$ every edge not in $M_i$ which is in $F_{j} $ for some $j \ne i$. We do not delete edges of $(M_i - F_i) \cap F_{j} $ from $G_i$. (Note that this may result in edges receiving more than one color, which is not a problem since we can always pick one of them arbitrarily at the end of the iterative procedure.) 

\item  Perform an \emph{equalizing coin flip} for each edge $e$ of $G_i  $  so that the probability  that $e$ is both colored and removed from $G_i$ in either Step 2 or Step 3 is exactly $\alpha$.  (See also Remark~\ref{eq_flips} below.)

\end{enumerate}

\blue{
\begin{remark}\label{eq_flips}
 Note that the expected number of edges that are both colored and removed from $G_i$  in Step 2 is less than $\alpha | E(G_i) |$ because, although the expected number of colors retained by an edge is very close to $\alpha$,  some edges may be assigned  more than one color.  Performing ``equalizing coin flips" in Step 3 is a standard  idea  that helps in avoiding several technical difficulties that stem from the latter fact.
\end{remark}
}

The \emph{outcome} of an iteration is defined to be the choices of matchings, activations, and equalizing coin flips. Let $\mathrm{Out}= \mathrm{Out}_{\ell}$ denote the random variable that equals the outcome of the $\ell$-th iteration. (In what follows, we will focus on a specific iteration $\ell$ and so we will omit the subscript.)

For each edge $e =(u,v)$, we define a bad event $A_e$ as follows.  Let $G_i'$ be the multigraph obtained after carrying out the modifications to $G_i$  in Steps 2 and  3 of the above iteration.  Let $t = 8(K+1)^2 (\log \Delta)^{20} + 2$, recall the definition of $S_{<t}( H) $ for subgraph $H$, and let $G_{<t}(H)$ denote the corresponding induced subgraph.  Let $Z_i$ be a random matching in $G_i' \cap G_{<t}(\{u,v \}) $ sampled from  the hard-core probability distribution induced by  activity vector $\lambda_i$. Let $A_e$ be the event that\\
\begin{equation}\label{edge_flaw}
\Bigl| \sum_{i: G_i' \ni e  } \Pr ( e \in Z_i \mid \mathrm{Out}  ) - \sum_{i: G_i \ni e } \Pr(  e \in M_i )   \Bigr|  >  \frac{1 }{ 2 (\log \Delta)^4 }  .
\end{equation}

To get some intuition behind the definition of event $A_e$, let $M_i'$ be a random matching in $G_i'$ chosen according to the hard-core distribution  with activities $\lambda_i$. Since correlations decay with distance, one can show that $\Pr( e \in M_i' \mid \mathrm{Out})$ is within a factor $1 + \frac{1}{ (\log \Delta )^{20}}  $  of $\Pr( e \in Z_{i} \mid \mathrm{Out} ) $. Thus, according to~\eqref{edge_flaw},  avoiding bad event $A_e$ implies that $ \sum_{i} \Pr( e \in M_i'  ) \approx \sum_{i} \Pr ( e \in M_i ) \approx 1 $, which is what is required in order to maintain property (i) at the beginning of the next iteration. In particular, it is straightforward to see that avoiding all bad events $\{A_e \}_{e \in E(G) } $ guarantees that \\
\begin{equation}\label{edge_guarantee}
\Bigl| \sum_{i: G_i' \ni e  } \Pr ( e \in M_i' \mid \mathrm{Out} ) - \sum_{i: G_i \ni e } \Pr(  e \in M_i )   \Bigr|  \le  \frac{1 }{  (\log \Delta)^4 }  ,
\end{equation}
for sufficiently large $\Delta$, which is what we really need.
The reason we consider $Z_{i}$ and not $M_i'$ is that  events defined with respect to the former are mildly negatively correlated with most other bad events, making it possible  to apply the Lopsided LLL.

Further, for each vertex $v$ we define $A_v$  to be the event that the proportion of edges incident on~$v$ which are colored in the iteration is less than $\alpha - \frac{1}{  (\log \Delta)^4} $.

It can be formally shown that, if we avoid all bad events, then (i) holds, i.e., at the beginning of the next iteration we can  choose new probability distributions so that for each uncolored edge $e$ we maintain the
property that the expected number of random matchings containing $e$ is  very close to $1$, and, moreover, after $\log_{ \frac{1}{ 1- \alpha }  }3K $ iterations we can complete the coloring greedily.

\begin{theorem}[\cite{kahnListChrom}]\label{first_th}
Assume that~\eqref{edge_guarantee} holds for the edge marginals of the matching distributions of iteration  $\ell$. Then, with positive probability, the same is true for the matching distributions of iteration $\ell+1$.
\end{theorem}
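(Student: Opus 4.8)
The plan is to establish Theorem~\ref{first_th} by a single application of the General Lopsided LLL (Theorem~\ref{generalLLLL}) in the probability space whose randomness is the outcome $Q$ of iteration~$\ell$ --- the matchings $M_i$, the activations $F_i$, and the equalizing coin flips --- taking as bad events the family $\{A_e\}_{e\in E(G)}$ of~\eqref{edge_flaw} together with the vertex events $\{A_v\}_{v\in V(G)}$. As already observed in the text, avoiding every $A_e$ forces~\eqref{edge_guarantee} (the passage from $Z_i$ to $M_i'$ costing only the relative error $(\log\Delta)^{-20}$ supplied by Theorem~\ref{hardcore_dist} with $\epsilon=(\log\Delta)^{-20}$ and $t'=8(K+1)^2(\log\Delta)^{20}+2$), and~\eqref{edge_guarantee} --- together with property~(i) at iteration~$\ell$ and the fact that the Step~3 equalizing deletions perturb the relevant marginals by a further $o((\log\Delta)^{-4})$ --- is exactly the assertion that the matching distributions of iteration~$\ell+1$ again satisfy the edge-marginal property. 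So the content of the theorem reduces to the LLL application, for which I must (i) exhibit a lopsidependency graph of controlled degree, and (ii) bound the boosted probabilities $b_e,b_v$.

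For (i) the key is locality: the quantity inside $A_e$ depends on $Q$ only through its restriction to a ball of radius $O(t')$ around the endpoints $u,v$ of~$e$. Indeed $Z_i$ is supported on $G_i'\cap S_{<t'}(\{u,v\})$; the part of $G_i'$ inside that ball is determined by the part of $Q$ inside a ball of radius $t'+O(1)$ (which nearby edges got activated or colored); and $\Pr(e\in Z_i\mid Q)$ is a function of that data. Likewise $A_v$ depends only on $Q$ near~$v$. Hence I take the lopsidependency neighbourhood $L(e)$ (resp.\ $L(v)$) to consist of the bad events indexed by an edge or vertex within distance $2t'+O(1)$; since $G$ has maximum degree~$\Delta$, $|L(e)|,|L(v)|\le\Delta^{O(t')}=\exp\bigl(O((\log\Delta)^{21})\bigr)$.

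For (ii) the bound on $b_e$ is a concentration-of-measure statement in the style standard for the semi-random method, and I expect it to be the main obstacle. First one checks that $\ex_Q\bigl[\sum_{i:G_i'\ni e}\Pr(e\in Z_i\mid Q)\bigr]$ lies within $o((\log\Delta)^{-4})$ of $\sum_{i:G_i\ni e}\Pr(e\in M_i)$, using $\ex_Q[\Pr(e\in Z_i\mid Q)]=\Pr(e\in Z_i)$ together with the fact that, under property~(i) at iteration~$\ell$, deleting the colored edges and their endpoints perturbs this marginal negligibly. Then one shows the sum is sharply concentrated around its mean. Naive bounded differences is too weak here --- re-sampling a single matching $M_j$ can in principle move many of the terms $\Pr(e\in Z_i\mid Q)$ at once --- so one must either exploit that only polylogarithmically many coordinates of $Q$ are genuinely relevant to a typical outcome (conditioning the atypical cases away into a small additive error) or invoke Talagrand's inequality with a certificate argument, as in~\cite{kahnListChrom}. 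Either way one obtains $\Pr(A_e)\le\exp\bigl(-(\log\Delta)^{\omega(1)}\bigr)$; moreover, because $A_e$ is defined through the truncated matching $Z_i$ rather than $M_i'$, it is mildly negatively correlated with every bad event outside $L(e)$, so the same bound holds for the boosted probability $b_e$ (the supremum of $\Pr(A_e\mid\cdot)$ over conditionings on bad events outside $L(e)$). The bound for $b_v$ is analogous and easier.

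Finally, since $\exp\bigl(-(\log\Delta)^{\omega(1)}\bigr)\cdot\Delta^{O(t')}\to0$, setting every $x_i$ equal to a common value $\asymp\Delta^{-O(t')}$ verifies the criterion $b_i\le x_i\prod_{j\in L(i)}(1-x_j)$ of Theorem~\ref{generalLLLL} for $\Delta$ large enough; hence with positive probability none of the events $A_e,A_v$ occurs, and by the first paragraph this delivers~\eqref{edge_guarantee} for iteration~$\ell+1$. The genuinely hard part is the concentration estimate in (ii), which is where essentially all of Kahn's technical effort in~\cite{kahnListChrom} is concentrated; step (i) and the final LLL bookkeeping are routine.
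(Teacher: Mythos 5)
Your argument is, in outline, Kahn's original non-constructive proof of Theorem~\ref{first_th}: apply the General Lopsided LLL (Theorem~\ref{generalLLLL}) in the probability space of the iteration's outcome, with bad events $A_e$ (deliberately defined through the truncated matchings $Z_i$ so that correlation decay yields lopsidependency) and $A_v$, and feed in a conditional-probability/concentration bound. That is the route this paper only \emph{sketches} in Section~\ref{list_main} and imports wholesale: the entire technical core of your step (ii) is exactly Lemma~\ref{key_list_kahn_lemma}, which the paper cites from~\cite{kahnListChrom} rather than reproving, just as you do. The paper's own proof of the statement is different: it is the constructive one (the proof labelled ``Constructive Proof of Theorem~\ref{first_th}''), which replaces the events $A_e$ by relaxed flaws $f_e$ with threshold $\frac{2}{3(\log\Delta)^4}$ as in~\eqref{edge_new_flaw} (so that marginals need only be \emph{estimated}, not computed exactly), introduces the local resampling procedure {\sc Fix}, and verifies the algorithmic LLL condition of Theorem~\ref{our_theorem}; the charge computation (Lemma~\ref{termination_list}) shows that each $\gamma_f$ equals a conditional probability of precisely the type bounded by Lemma~\ref{key_list_kahn_lemma}. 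What the paper's route buys is an efficient algorithm, which is what Theorem~\ref{main_list} actually requires; what your route buys is only the positive-probability assertion, which is all that Theorem~\ref{first_th} literally states.

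One concrete flaw in your bookkeeping: your lopsidependency radius $2t'+O(1)$ is too small to let you invoke the key bound in the form in which it is proved. Lemma~\ref{key_list_kahn_lemma} controls $\Pr(A_x \mid R_x = R_x^*)$ where $R_x$ is the outcome outside $S_{<t}(x)$ with $t=(t')^2$, not outside a ball of radius comparable to $t'$; the slack between $t'$ and $t$ is consumed inside Kahn's certificate/concentration argument, and no bound is available when one conditions only on events indexed at distance about $2t'$. Your sentence asserting that $A_e$ ``is mildly negatively correlated with every bad event outside $L(e)$'' at that radius is therefore unsupported as written. The fix is the one both Kahn and this paper adopt: take $L(x)$ to consist of all events indexed within distance roughly $t+t'+2$ of $x$, so that every event outside $L(x)$ is determined by $R_x$ and the lemma applies verbatim to the boosted probabilities $b_x$. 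The LLL arithmetic survives unchanged, since $|L(x)| = O\bigl(\Delta^{2(t+t'+2)}\bigr)$ while $b_x \le \Delta^{-3(t+t'+2)}$, and the product still tends to zero for large $\Delta$.
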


\begin{theorem}[\cite{kahnListChrom}]\label{second_th}
If we can avoid the bad events of the first  $\log_{ \frac{1}{ 1- \alpha }  }3K $ iterations, then we can complete the coloring greedily.
\end{theorem}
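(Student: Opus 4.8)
The plan is to derive two structural facts from the hypothesis that all bad events of the first $T:=\log_{1/(1-\alpha)}3K$ iterations are avoided, and then glue them together with a one-line greedy argument. Fact one: the maximum degree of the uncolored multigraph has dropped below $\tfrac{\Delta}{2K}$. Fact two: every still-uncolored edge has at least $(1-o(1))(1+\epsilon)\tfrac{\Delta}{K}$ colors remaining in its list. Once both hold, we process the remaining edges in an arbitrary order and give each a color not yet used on an adjacent edge; this never gets stuck, since at each step fewer than $\tfrac{\Delta}{K}$ colors are forbidden while strictly more than that many remain available. Edges that picked up several colors during the iterative phase simply keep one of them arbitrarily.

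For Fact one, I would track, for each vertex $v$ and iteration $\ell$, its degree $d_\ell(v)$ in the current uncolored multigraph. Avoiding $A_v$ in iteration $\ell$ means that at least an $\bigl(\alpha-(\log\Delta)^{-4}\bigr)$-fraction of the edges at $v$ get colored and deleted, so $d_{\ell+1}(v)\le\bigl(1-\alpha+(\log\Delta)^{-4}\bigr)d_\ell(v)$; iterating from $d_1(v)\le\Delta$ over the first $T$ iterations gives $d_{T+1}(v)\le\Delta(1-\alpha)^T\bigl(1+O((\log\Delta)^{-3})\bigr)^T$. By the choice of $T$ we have $(1-\alpha)^T=\tfrac{1}{3K}$, and since $T=\Theta(\alpha^{-1}\log K)=\Theta(\log\Delta\cdot\log K)$ the correction factor is $\exp\bigl(O(\log K/(\log\Delta)^{2})\bigr)=1+o(1)$, so $d_{T+1}(v)<\tfrac{\Delta}{2K}$ for $\Delta$ large. (Vertices whose degree already lies below $\tfrac{\Delta}{2K}$ at some earlier iteration only stay below it, so they need not be tracked.)

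For Fact two, I would use that avoiding all edge events $A_e$ yields~\eqref{edge_guarantee} in every one of the first $T$ iterations, which, as formalized by Theorem~\ref{first_th}, keeps property~(i) in force: at the start of iteration $T+1$, every uncolored edge $e$ satisfies $\sum_{i:\,G_i\ni e}\Pr[e\in M_i]\ge 1-o(1)$. Combining this with the standard fact that the marginal of a hard-core distribution on matchings is at most the corresponding activity — so $\Pr[e\in M_i]\le\lambda_i(e)\le K/C$ for each such $i$ — and with $|L_e|\ge|\{i:G_i\ni e\}|$ yields $1-o(1)\le|L_e|\cdot\frac{K}{C}$, hence $|L_e|\ge(1-o(1))\frac{C}{K}\ge(1-o(1))(1+\epsilon)\frac{\Delta}{K}$, using $C\ge(1+\epsilon)\chi_e^*(G)\ge(1+\epsilon)\Delta$. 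Together with Fact one, at each step of the greedy phase the edge $e=(u,v)$ has fewer than $d_{T+1}(u)+d_{T+1}(v)<\frac{\Delta}{K}$ forbidden colors while $|L_e|>\frac{\Delta}{K}$, so a color is always free.

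The only genuinely delicate point is bookkeeping the $(\log\Delta)^{-4}$ slacks over $T=\Theta(\log\Delta)$ iterations: one must check that neither the geometric degree recursion of Fact one nor the drift of the expected matching counts governed by Theorem~\ref{first_th} in Fact two accumulates to a constant. This is exactly why the thresholds defining $A_e$ and $A_v$ are taken to be $(\log\Delta)^{-4}$, a factor $(\log\Delta)^{-3}$ below the activation probability $\alpha=(\log\Delta)^{-1}$: summing $T$ of them still leaves an $o(1)$ error, which the fixed multiplicative slack $1+\epsilon$ in $C$ comfortably absorbs. A secondary, essentially cosmetic, issue is the handling of low-degree vertices noted above.
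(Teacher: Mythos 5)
Your proposal is correct and follows essentially the same route as the paper's own (sketched) argument for this theorem, which is cited from Kahn: avoiding the vertex events $A_v$ drives the uncolored maximum degree down by a factor of roughly $1-\alpha$ per iteration to below $\frac{\Delta}{2K}$ after $\log_{1/(1-\alpha)}3K$ iterations, while property (i) together with the marginal bound $\Pr[e\in M_i]\le\lambda_i(e)\le K/C$ leaves each edge more than $\Delta/K$ usable colors, so the greedy completion never gets stuck. Your extra bookkeeping of the $(\log\Delta)^{-4}$ slacks over the $\Theta(\log\Delta\cdot\log K)$ iterations is exactly the quantitative content the paper leaves implicit.
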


Proving Theorems~\ref{first_th} and~\ref{second_th}  is the heart of the proof of Theorem~\ref{list_kahn}. 
The most difficult part is proving that, for any $x \in V \cup E$, the probability of event $A_x$ is very close to $0$ conditioned on any choice of outcomes for distant events. (This is needed in order to apply the Lopsided LLL.)  
Given Theorem~\ref{first_th}, the proof of Theorem~\ref{second_th} follows, as we have already explained, from the fact that in each iteration the expected number of random matchings containing each uncolored edge $e$ is very close to $1$ and, therefore, the probability that $e$ retains  a color remains close to $\alpha$.  

Below we state the key lemma that is proven in~\cite{kahnListChrom}, and which we will also use in the analysis of our algorithm.

Recall the definition of $t$. For a subgraph $H$, we let $R_H$ be the random outcome of our iteration in $G - S_{<t^2}(H)$, i.e., $R_H$ consists of $ \bigcup_{i} \left( M_{i} - S_{< t^2}( H )  \right)$ together with the choices of the activated edges in $G - S_{<t^2 }( H)$ which determine the $\bigcup_{i} \left( F_i - S_{<t^2}( H) \right)$, and the outcomes of the equalizing coin flips for edges in this subgraph.

\begin{lemma}[\cite{kahnListChrom}]\label{key_list_kahn_lemma} 
For every  $x \in E \cup V$ and possible choice $R_x^*$ for $R_x$,   we have $\Pr(A_x \mid R_x= R_x^*) \le  \frac{1}{  \Delta^{3(t^2+t+2) }  } $.
\end{lemma}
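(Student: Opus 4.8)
The plan is to reduce the statement to a concentration-of-measure argument, since $A_x$ is, up to the negligible multiplicative error coming from the decay-of-correlations comparison between $M_i'$ and $Z_i$, an event of the form ``a certain sum of marginals deviates from its target by more than $\tfrac{1}{2(\log\Delta)^4}$''. First I would unpack the conditioning on $R_x = R_x^*$: once we fix all choices of matchings, activations and equalizing coin flips outside $S_{<t}(x)$, the remaining randomness is exactly the choices made inside $S_{<t}(x)$, and the quantity appearing in the definition of $A_x$ (either the sum $\sum_i \Pr(e\in Z_i\mid Q)$ for $x=e$, or the proportion of colored edges at $v$ for $x=v$) becomes a function of these bounded, essentially local pieces of randomness. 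The key structural point, which is exactly what Kahn's choice of $t = (t')^2$ and the nested ball radii buys us, is that conditioning on $R_x^*$ does not interfere with the approximate independence guaranteed by Theorem~\ref{hardcore_dist}: the event $A_x$ is still (close to) $t'$-distant from $x$ in the relevant sense after the conditioning, because $t \gg t'$.

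The core of the argument I would give is then a bounded-differences / martingale concentration inequality (Azuma--Hoeffding or Talagrand, depending on the exact dependency structure Kahn uses). I would expose the randomness inside $S_{<t}(x)$ as a sequence of independent (or negatively associated) choices — one matching $M_i$ per color restricted to the ball, then the activation bits, then the equalizing coin flips — and check that changing any single one of these coordinates changes the controlled quantity by at most roughly $O(\lambda_i(e)) = O(K/C) = O(1/\Delta)$ in the edge case, and by at most $O(1/\Delta)$ per edge in the vertex case; crucially, each color's matching meets the ball $S_{<t'}(x)$ in only $O(1)$ edges in expectation, and the number of colors that matter is $O(\Delta)$, so the sum of squared Lipschitz constants is $O(1/\Delta)\cdot O(\text{poly}\log\Delta)$, which is $o\big(1/(\log\Delta)^{8}\big)$. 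Feeding this into the exponential tail bound gives $\Pr(A_x \mid R_x = R_x^*) \le \exp(-\Omega(\Delta^{\Omega(1)}))$ or at least $\exp(-\Omega((\log\Delta)^{c}))$ for a large power $c$, which comfortably beats the required bound $\Delta^{-3(t+t'+2)}$ since $t,t' = \mathrm{poly}(\log\Delta)$. One has to also account for the expectation of the controlled quantity being within $\tfrac{1}{4(\log\Delta)^4}$ of the target (so that the deviation event really is a large-deviation event); this is where property (i)/\eqref{edge_guarantee} of the current iteration, assumed as the inductive hypothesis, and the decay-of-correlations comparison $|\Pr(e\in M_i'\mid Q) - \Pr(e\in Z_i\mid Q)| \le \tfrac{1}{(\log\Delta)^{20}}\Pr(e\in M_i')$ enter.

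The main obstacle, and the part that genuinely requires Kahn's machinery rather than an off-the-shelf inequality, is controlling the Lipschitz constants in the presence of the hard-core conditioning: a single edge being forced into or out of some $M_i$ can in principle cascade through the hard-core distribution of $G_i'$ and shift many marginals. The resolution is precisely the approximate-independence estimate of Theorem~\ref{hardcore_dist}(b): because we only look at marginals inside $S_{<t'}(x)$ and resample on a ball of radius $t \gg t'$, the effect of any local change on the relevant marginals is damped by the $(1\pm\epsilon)$-factor with $\epsilon$ a small inverse power of $\log\Delta$, keeping the per-coordinate influence at the $O(1/\Delta)$ scale. So I would: (1) set up the conditioning and identify the free randomness; (2) invoke Theorem~\ref{hardcore_dist}(b) to replace $M_i'$-marginals by $Z_i$-marginals and to bound influences; (3) compute the expectation of the controlled quantity and show it is within half the threshold of its target using the inductive hypothesis; (4) apply bounded-differences concentration over the local randomness; (5) observe that the resulting tail $\exp(-\Delta^{\Omega(1)})$ dominates $\Delta^{-3(t+t'+2)}$ for $\Delta \ge \Delta_0$. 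Steps (2) and (4) are where essentially all the work lies, and since the statement is quoted verbatim from \cite{kahnListChrom}, I would ultimately cite that proof for the details while sketching this outline.
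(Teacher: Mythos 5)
The paper does not prove Lemma~\ref{key_list_kahn_lemma} at all: it is imported as a black box from~\cite{kahnListChrom} (``the key lemma that is proven in~\cite{kahnListChrom}''), which is exactly what you do at the end of your outline, so your treatment matches the paper's. Your concentration-plus-correlation-decay sketch is also consistent with the structure of Kahn's own argument (approximate stochastic independence from Theorem~\ref{hardcore_dist} to control influences, then a bounded-differences/Talagrand-type bound), though the genuinely hard steps---bounding per-coordinate influences when a changed matching deletes colored edges from \emph{every} $G_i$, verifying the expectation is within half the threshold, and the exponent bookkeeping against $\Delta^{-3(t+t'+2)}$ with $t=(t')^2=\Theta\bigl((\log\Delta)^{40}\bigr)$---live entirely in the cited proof, as you acknowledge.
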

  
In the remaining sections we will focus on providing an efficient algorithm for  Theorem~\ref{first_th} which, combined with Theorem~\ref{second_th}, will imply the proof of Theorem~\ref{main_list}. 

%
%
%
%

As a final remark, we note that detecting whether bad events $\{ A_e \}_{e \in E(G) }$ are present in a state is not a tractable task since it entails the exact computation of edge marginals of hardcore distributions over matchings. In order to overcome this obstacle, we will define flaws $\{f_e \}_{e \in E(G) }$ whose absence  provides somewhat weaker guarantees than the removal of bad events $\{A_e \}_{e \in E(G) }$, but nonetheless implies~\eqref{edge_guarantee} for every edge. 
To decide whether a flaw $f_e$ is present in a state, we will use the results of~\cite{Alistair1} to estimate the corresponding edge marginals of random variables $M_i$ and $Z_i$ for every color~$i$. Note that since we will only perform an approximation, we will not be able to check for~\eqref{edge_flaw} directly. However,   our approximation will be tight enough so that, even  in this case,~\eqref{edge_guarantee} will still hold for every edge. We give the details below.

\subsection{The Algorithm}\label{list_algo}

Let $\mathcal{U} $ denote the set of uncolored edges and $N = |\bigcup_{ e \in \mathcal{U}} L_e|$ ,  the cardinality of the set of colors that appear in the list of available colors of some uncolored edge. For a color $i \in [N]$, recall that $G_i$ denotes the subgraph of uncolored edges that contain $i$ in their list of available colors. Finally, let $E_i = | E(G_i) | $ and $\mathcal{S} = \mathcal{S}(T)$ be the set of all binary strings of length $T$, where $T$ is a  parameter to be defined later. An element of $\mathcal{S}$ should be thought of  as the input ``randomness" to a subroutine of our algorithm whose purpose will be to estimate edge-marginals of distributions $\{ \mu_i \}_{i \in [N] } $. 

Define  $\Omega = \prod_{ i  \in [N] } \left(   \mathcal{M}(G_i)  \times   \{0,1 \}^{E_i } \times   \{0,1 \}^{E_i } \times    \mathcal{S}^{E_i} \right)$.
We consider an arbitrary but fixed ordering over  $\mathcal{U}$, so that each state $\sigma \in \Omega$ can be represented as  $\sigma =  \left(  (M_1,a_1, h_1,s_1), \ldots,   (M_N, a_N, h_N,s_N)  \right)  $, where $M_i,a_i, h_i$ are the matching,  activation and equalizing coin flip vectors, respectively, that correspond to color $i$,  so that  edge $e$ is activated in $G_i$ if $a_i(e) = 1$ and is marked to be removed if $h_i(e)=1$. Additionally, $s_i$ is the tuple of strings corresponding to the particular element of $\mathcal{S}^{E_i}$ at state $\sigma$. As we will see, tuples $\{s_i \}_{i \in [N] } $ are defined  for purely technical reasons, and specifically for properly bypassing the issue of detecting the presence of $f_e$-flaws that we mentioned earlier. 

\blue{Recalling  Corollary~\ref{easy_cor}, we see that we are able to  obtain a $1 \pm 1/n^{\beta}$ approximation for the marginal $\mu_i(e)$, $i \in [N]$,  of an edge $e$ with probability at least $1 - 1/n^{\beta}$ in polynomial time, where $\beta$ is a fixed and sufficiently large positive constant. This fact will be useful to us in two ways.}

First, recall that for color  $i$ we choose a matching according to probability distribution $\mu_i$, and we define $\mathrm{Eq}_i(e)$ to be the probability of success of the equalizing coin flip that corresponds to  edge $e$ and color $i$. Note that, given access to the marginals of $\mu_i$, the value of $\mathrm{Eq}_i(e)$ can be computed efficiently. Of course, and as we just explained,  we will have only (arbitrarily good) estimates of the marginals of $\mu_i$, but as in the proof of Theorem~\ref{main}, this suffices for our purposes. Indeed, through sampling we can  efficiently get an estimate $\mathrm{Eq}_i'(e)  $ that is within a  $1 \pm 1/n^{c} $ factor of the correct value~$\mathrm{Eq}_i(e)$ with probability at least $1- 1/n^c$, where $c = c(\beta)$ is a sufficiently large constant,  and hence guarantee that the total variation distance between the resampling probability distributions used by the algorithm and the ideal ones is negligible, i.e.,  at most $1/n^{c}$.  (Later we will argue that we can  maximally couple the approximate and ideal distributions and proceed with an argument identical to the one we used in the proof of Theorem~\ref{main}, where we absorb the probability that the coupling fails into the expected polynomial running time of the algorithm — recall our discussion in the beginning of Section~\ref{natooos})

Second, we let $T_1 = T_1(\beta) = \mathrm{poly}(n) $ be a fixed polynomial upper bound on the number of random bits required by the sampling algorithm  (whose existence is guaranteed by Theorem~\ref{sampling_matchings}) for approximating $\Pr(e \in M_i)$, for an arbitrary color $i \in [N]$ and an arbitrary edge $e$, within a factor $1 \pm 1/n^{\beta}$ with probability at least $1- 1/n^{\beta}$. We let $T_2$ be an analogous fixed polynomial upper bound for estimating $\Pr( e \in Z_i \mid \mathrm{Out} ) $ for arbitrary $\mathrm{Out}$, and define $T = T_1 + T_2$.

We let $p$ be the probability distribution over $\Omega$ that is induced by the product of the $\mu_i$'s, activation flips,  equalizing coin flips, and the uniform distribution over $\mathcal{S}^{E_i}$, for each color $i \in [N]$. In other words,  $p$ is the probability distribution over $\Omega$ induced by the iteration along with some extra randomness that is used for sampling from $\mathcal{S}(T)^{E_i}$.

\paragraph{The initial distribution.}  Recall that each edge $e$ initially has a list $L_e$ of size at least $(1+\epsilon) \chi_e^*(G) $.   As we have already seen in Corollary~\ref{computing_the_dist}, the results of \cite{Alistair1,singh2014entropy} imply that for every color $i$ and parameter $\eta = 1/n^{\beta}$, where $\beta >0$ is a sufficiently large constant,  there exists a  $\mathrm{poly}(n,  \ln \frac{1}{ \epsilon} )$-algorithm that computes a vector $\lambda'_i$  such that the induced hard-core distribution $\eta$-approximates in variation distance the hard-core distribution induced by vector $\lambda_i$. 
Let  $p'$ be the distribution obtained in an identical way to  $p$ but using vectors $\lambda'_i$ instead of vectors $\lambda_i$. The initial distribution $\theta$ of our algorithm is obtained by $\eta$-approximately sampling from $p'$. Theorem~\ref{sampling_matchings} implies that this can be done in polynomial time.

\paragraph{Finding and addressing  flaws.} 
We define a flaw $f_v$ for each bad event $A_v$. To define flaw $f_e$ corresponding to an edge $e$, we first recall the definitions of $T_1, T_2$.  In particular, recall that the description of a state $\sigma$ determines  a binary string $s  = s(\sigma) \in \mathcal{S}$ of length $T_1 +T_2$ for each color $i$ and  edge $e \in E(G_i)$.  We will think of $s$ as a concatenation of two strings of length  $T_1$ and $T_2$, respectively,  that can and  will be used  as the ``input randomness" to a sampling algorithm that estimates  $\Pr( e \in M_i )$ and $\Pr( e \in Z_i \mid \mathrm{Out}(\sigma)) $, respectively.  (Here $\mathrm{Out}(\sigma)$  is the evaluation of random variable $\mathrm{Out}$ at $\sigma$.) Indeed,  let $\widetilde{\Pr_{\sigma}}(  e \in M_i )$ be the resulting, \emph{deterministic} (given $s(\sigma) $) estimation of $\Pr( e \in M_i)$ and, similarly, let $\widetilde{ \Pr_{\sigma} } ( e \in Z_i \mid \mathrm{Out}(\sigma)  )$ be the  resulting   estimation of $\Pr( e \in Z_i \mid \mathrm{Out}(\sigma)) $. Finally, we define flaw $f_e$ to be the set of states $\sigma \in \Omega$ such that\\ 
\begin{equation}\label{edge_new_flaw}
\Bigl| \sum_{i: G_i' \ni e  } \widetilde{ \Pr_{\sigma} } ( e \in Z_i \mid \mathrm{Out}(\sigma)  ) - \sum_{i: G_i \ni e } \widetilde{\Pr_{\sigma}}(  e \in M_i )   \Bigr|  
>  \frac{2 }{ 3 (\log \Delta)^4 }  .
\end{equation}
We fix an arbitrary ordering $\pi$ over $V \cup E$. In each step, the algorithm finds the lowest indexed flaw according to~$\pi$ that is present in the current state and addresses it. 

Clearly, checking if  vertex-flaws~$f_v$ are present in the current state can be done efficiently.  The same is true for edge-flaws~$f_e$ given Theorem~\ref{sampling_matchings}. What is perhaps not so clear, however, is whether the definition of $f_e$-flaws is sufficient for our purposes, and how it relates to the definition of bad events $A_e$.

To address these questions, recall first that we can use the results of~\cite{Alistair1} to approximate the edge marginals of  the corresponding distributions within a $(1\pm \eta )$-factor  with probability at least $1- \eta $, in time $\mathrm{poly}(n,  \frac{1}{\eta} )$, where $\eta = 1/n^{\beta}$. Our approach will be to  first argue that, assuming our edge marginal estimates were \emph{always} within a $(1\pm \eta)$-factor of the true values, then our algorithm would terminate in expected polynomial time, and  then use a coupling argument similar to the one described in the beginning of Section~\ref{natooos} to show that we can make this assumption in our analysis at a negligible price.

More formally, given a state $\sigma =  \left(  (M_1,a_1, h_1,s_1), \ldots,   (M_N, a_N, h_N,s_N)  \right) $, let $ M(\sigma) = (M_1, \ldots, M_N)$, $a(\sigma) = (a_1, \ldots, a_N)$, and $h(\sigma) = (h_1, \ldots, h_N)$, and define $\xi(\sigma) =  (M(\sigma), a(\sigma), h(\sigma) ) $. For each edge $e$, color $i$, and state $\sigma$, let  $\mathcal{S}_i'(e) =    \mathcal{S}_i'(e,\xi(\sigma))  \subseteq \mathcal{S}$  be the set of strings with the property that, if our marginal estimators use them as input randomness in state $\sigma$ for edge $e$, then they are guaranteed to provide a $(1\pm\eta)$-factor approximation of the true marginals of $e$.  Crucially, observe that   $| \mathcal{S}_i'(e) |/ |\mathcal{S} |  \ge 1-1/n^{c} $ for a  constant $c = c(\beta) $ which can be made arbitrarily large by increasing $\beta$.
Let $\Omega'  \subseteq \Omega$ be the subspace of~$\Omega$ induced by removing every state $\sigma =  \left(  (M_1,a_1, h_1,s_1), \ldots,   (M_N, a_N, h_N,s_N)  \right)  $ such that there exists an $i \in [N]$ for which $s_i \notin \prod_{e \in E(G_i)} \mathcal{S}_i'(e)$. That is, $\Omega'$ is the subspace of $\Omega$ in which our edge-marginal approximations are guaranteed to be within a $(1 \pm \eta)$-factor of the true values. Finally,  let $\mu$ be the distribution induced by conditioning on the event that a sample from $p$  belongs to~$\Omega'$.  Equivalently, to take a sample from $\mu$ we  first sample from the product of the $\mu_i$'s, activation flips,  and equalizing coin flips to obtain a tuple $\xi = (M,a,h)$,  and then sample uniformly an element from $\prod_{i=1}^N \prod_{e \in E(G_i) } \mathcal{S}_i'(e,\xi ) $.

The following two lemmas justify our definition of $f_e$-flaws. Specifically, Lemma~\ref{avoiding_fe_suffices} shows that avoiding all $f_e$-flaws is sufficient for the purposes of our analysis (recall Theorem~\ref{first_th}), 
while Lemma~\ref{key_list_kahn_lemma_new}  bounds the probability of each  flaw (with respect to $\mu$).

\begin{lemma}\label{avoiding_fe_suffices}
Condition~\eqref{edge_guarantee} holds for every edge $e$ and every state $\sigma \in \Omega'$ such that $\sigma \notin f_e$.
\end{lemma}
\begin{proof}
Since for every state $\sigma \in \Omega'$ we have that $\widetilde{ \Pr_{\sigma} } ( e \in Z_i \mid \mathrm{Out}(\sigma)  ), \widetilde{\Pr_{\sigma}}(  e \in M_i )    $ are within a $(1 \pm \eta)$ factor of the respective true marginals, we have that for every state $\sigma  \in \Omega' \setminus f_e$:
\begin{align*}
\Bigl| \sum_{i: G_i' \ni e  }\Pr ( e \in Z_i \mid \mathrm{Out}(\sigma)  ) - \sum_{i: G_i \ni e } \Pr(  e \in M_i )   \Bigr|  \le   \frac{2 }{ 3 (\log \Delta)^4 } + 2 \eta.
\end{align*}

Recalling that $\Pr( e \in M_i' \mid \sigma ) $ is within a  $(1 + \frac{1}{  (\log \Delta)^{20}} )$-factor  of $\Pr(   e \in Z_i \mid \sigma ) $, we can  deduce that if  flaw $f_e$ is not present in a state $\sigma \in \Omega'$, then~\eqref{edge_guarantee} holds for sufficiently large $\beta,\Delta$, as claimed. \end{proof}

\begin{lemma}\label{key_list_kahn_lemma_new} 
For every  $x \in E \cup V$ and possible choice $R_x^*$ for $R_x$ we have $\mu(f_x \mid R_x= R_x^*) \le  \frac{1}{  \Delta^{3(t^2+t+2) }  } $.
\end{lemma}
\begin{proof}
\blue{For $f_v$ flaws the claim follows almost immediately from Lemma~\ref{key_list_kahn_lemma},  so we focus on  proving it for the case of $f_e$-flaws. In particular, we  show that 
\begin{align}\label{this_very_lemma_goal}
\mu(f_e \mid R_e= R_e^*) \le  \Pr(A_e \mid R_e= R_e^*  ) 
\end{align}
as this implies our claim per Lemma~\ref{key_list_kahn_lemma}.

Recall that $f_e$ is a subset of $\Omega$, i.e., the ``augmented" space where each state is associated with a tuple of strings from $\prod_{i \in [N]}  \mathcal{S}^{E_i}$, while event $A_e$ is a subset of the original probability space that is induced by the family of random matchings, activations, and equalizing coin flips for each edge. Recall also that, by definition, $\mu$ assigns zero probability mass to $f_e \setminus \Omega'$, i.e., the part of $f_e$ where we have no guarantees about the quality of approximation of our edge-marginal estimators.  In order to establish~\eqref{this_very_lemma_goal} we ``project" $f_e \cap \Omega'$ to the original probability space to get an event $\widetilde{A}_e$. That is, the elements of $\widetilde{A}_e$ are induced by the elements of $f_e$ by ignoring the coordinate that corresponds to the tuple of strings from $\prod_{i \in [N]}  \mathcal{S}^{E_i}$.  By definition, $\mu( f_e \mid R_e = R_e^* ) = \Pr( \widetilde{A}_e \mid R_e = R_e^* )$. 

In addition, for every elementary event $\xi \in \widetilde{A}_e$ we have
\begin{align}\label{Afact}
\Bigl| \sum_{i: G_i' \ni e  }\Pr ( e \in Z_i \mid \mathrm{Out}(\xi) ) - \sum_{i: G_i \ni e } \Pr(  e \in M_i )   \Bigr|   \ge   \frac{2 }{ 3 (\log \Delta)^4 } - 2 \eta  > \frac{1 }{ 2 (\log \Delta)^4 },
\end{align}
for sufficiently large $\Delta$. Note that the first inequality follows from~\eqref{edge_new_flaw} and the fact that we only consider elements in $f_e \cap \Omega'$, i.e., states in which our edge-marginal approximations are within a $(1\pm \eta)$-factor from the true values.
Recalling~\eqref{edge_flaw}, we see that inequality~\eqref{Afact} implies that $\Pr( \widetilde{A}_e \mid R_e = R_e^* ) \le \Pr(A_e \mid R_e = R_e^*)  $ (and, therefore, also~\eqref{this_very_lemma_goal}), concluding the proof.

}

\end{proof}

Summarizing, we may assume without loss of generality that we are able to accurately and efficiently search for edge-flaws~$f_e$, and that their probability with respect to measure $\mu$ is bounded above by $\Delta^{-3(t^2+t+2)}$ conditional on any instantiation of $R_e$.

Recall the procedure {\sc Resample} described in Section~\ref{chrom_algo}. Below we describe procedure { \sc Fix } that takes as input a subgraph $H$ and a state $\sigma$. In the description of { \sc Fix }  below we invoke procedure { \sc Resample } with an extra parameter, namely an activity vector $\lambda_i'$  for each color $i$. By that we mean that in  Lines~\ref{seven},~\ref{eight} of { \sc Resample } we use the vector $\lambda_i'$ to define $p$. Finally, recall that we defined $t = 8(K+1)^2 (\log \Delta )^{20} + 2$.

\begin{algorithm}
\begin{algorithmic}[1] 
\Procedure{Fix}{$H,\sigma$}
\State  Let $\sigma =  \left( (M_1,b_1, h_1,s_1),   \ldots, (M_N, b_N,h_N,s_N)  \right)$
\State  $(M_1',\ldots, M_N' )\leftarrow \text{{ \sc  Resample}}(H,(M_1, \ldots, M_N) ,t^2,\lambda_{i}')$ \label{first}
\For{$i = 1$ to $N$}
\State Update $a_i$ to $a'_i$ by activating independently each edge in $  G_{i,<t^2+1}( H )$ with probability $\alpha$ \label{second}
\State Update $h_i$ to $h'_i$ by flipping the  equalizing coin corresponding to each edge in  $ G_{i,<t^2+1}( H )$ \label{third}
\State Update $s_i$ to $s_i'$ by uniformly sampling from  $\mathcal{S}^{E_i}$ \label{fourth}
\EndFor
\State Output $\sigma =  \left( (M_1',a_1', h_1',s_1') , \ldots, (M_N', a_N',h_N', s_N')  \right)$	
\EndProcedure
\end{algorithmic}
\end{algorithm}

Theorem~\ref{sampling_matchings} implies that  procedure  {\sc Fix} runs in polynomial time for any input subgraph $H$ and state $\sigma$. To address flaws $f_v, f_{ \{ u_1, u_2 \} } $ in  a state $\sigma$  we invoke {\sc Fix($ \{ v \},\sigma$)} and    {\sc Fix($  \{u_1, u_2 \}   ,\sigma$)}, respectively.

\subsection{Proof of Theorem~\ref{main_list}}\label{proofara}

%
%

Similarly to the proof of Theorem~\ref{main}, for our analysis we will assume that  our algorithm samples from the ``ideal"  matchings distributions, i.e., the ones induced by the vectors $\lambda_i$ rather than by the approximations~$\lambda_i'$. We will also assume that each equalizing coin corresponding to a color $i \in [N]$ and an edge $e$ is flipped with probability of success $\mathrm{Eq}_i(e)$ instead of $\mathrm{Eq}_i'(e)$, and that we update string $s_i(e)$ by sampling uniformly from  $\mathcal{S}_i'(e)$ instead of $\mathcal{S} $. Under these assumptions, we will prove that our algorithm terminates in expected polynomial time. Recalling the proof of Theorem~\ref{main}, the latter  allows us to invoke  an  identical coupling argument and show that the price of making these assumptions is to increase the failure probability of our algorithm by an additive $1/n^{\gamma}$, where $\gamma = \gamma(\beta)$ can be made arbitrarily large by increasing $\beta$. This error probability can be subsumed by the expected running time of our algorithm.

For two flaws $f_{x_1}, f_{x_2}$, where $ x_1, x_2 \in V \cup E$, we consider the causality relation $f_{x_1} \sim f_{x_2} $ iff $\mathrm{dist}(x_1, x_2) \le t^2 + t+2$.  By inspecting procedure {\sc Fix}  it is not hard to verify that  this is a valid choice for a causality graph in the sense that no flaw $f$ can cause flaws outside $\Gamma(f)$. This is because, in order to determine whether a flaw $f_x$ is present in a state $\sigma$, we only need information about $\sigma$ in $G \cap S_{<t }(x) $, and  procedure {\sc FIX} locally modifies the state within a radius at most $t^2$ of the input subgraph $H$.

The algorithmic proof of Theorem~\ref{first_th}, which as we explained earlier is the key ingredient
in making Kahn's result constructive, follows almost immediately by combining Theorem~\ref{our_theorem} with  Lemma~\ref{termination_list} below, whose proof can be found in Section~\ref{proof_main_list_lemma}.

\begin{lemma}\label{termination_list}
Let $f \in \{f_e, f_v \}$   for   an edge $e$ and  a vertex $v$. Then:
\begin{align*}
\gamma_f \le \frac{1 }{ \Delta^{3(t^2 + t +2) } }  ,
\end{align*}
where the charges are computed with respect to measure $\mu$ and the algorithm that samples from the ideal distributions.
\end{lemma}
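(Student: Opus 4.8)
\textbf{Proof strategy for Lemma~\ref{termination_list}.}
The plan is to reduce the charge bound to the marginal-probability bound already available from Lemma~\ref{key_list_kahn_lemma}, exactly mirroring the argument for Lemma~\ref{termination}(a) in Section~\ref{charge_lemma}. The key identity to establish is that, for every flaw $f\in\{f_e,f_v\}$ associated with a subgraph $H$ (a single vertex, or the two endpoints of $e$),
\begin{align}\label{eq:plan_charge_identity}
\gamma_f \;=\; \max_{\tau\in\Omega}\;\mu\bigl(\sigma\in f \;\bigm|\; R_H(t,\sigma)=R_H(t,\tau)\bigr)\enspace,
\end{align}
where $R_H$ is the ``distant outcome'' random variable defined before Lemma~\ref{key_list_kahn_lemma} (the matchings, activations, and equalizing flips outside $S_{<t}(H)$). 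Granting \eqref{eq:plan_charge_identity}, Lemma~\ref{key_list_kahn_lemma} immediately gives $\gamma_f\le \Delta^{-3(t+t'+2)}$, since $A_x\subseteq f_x$ for all $x\in V\cup E$ and (by the discussion preceding the lemma) the conditional probability of the possibly-inflated flaw $\widetilde{f_e}$ is still bounded by $\Delta^{-3(t+t'+2)}$; for $f_v$ the flaw equals the bad event, so the bound is direct.

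The heart of the argument is proving \eqref{eq:plan_charge_identity}, and here I would follow the template of the proof of part (a) of Lemma~\ref{termination} almost verbatim. First I would observe that procedure {\sc Fix}$(H,\cdot)$ only touches the matchings, activation bits, and equalizing bits inside $S_{<t+1}(H)$, and that the resampling distributions it uses (hard-core sampling on $G_{i,<t+1}$, independent activations, independent equalizing flips) depend on the input state $\sigma$ only through $R_H(t,\sigma)$. Hence the set of states $\mathrm{In}_H(\tau)$ that {\sc Fix} can map into a fixed target $\tau$ is precisely $\{\omega\in f : R_H(t,\omega)=R_H(t,\tau)\}$. Then, as in \eqref{fourth_charge}--\eqref{five_charge}, the ratio $\mu(\omega)/\mu(\tau)$ factors over colors and cancels against all contributions coming from edges/bits outside $S_{<t+1}(H)$, because $\mu$ is a product measure over colors and, within each color, a product of the hard-core weight $\lambda_i(M_i)$, the activation probabilities, and the equalizing-flip probabilities. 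Multiplying by $\rho_f(\omega,\tau)$ and summing over $\omega\in f$, the per-color sums become independent (the state space is a product over colors), so the whole expression collapses — exactly as in the chain \eqref{deuteri}--\eqref{tetarti} — to $\mu(\sigma\in f \mid R_H(t,\sigma)=R_H(t,\tau))$. Partitioning $f$ according to the ``signature'' of $\sigma$ inside $S_{<t+1}(H)$ that certifies membership in $f$ (the analogue of the sets $X_v$, $\Omega_{v,x}$) is what makes the factorization go through; for $f_v$ this signature is the count of incident colored edges, and for $f_e$ it is determined by the instantiations of $M_i,a_i,h_i$ on $S_{<t'}(e)\subseteq S_{<t+1}(H)$ for the relevant colors $i$.

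The main obstacle I anticipate is bookkeeping rather than conceptual: unlike the proof of Lemma~\ref{termination}, here a ``state per color'' is a triple $(M_i,a_i,h_i)$ rather than just a matching, so I must be careful that the activation bits and equalizing bits outside $S_{<t+1}(H)$ are genuinely left untouched by {\sc Fix} and genuinely cancel in the ratio $\mu(\omega)/\mu(\tau)$ — this is where one uses that $\mu$ puts independent product measure on these bits and that {\sc Fix} only rerandomizes those inside $S_{<t+1}(H)$. A second, minor point is that {\sc Fix} resamples within radius $t+1$ while flaw-presence is determined within radius $t'$, and Lemma~\ref{key_list_kahn_lemma} conditions on $R_x$ with $t=(t')^2$; I need $S_{<t+1}(H)\subseteq S_{<t}(H)$-complement bookkeeping to line up, i.e.\ that conditioning on $R_H(t,\cdot)$ is at least as strong as fixing everything {\sc Fix} does not change, which holds since $t+1\le t$ fails numerically but the relevant containment is $\{$edges/bits {\sc Fix} preserves$\}\supseteq\{$edges/bits outside $S_{<t}(H)\}$, so conditioning on $R_H(t,\cdot)$ conditions on a subset of the preserved coordinates and the identity \eqref{eq:plan_charge_identity} still yields an \emph{upper} bound $\gamma_f\le\max_\tau \mu(f\mid R_H(t,\sigma)=R_H(t,\tau))$, which is all we need. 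Finally, one invokes Lemma~\ref{key_list_kahn_lemma} with $x=v$ or $x=e$ to finish; the value $t'=8(K+1)^2(\log\Delta)^{20}+2$ is the distance at which the Kahn--Kayll correlation decay (Theorem~\ref{hardcore_dist}) makes $Z_i$ and $M_i'$ agree to within the $(\log\Delta)^{-20}$ factor used in defining $A_e$, which is what powers Lemma~\ref{key_list_kahn_lemma} in the first place.
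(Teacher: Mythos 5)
Your proposal follows essentially the same route as the paper's own proof: it establishes the identity $\gamma_{f_x}=\max_{\tau\in\Omega}\mu\bigl(\sigma\in f_x\mid R_x(t,\sigma)=R_x(t,\tau)\bigr)$ by the same per-color factorization and cancellation used for Lemma~\ref{termination}(a), now with the state per color being the triple of matching, activation bits, and equalizing flips, and then concludes via Lemma~\ref{key_list_kahn_lemma}. The bookkeeping points you flag (untouched bits outside $S_{<t+1}(H)$, the $f_e$ versus $A_e$ approximation issue) are exactly the ones the paper handles, so the proposal is correct.
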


\begin{proof}[Constructive Proof of Theorem~\ref{first_th}]

Recall from~\eqref{symmetric} that, setting $x_f = \frac{1}{1+\max_{f \in F} | \Gamma(f) | }$ for each flaw $f$, condition~\eqref{generalAlgoLLL} with   $\epsilon = 1/4$  is implied by
\begin{align}\label{final_const}
\max_{f \in F} \gamma_f  \cdot \bigl(1 + \max_{f \in F} | \Gamma(f) | \bigr) \cdot  \mathrm{e} \le 3/4    .
\end{align}
Clearly, for each flaw $f$, $|\Gamma(f)| = O(  \Delta^{ 2(t^2+t+2) } ) $ so, by Lemma~\ref{termination_list}, condition~\eqref{final_const} is satisfied for all sufficiently large $\Delta$. 
Thus, Theorem~\ref{our_theorem} implies that, for every multigraph with large enough degree $\Delta_0$, the algorithm for each iteration terminates after an expected number of
\begin{align*}
O  \left(   (m+n)  \log_2 \left( \frac{1}{ 1-1/ \Delta^{ 2(t^2+t+2)  } } \right) \right) = O( n^2 ) 
\end{align*} 
steps.
\end{proof}
Finally, the proof of Theorem~\ref{main_list} is concluded by combining the algorithm for Theorem~\ref{first_th} with the greedy algorithm of Theorem~\ref{second_th}.  It remains only for us to prove
Lemma~\ref{termination_list} stated above.  This we do in the next subsection.

\subsubsection{Proof of Lemma~\ref{termination_list}.}\label{proof_main_list_lemma}

Let $\Omega_1 = \prod_{i  =1}^{N} \mathcal{M}(G_i)  $ and $  \Omega_ 2 =  \Omega_3 = \prod_{i =1 }^N \{ 0,1 \}^{ E_i} $. For notational convenience, sometimes we write $\Omega_1^i = \mathcal{M}(G_{i}) $ and $\Omega_2^i =  \Omega_3^i = \{0,1\}^{E_{i}}$, for $i \in [N]$.

 Let $\nu_1$ be the distribution over $\Omega_1$  induced by the product of distributions $\mu_i$, $ i \in [N]$. Let also  $\nu_2, \nu_3$ be the distributions over $\Omega_2$ and $\Omega_3$ induced by the product of activation and equalizing coin flips of each color $i \in [N]$, respectively. Recall that we can take a sample from $\mu$ by sampling from  $\nu_1 \times \nu_2 \times \nu_3$ to obtain a tuple  $\xi = (M,a,h) \in \Omega_1 \times \Omega_2 \times \Omega_3$, and then sample uniformly from an element from $\prod_{i=1}^N \prod_{e \in E(G_i) } \mathcal{S}_i'(e,\xi)$. Moreover, note that each $\nu_j$ is the product of $N$ distributions $\nu_j^i$, one for each color $i \in [N]$. For example, notice that $\nu_1^i $ is another name for $ \mu_{i}$, while $\nu_2^i$ is the product measure over the edges of $G_{i}$ induced by flipping a coin with probability $\alpha$ for each edge.

For  $\sigma_1 = (M_1, M_2, \ldots, M_N)  \in \Omega_1$,  a subgraph $H$,  and an integer $d >0$, we define the quantities $ Q_H(d,\sigma_1) = \left(M_1 - S_{<d}(H),  \ldots, M_N - S_{ <d}(H)  \right)$ and $Q_H^i(d,\sigma_1) = M_i - S_{< d } (H)$, similarly  to the proof of Lemma~\ref{termination}.  Moreover, for $\sigma_2 \in \Omega_2$ that represents the outcome of the activations,   we let $A_H(d, \sigma_2 )$ denote  the restriction of $\sigma_2$ to $M_i - S_{<d}(H) $ for each color $i \in [N]$.  In the same fashion, for $\sigma_3 \in \Omega_3$ that represents the outcome of the equalizing coin flips, we let $C_H(d, \sigma_3)$ denote the restriction of $\sigma_3 $ to $ M_i - S_{<d}(H) $ for each color $i \in [N]$.  For $\sigma_2 \in \Omega_2, \sigma_3 \in \Omega_3$, we also define $A_H^i(d,\sigma_2)$ and $C_H^i(d,\sigma_3)$, $i \in [N]$, similarly to $Q_H^i(d,\sigma_1)$. Finally, for $\xi= (\sigma_1, \sigma_2, \sigma_3) \in \Omega_1 \times \Omega_2 \times \Omega_3 $, define $R_H(d, \xi  ) = ( Q_H(d,\sigma_1), A_H(d,\sigma_2), C_H(d,\sigma_3) )  $.

Our goal will be to show that, for every $x \in V \cup E$,
\begin{align}\label{list_lemma_goal}
\gamma_{f_x}  = \max_{\tau \in \Omega } \mu(  \sigma \in f_x \mid R_x(t^2, \sigma)  =  R_x(t^2, \tau )  ) ,
\end{align}
where $\sigma$ is a random state distributed according to $\mu$. Note that combining~\eqref{list_lemma_goal} with Lemma~\ref{key_list_kahn_lemma_new} will conclude the proof of 
Lemma~\ref{termination_list}.

We only prove~\eqref{list_lemma_goal}   for  $f_e$-flaws, since the proof for $f_v$ flaws is very similar. Observe that whether flaw $f_e$ is present at a state $\sigma$ is determined by $\bigcup_{i=1}^N\left( G_i \cap G_{<t}(e) \right)$, the entries of the activation and equalizing flip vectors of each color $i \in [N]$ that correspond to edges in  $G_i \cap G_{<t}(e)$, and the value of the ``input randomness" strings $\{ s_i(e) \}_{i=1}^N$.  With that in mind,  for each color $i $  let $M_i(t,e) = M_i \cap E(G_i \cap G_{<t}(e) )$  and $a_i(t,e), h_i(t,e) $ denote the (random) vectors constraining   the entries of the activation and equalizing coin flip vectors for color $i$ that correspond to the edges of $G_{i} \cap G_{<t}(e)$. Let also $\mathcal{D}_i(t,e) $ denote the domain of possible values of $(M_i(t,e), a_i(t,e), h_i(t,e), s_i(e)  ) $.

The fact that we can determine whether $f_e$ is present in a state by examining local information around $e$  implies that there exists a set $X_e = X_e(t)$ of vectors of size $N$ such that the  $i$-th entry of a vector $x \in X_e$ is an element of $\mathcal{D}_{i}(t,e)$, and so that
\begin{align}\label{decomp_0}
f_e = \bigcup_{ x \in X_e }   \bigcap_{i  \in [N]  }  \left( \left(   M_{i}(t,e) ,a_{i}(t,e) , h_{i}(t,e),s_i(e)   \right) =  x_i \right).
\end{align}
For a state $\sigma \in \Omega '$, let $x_e^{\sigma}$ be the $N$-dimensional  random vector whose $i$-th entry is  $(M_{i}(t,e), a_{i}(t,e), h_{i}(t,e), s_i(e) ) $.
According to~\eqref{decomp_0}, for $\tau \in \Omega'$  we have
\begin{align}
\mu( \sigma \in f_e \mid R_e(t^2,\sigma)  = R_e(t^2,\tau))  =   \sum_{x \in X_e } \prod_{i=1}^N \mu( x_{e,i}^{\sigma} = x_i \mid R_e(t^2,\sigma)  = R_e(t^2,\tau))\label{decomp_1} ,
\end{align}
since the random choices of matching, activation, and equalizing coin flips for each color are independent.
For an $N$-dimensional vector $x$ whose $i$-th entry is an element of $\mathcal{D}_{i}(t,e)$,  we write $x_i(j)$ to denote the $j$-th element of tuple $x_i$. Thus, recalling the definition of the distributions $\nu_j^i$, we have 
\begin{align}\label{decomp_2}
\mu( x_{e,i}^{\sigma} = x_i \mid R_e(t^2,\sigma) =  R_e(t^2,\tau))  = \prod_{j =1}^3 \nu_j^i( x_{e,i}^{\sigma}(j )=  x_i(j)  \mid  R_e(t^2,\sigma)  = R_e(t^2,\tau) ) \cdot \frac{1}{ | \mathcal{S}_i'(e, \xi_i )  |   }  ,
\end{align}
where  $\xi_i = (x_i(1), x_i(2), x_i(3)  )$.

 Recall now that for a subgraph $H$, multigraph  $G_{<d+1}(H)$  is induced by $S_{<d+1}(H)$, and   $\mathcal{M}_{d+1}^i(H,\sigma) $  is the set of matchings of $G_{<d+1}(H) $ that are compatible with $Q_H^i(d,\sigma_1)$. Hence,
 \begin{align}
 &\nu_1^i( x_{e,i}^{\sigma}(1 )=  x_i(1)  \mid  R_e(t^2,\sigma)  = R_e(t^2,\tau) )  \nonumber \\
  &=\;  \nu_1^i( x_{e,i}^{\sigma}(1 )=  x_i(1)  \mid  Q_e^i(t^2,\sigma_1)  = Q_e^i(t^2,\tau_1) )  \nonumber \\
 & =\;  \frac{ \nu_1^i(( x_{e,i}^{\sigma}(1 )=  x_i(1) ) \cap  (Q_e^i(t^2,\sigma_1)  = Q_e^i(t^2,\tau_1) )  ) }{ \nu_1^i(  Q_e^i(t^2,\sigma_1)  = Q_e^i(t^2,\tau_1)    )   }  \nonumber  \\
& =\; \frac{ \sum_{ M \in \mathcal{M}_{t^2+1}^i(e,\tau_1), M \cap E(G_{<t}(e))  = x_{i}(1) }  \lambda_{i} (M) }{ \sum_{M \in \mathcal{M}_{t^2+1}^i(e,\tau_1) }  \lambda_{i}(M) }  \label{decomp_3}.
 \end{align}
Moreover, we clearly have 
\begin{align}
&\nu_2^i( x_{e,i}^{\sigma}(2 )=  x_i(2)  \mid  R_e(t^2,\sigma)  = R_e(t^2,\tau) ) \nonumber \\
 &\quad =  \nu_2^i( a_{i}(t,e) = x_i(2) ) \label{decomp_4}  ; \\
&\nu_3^i( x_{e,i}^{\sigma}(3 )=  x_i(3)  \mid  R_e(t^2,\sigma)  = R_e(t^2,\tau) )  \nonumber\\
  &\quad = \nu_3^i(  h_{i}(t,e) = x_i(3) )  \label{decomp_5}.
\end{align}
We will use~\eqref{decomp_1}-\eqref{decomp_5} to show that, for $\sigma$ distributed according to $ \mu $, and any state $\tau \in \Omega'$,
\begin{align}\label{goalaki}
\sum_{ \omega \in f_e} \frac{ \mu(\omega) }{  \mu(\tau) } \rho_{f_e}(\omega,\tau)  = \mu( \sigma \in f_e \mid R_e(t^2, \sigma) = R_e(t^2,\tau) ) .
\end{align}
According to the definition of $\gamma_{f_e} $, maximizing~\eqref{goalaki}  over $\tau \in \Omega'$ yields~\eqref{list_lemma_goal}.

To compute the sum in~\eqref{goalaki} we need to determine the set of states $\mathrm{In}_e(\tau) = \{  \omega: \rho_{f_e}(\omega, \tau) > 0  \}$. We claim  that for each $\omega  \in \mathrm{In}_e(\tau)$ we have that $R_e(t^2,\omega) = R_e(t^2,\tau)$. 

To see this,  let 
\begin{eqnarray*}
&\omega   =  (\omega_1, \omega_2, \omega_3, \omega_4)  \nonumber \\
 & =   \left( ( \omega_1^1, \ldots, \omega_1^N), (\omega_2^1, \ldots, \omega_2^N ), (\omega_3^1, \ldots \omega_3^N   ), \omega_4     \right)   ;\\
&\tau  =  (\tau_1, \tau_2, \tau_3,\tau_4) \nonumber  \\
&= \left( (\tau_1^1, \ldots, \tau_1^N ), (\tau_2^1, \ldots, \tau_2^N   ), (\tau_3^1, \ldots ,\tau_3^N ),\tau_4 \right) ,
\end{eqnarray*}
where $\omega_j, \tau_j \in \Omega_j$ and   $\omega_j^i, \tau_j^i \in \Omega_j^i$ for $j \in \{ 1,2,3\}$ and $\omega_4,\tau_4$ are tuples of input randomness strings. To express the probability distribution $\rho_{f_e}(\omega, \cdot)$ in a convenient way we consider the following  $3 N$ distributions. For each $i \in [N]$ we have a probability distribution $\rho_{f_e}^{i,1}(\omega_1^i, \cdot) $  corresponding to Line~\ref{first} of {\sc{ FIX}} and color $i$, and similarly, for $\omega_2^i, \omega_3^i$ we have probability distributions $\rho_{f_e}^{i,2}(\omega_2^i, \cdot), \rho_{f_e}^{i,3}( \omega_3^i, \cdot)$, corresponding to Lines~\ref{second},~\ref{third}  of {\sc{FIX}}  and color $i$, respectively. Recalling procedure {\sc{Resample}}, we see that the support of $\rho_{f_e}^{i,1}(\omega_1^i, \cdot )$ is  $\mathcal{M}_{t^2+1}^{i}(e,\omega_1 )$, and hence it must be the case that $Q_e^i(t^2, \omega_1) = Q_e^i(t^2, \tau_1)$ for every $i \in [N]$ and state $\omega \in \mathrm{In}_e(\tau) $. Similarly, by inspecting procedure {\sc FIX} one can verify that $A_e^i(t^2,\omega_2) = A_e^i(t^2,\tau_2) $ and that $C_e^i(t^2, \omega_3 ) = C_e^i( t^2,\tau_3) $ for each $i \in [N]$. Hence, $R_e(t^2,\omega) = R_e(t^2,\tau)$, as claimed.

For each $\omega \in f_e$,
\begin{align}\label{manip_1}
 \frac{  \mu(\omega) }{\mu(\tau)} \rho_{f_e}( \omega,\tau )  =    \prod_{i =1}^N \frac{  1 }{ |\mathcal{S}_i'(e,  \xi(\omega)   ) | }  \prod_{j =1}^3 \frac{  \nu_j^i(\omega_{j}^i) }{ \nu_j^i(\tau_j^i)} \rho_{f_e}^{i,j} ( \omega_j^i, \tau_j^i)    =:  \prod_{i =1}^N \frac{  1 }{ |\mathcal{S}_i'(e,  \xi(\omega)   ) | }  \prod_{j =1}^3 r_{i,j}(\omega),
\end{align}
since we have assumed that in Line~\ref{fourth} of { \sc FIX }  we update string $s_i(e)$ by sampling uniformly from  $\mathcal{S}_i'(e)$ instead of $\mathcal{S} $.

We will now give an alternative expression for each $r_{i,j}(\omega)$ in order to relate~\eqref{manip_1} to~\eqref{goalaki}. We start with $r_{i,1}(\omega)$. The fact that $Q_e^i(t^2,\omega_1) = Q_e^i(t^2, \tau_1)$ for each  $\omega \in \mathrm{In}_e(\tau) $ implies that 
\begin{align}\label{manip_2}
\frac{ \nu_1^i( \omega_1^i )}{ \nu_1^i( \tau_1^i) }  =  \frac{  \lambda_{i}( \omega_1^i  \cap E(G_{i,<t^2+1} (e)  )) }{  \lambda_{i}( \tau_1^i \cap E(G_{i,<t^2+1 }(e)  ) ) }.
\end{align}
To see this recall the definition of mutligraph $G_{i,<d+1}(H) $ in the text above the definition of procedure {\sc Resample}.

Furthermore, since we have assumed that the hard-core distribution in Lines~\ref{seven},~\ref{eight} of  {\sc{Resample}} is induced by the ideal vector of activities $\lambda_{i}$, we have 
\begin{align}\label{manip_3}
\rho_{f_e}(\omega_1^i, \tau_1^i)= \frac{  \lambda_{i}( \tau_1^i  \cap E(G_{i,<t^2+1}(e) )  }{ \sum_{ M \in \mathcal{M}_{t^2+1}^i(e,\omega_1)  } \lambda_{i}(M)  } .
\end{align}
Combining~\eqref{manip_2} with~\eqref{manip_3} and the fact that $Q_e^i(t^2,\omega_1) = Q_e^i(t^2, \tau_1)$ we obtain
\begin{align}
r_{i,1}(\omega) =     \frac{  \lambda_{i}( \omega_1^i  \cap E(G_{i,<t^2+1}(e) )  }{ \sum_{ M \in \mathcal{M}_{t^2+1}^i(e,\tau_1)  } \lambda_{i}(M)  } \label{manip_4} .
\end{align}
Recall now the definitions of $a_{i}(t,e)$ and $h_{i}(t,e)$. The fact that $A_e^i(t^2, \omega_2) = A_e^i(t^2, \tau_2) $  for each  $\omega \in \mathrm{In}_e(\tau) $ implies that
\begin{align}\label{manip_5}
\frac{ \nu_2^i( \omega_2^i )}{ \nu_2^i( \tau_2^i) }  = \frac{ \nu_2^i( a_{i}(t,e)   = x_{e,i}^{\omega}(2) )  }{\nu_2^i( a_{i}(t,e)   = x_{e,i}^{\tau}(2) )  }     .
\end{align}
Further, since in Line~\ref{second} of { \sc FIX } we simply flip a coin independently with success probability $\alpha$ for each edge of $G_{i, <t^2+1}(e) $, we have
\begin{align}\label{manip_6}
\rho_{f_e}(\omega_2^i, \tau_2^i) = \frac{\nu_2^i( a_{i}(t,e)   = x_{e,i}^{\tau}(2) ) }{  \sum_{a } \nu_2^i( a_{i}(t,e) = a )} ,
\end{align}
where the sum in the denominator ranges over all the possible values for $a_{i}(t,e)$. Thus, combining~\eqref{manip_5} with~\eqref{manip_6} we get
\begin{align}\label{manip_7}
r_{i,2}(\omega) =  \frac{\nu_2^i( a_{i}(t,e)   = x_{e,i}^{\omega}(2) ) }{  \sum_{a } \nu_2^i( a_{i}(t,e) = a )}  .
\end{align}
Finally, an identical argument shows that
\begin{align}\label{manip_8}
r_{i,3}(\omega) =  \frac{\nu_3^i( h_{i}(t,e)   = x_{e,i}^{\omega}(2) ) }{  \sum_{h} \nu_3^i( h_{i}(t,e) = h )},
\end{align}
where the sum in the denominator ranges over all the possible values for $h_i(t,e)$.

For $x \in X_e$, let  $\Omega_{e,x} = \{ \omega: x_e^{\omega} = x \}$ .  For $\sigma$ distributed according to $ \mu$, the left-hand side of~\eqref{goalaki} can be written as
\begin{eqnarray}
\sum_{x \in X_e} \sum_{ \omega \in \Omega_e}  \frac{ \mu( \omega) }{ \mu( \tau)  } \rho_{f_e}( \omega, \tau )    &=  &\sum_{x \in X_e} \sum_{ \omega \in \Omega_{e,x} }   \prod_{i=1}^N     \frac{  1 }{ |\mathcal{S}_i'(e,  \xi(\omega)   ) | }  \prod_{j=1}^3  r_{i,j}(\omega) \nonumber  \\
  & =& 	\sum_{x \in X_e} \prod_{i=1}^N \frac{1}{ |  \mathcal{S}_i'(e, (x_i(1), x_i(2), x_i(3) )  )   |    }    \prod_{j=1}^3 	\sum_{  \substack{ \omega  \in \Omega_{e,x} \\x_{e,i}^{\omega} = x_i(j) } } r_{i,j}(\omega)  \label{explain_1} \\
 & =&\sum_{x \in X_e} \prod_{i=1}^N \frac{1}{ |  \mathcal{S}_i'(e,  \xi_i)   |    }   \prod_{j=1}^3  \nu_j^i( x_{e,i}^{\sigma}(j)  = x_i(j) \mid R_e(t^2,\sigma) = R_e(t^2,\tau) ) \label{explain_2} \\
& = &\mu(\sigma \in f_e \mid R_e(t^2,\sigma) = R_e(t^2,\tau) )  \nonumber,
\end{eqnarray}
where $\xi_i =  (x_i(1), x_i(2), x_i(3) ) $, concluding the proof of~\eqref{goalaki}. Note that~\eqref{explain_2} follows by~\eqref{decomp_3} and~\eqref{manip_4} for $j=1$,~\eqref{decomp_4} and~\eqref{manip_7} for $j=2$, and~\eqref{decomp_5} and~\eqref{manip_8} for $j=3$.  This concludes the proof of \eqref{list_lemma_goal} and hence of Lemma~\ref{termination_list}

\section{Acknowledgements}

We thank Dimitris Achlioptas for many helpful discussions. We are also thankful to anonymous reviewers for helping us correct inaccuracies in previous versions of the paper and, in particular, for the idea to include the random bits used for approximating the edge-marginals in  the algorithm of Section~\ref{list_algo} in  the definition of its state space, which simplified our approach.

\bibliographystyle{plain}
\bibliography{Kolmo}

\appendix

\section{Proof of Lemma~\ref{basic_lemma}}\label{bounds}

We will need the following standard concentration bound (see, e.g.,~\cite[Section 10.1]{mike_book}).

\begin{lemma}\label{simple_bound}
Let $X$ be a random variable determined by $n$ independent trials $T_1, \ldots, T_n$, and such that changing the outcome of any one trial can affect $X$ by at most $c$. Then
\begin{align*}
\Pr(| X - \ex[X] | > \lambda  ) \le 2 \mathrm{e}^{- \frac{\lambda^2}{ 2c^2 n}  } .
\end{align*}
\end{lemma}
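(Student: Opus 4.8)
The plan is to prove Lemma~\ref{simple_bound} via the Doob martingale obtained by revealing the trials one at a time, combined with the Azuma--Hoeffding inequality; this is precisely the symmetric special case of McDiarmid's bounded-differences inequality, so the argument is standard.

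First I would define $Y_k = \ex[X \mid T_1,\dots,T_k]$ for $k=0,1,\dots,n$, so that $Y_0=\ex[X]$ and $Y_n=X$ (the latter because $X$ is a deterministic function of $T_1,\dots,T_n$), and observe that $(Y_k)_{k=0}^n$ is a martingale with respect to the filtration generated by the trials. The crucial step is to bound the increments $D_k := Y_k - Y_{k-1}$ by $c$ in absolute value. To see this, fix the outcomes of $T_1,\dots,T_{k-1}$; since the trials are independent, for any two possible outcomes $t,t'$ of $T_k$,
\[
\ex[X \mid T_{1:k-1},\, T_k=t] - \ex[X \mid T_{1:k-1},\, T_k=t'] = \ex\bigl[\, X(T_{1:k-1},t,T_{k+1:n}) - X(T_{1:k-1},t',T_{k+1:n}) \,\bigr],
\]
where the remaining expectation is over $T_{k+1},\dots,T_n$ alone; by hypothesis each integrand is at most $c$ in absolute value, so the conditional expectations $\ex[X \mid T_{1:k-1}, T_k=\cdot]$ lie in an interval $I$ of length at most $c$. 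Since $D_k$ equals one such conditional expectation minus its average over $T_k$ (which also lies in $I$), and since $\ex[D_k \mid T_{1:k-1}]=0$ forces $0\in I$, we conclude $|D_k|\le c$.

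Given this, I would apply the conditional Hoeffding lemma, $\ex[\mathrm{e}^{sD_k}\mid T_{1:k-1}] \le \mathrm{e}^{s^2 c^2/2}$, and iterate over $k=n,n-1,\dots,1$ to obtain $\ex[\mathrm{e}^{s(X-\ex X)}] \le \mathrm{e}^{n s^2 c^2/2}$ for every $s>0$. A Chernoff bound with the optimal choice $s=\lambda/(nc^2)$ then yields $\Pr[X-\ex X>\lambda] \le \mathrm{e}^{-\lambda^2/(2c^2 n)}$; applying the same reasoning to $-X$ and summing the two tail bounds gives the stated two-sided inequality with the leading factor $2$. The calculations are entirely routine, and the only place warranting care is the verification that the martingale increments are bounded, which is exactly where the independence of the trials (permitting the coupling that keeps $T_{k+1:n}$ fixed while varying $T_k$) is used; alternatively one may simply invoke McDiarmid's inequality directly.
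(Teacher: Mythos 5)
Your proof is correct: the Doob exposure martingale has increments bounded in absolute value by $c$ (using independence to fix $T_{k+1},\dots,T_n$ while varying $T_k$), and the conditional Hoeffding lemma plus a Chernoff bound with $s=\lambda/(nc^2)$ gives exactly the stated tail $\mathrm{e}^{-\lambda^2/(2c^2n)}$ for each side. The paper itself does not prove this lemma but cites it as the standard ``simple concentration bound'' from Molloy and Reed (Chapter 10), whose proof is precisely this Azuma--Hoeffding/McDiarmid argument, so your route matches the intended one; in fact, tracking interval lengths rather than just $|D_k|\le c$ would even yield the sharper constant $2\mathrm{e}^{-2\lambda^2/(c^2n)}$, which is not needed here.
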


\begin{proof}[Proof of Part~(a) of Lemma~\ref{basic_lemma}]
Recall that $t =8(K+1)^2 \delta^{-1} + 2$ and that $\delta = \frac{\epsilon}{4}$. Consider a random state $\sigma $ distributed according to  $\mu$ and a fixed state $\tau \in \Omega$, and  notice that applying Theorem~\ref{hardcore_dist} with the parameter $\epsilon$ instantiated to~$\delta$  and our choice of $t$ imply that  
\begin{align*}
\mu( e \in M_i \mid Q_v^i(t, \sigma) = Q_v^i(t, \tau)   ) & \ge (1- \delta) \frac{1-\delta}{ \chi_e^*(G) }  \ge \frac{1 - \frac{\epsilon}{2} }{\chi_e^*(G) } ,
\end{align*}
for any vertex $v$, any edge $e$ incident on~$v$ and any $i \in [N]$. This implies
\begin{align*}
\ex[ d_{G_{\sigma} } (v)  \mid    Q_v^i(t, \sigma) = Q_v^i(t, \tau)      ] \le \Delta  \left( 1 - \frac{1 - \frac{\epsilon}{2}}{ \chi_e^*(G) }  \right)^N   \le \chi_e^*(G)  \left( 1 - \frac{1 - \frac{\epsilon}{2}}{ \chi_e^*(G) }  \right)^N.
\end{align*}
\blue{
Now, recalling that  $N = \lfloor  \chi_e^*(G)^{ \frac{3}{4} }  \rfloor  \sim \Delta^{3/4} $ and $\epsilon \le \frac{1}{10}$, for sufficiently large $\Delta$ we have
\begin{align}\label{third_cond_bound}
\ex[ d_{G_{\sigma} } (v) \mid Q_v^i(t, \sigma) = Q_v^i(t, \tau)     ]  \le \chi_e^*(G) \left( 1 - (1- \epsilon^3 ) \frac{ (1- \frac{\epsilon}{2} ) N}{ \chi_e^*(G) }   \right)  \le \chi_e^*(G) - \left( 1- \frac{9\epsilon}{17} \right) N.
\end{align}
}
Further, since $c^* = \chi_e^*(G) - (1+\epsilon)^{-1} N$ and $\epsilon \le \frac{1}{10}$,~\eqref{third_cond_bound} yields
\begin{align*}
\ex[ d_{G_{\sigma} } (v) \mid Q_v^i(t, \sigma) = Q_v^i(t, \tau)     ]    \le c^* - \left( 1 - \frac{9 \epsilon}{17 } - (1+\epsilon)^{-1}  \right) N  \le c^* - \frac{\epsilon}{3} N.
\end{align*}
As the choices of the $M_i$ are independent and each affects the degree of $v$ in $G'$ by at most $1$, we can apply Lemma~\ref{simple_bound} with $\lambda = ( \frac{\epsilon }{ 3} - \frac{ \epsilon}{ 4}  ) N  =  \frac{ \epsilon}{12 } N $ to prove part (a). In particular,  since $N = \lfloor  \chi_e^*(G)^{ \frac{3}{4} }  \rfloor  \sim \Delta^{3/4}$ we have 
\begin{align*}
\mu\left( d_{G_{\sigma}(v)} > c^* - \frac{\epsilon}{4} N  \;\Big|\; Q_v^i(t, \sigma) = Q_v^i(t, \tau)     \right)  \le\;  2 \mathrm{e}^{ -\frac{ \lambda^2}{ 2 N }   } \; \le\; \frac{ 1} {  \Delta^{C +  2 \Delta^{\frac{1}{3} }  } }   ,
\end{align*}
for any constant $C $ for sufficiently large $\Delta$.
\end{proof}

\begin{proof}[Proof of Part (b) of Lemma~\ref{basic_lemma}]
The proof of part (b) is similar. Consider again a random state $\sigma$ distributed according to $\mu$ and fix a state $\tau \in \Omega$. Theorem~\ref{hardcore_dist} implies that  for each $i  \in [N ]$, the probability that an edge $e$ with both endpoints in $H$
is in $M_i$, conditional on  $Q_H^i(t, \sigma ) = Q_H^i(t, \tau ) $, is at least $( 1- \delta)  \frac{1-\delta}{ \chi_e^*(G) } \ge \frac{1- \frac{\epsilon}{2} }{ \chi_e^*(G) } $. Moreover, Edmonds' characterization of the matching polytope (which we have already seen in the the proof of Lemma~\ref{sufficient_flaws}) implies that  the number of edges in $G$ with both endpoints in $H$ is at most  $\chi_e^*(G)  \lfloor \frac{ V(H) -1 }{ 2}   \rfloor$.
Similar calculations to those in part~(a) reveal that 
\begin{align*}
\ex[  |E_{\sigma}(H)|  \mid Q_H^i(t,\sigma) = Q_H^i(t,\tau) ] \le \left( \frac{V(H)-1 }{ 2} \right)  (c^* - \frac{\epsilon}{3} N  ),
\end{align*}
where $E_{\sigma}(H)$ is the set of edges of $G_{\sigma}$ induced by $H$. Since the choices of matchings $M_i$ are independent and each affects $|E_{\sigma}(H)|$ by at most $\frac{|V(H)| -1 }{ 2} $, we can again apply Lemma~\ref{simple_bound} to prove part (b).
\end{proof}

\end{document}